\newcommand{\Din}{\mathcal{D}_{\operatorname{in}}}
\newcommand{\Dout}{\mathcal{D}_{\operatorname{out}}}
\newcommand{\Ptest}{P_{\operatorname{test}}}
 \newcommand{\independent}{\perp\!\!\!\!\perp} 
\newcommand{\supp}[1]{\operatorname{supp}(#1)}
\newcommand{\optx}{\xi}
\newcommand{\optz}{\nu}
\newcommand{\balpha}{\boldsymbol{\alpha}}
\newcommand{\bbeta}{\boldsymbol{\beta}}
\def\methodname/{\texttt{Xtrapolation}}
\DeclareMathOperator*{\argmin}{arg\,min}
\newtheorem{theorem}{Theorem}
\newtheorem{lemma}{Lemma}
\newtheorem{corollary}{Corollary}
\newtheorem{example}{Example}
\newtheorem{proposition}{Proposition}
\newtheorem{definition}{Definition}
\newtheorem{remark}{Remark}
\title{Extrapolation-Aware Nonparametric Statistical Inference}
\date{\today}
\author[1]{Niklas Pfister}
\author[2]{Peter B\"uhlmann}
\affil[1]{Department of Mathematical Sciences, University of
  Copenhagen, Denmark}
\affil[2]{Seminar f\"ur Statistik, ETH Z\"urich, Switzerland}
\begin{document}

\maketitle

\begin{abstract}
  We define extrapolation as any type of statistical inference on a
  conditional function (e.g., a conditional expectation or conditional
  quantile) evaluated outside of the support of the conditioning
  variable. This type of extrapolation occurs in many data analysis
  applications and can invalidate the resulting conclusions if not
  taken into account.  While extrapolating is straightforward in
  parametric models, it becomes challenging in nonparametric
  models. In this work, we extend the nonparametric statistical model
  to explicitly allow for extrapolation and introduce a class of
  extrapolation assumptions that can be combined with existing
  inference techniques to draw extrapolation-aware conclusions. The
  proposed class of extrapolation assumptions stipulate that the
  conditional function attains its minimal and maximal directional
  derivative, in each direction, within the observed support.  We
  illustrate how the framework applies to several statistical
  applications including prediction and uncertainty quantification.
  We furthermore propose a consistent estimation procedure that can be
  used to adjust existing nonparametric estimates to account for
  extrapolation by providing lower and upper extrapolation bounds.
  The procedure is empirically evaluated on both simulated and
  real-world data.
\end{abstract}

\section{Introduction}

In the natural sciences the term extrapolation broadly refers to any
process that extends conclusions about observed settings to previously
unseen settings. For example, we extrapolate if we learn the
gravitational constant on earth in a controlled experiment in a lab
and later use it as part of a model (in this case based on the laws of
physics) to predict the energy required to launch a rocket into
space. While extrapolation with a known mechanistic model generally
works well, it becomes much more challenging in more noisy, complex or
chaotic settings where full mechanistic knowledge is unavailable. For
instance, given data from a randomized control trial for a specific
drug based on an adult cohort, we might want to extrapolate the study
results to how infants are affected by the same drug. In this case the
underlying mechanism is not fully understood and at least some
additional knowledge is required to draw reliable conclusions. In this
work we focus on extrapolation when only a statistical model without
mechanistic knowledge is available. To make this more precise, let
$P_0$ be a distribution over $(X, Y)\in\mathcal{X}\times\mathbb{R}$
where $X$ are covariates and $Y$ is a response variable and let
$\supp{X}$ denote the support of $X$. Furthermore, assume we are
interested in a conditional (on $X$) function
$\Phi_{0}:\mathcal{X}\rightarrow\mathbb{R}$, e.g., a conditional
expectation. Extrapolation throughout this work will then refer to any
statistical inference on $\Phi_0(x)$ for
$x\in\mathcal{X}\setminus\supp{X}$. Since it depends on the
assumptions we make on $P_0$ whether the conditional function is even
well-defined outside of $\supp{X}$, extrapolation can only be
meaningful under appropriate assumptions.

Existing works have used various different assumptions to render
extrapolation a well-defined inference task. These can be roughly
categorized into three groups. Firstly, \emph{global parametric
  assumptions}, which assume the conditional function $\Phi_0$ is
parametric on all of $\mathcal{X}$ (e.g., linear or polynomial) in a
way that ensures that the parameters can be (partially) identified
from the observed distribution $P_0$ and hence used to either identify
or bound the behavior of $\Phi_0$ on all of $\mathcal{X}$. All
parametric statistical models fall into this category, as well as many
semiparametric models. Secondly, \emph{functional constraint
  assumptions}, which assume that $\Phi_0$ has specific properties
that transfer from $\supp{X}$ to all of $\mathcal{X}$ (e.g.,
monotonicity, periodicity or smoothness). In particular for
nonparametric regression there are many methods that make such
assumptions implicitly, for example, by assuming $\Phi_0$ extrapolates
constant (e.g., tree ensembles) or linear outside of the support
\citep{li2003local, christiansen2021causal}. Other works make more
explicit assumptions on patterns or periodicity
\citep{wilson2013gaussian,wang2022local} or by assuming additive
nonlinear functions \citep{dong2022first}. Thirdly, \emph{mechanisitic
  assumptions}, which assume an underlying mechanistic model (often
causal) which ensures that $\Phi_0$ is (partially) identified on
$\mathcal{X}$. For example, physical laws can sometimes be used to
constrain the model class which can then improve how well a model
extrapolates \citep[e.g.,][]{pfister2019causalkinetix}. Recent works
have also assumed causal structure such as independent additive noise
to evaluate non-linear functions outside of the support
\citep{shen2023engression, saengkyongam2023identifying}.

While extrapolating under global parametric assumptions is
straightforward, it becomes more challenging for functional constraint
and mechanistic assumptions. Most nonparametric works that explicitly
consider extrapolation have generally focused on prediction of $Y$ at
a point $x$. We suspect one reason for this is that in nonparametric
statistics, the target of inference is conventionally a quantity that
is identified from the data generating distribution $P_0$. However,
when extrapolating, the target of inference -- $\Phi_0(x)$ for
$x\in\mathcal{X}\setminus\supp{X}$ -- is a-priori not a function of
$P_0$. Prediction is therefore a natural task to consider as it can be
viewed model-free (as commonly done in the machine learning community)
ensuring a well-defined extrapolation task without explicitly defining
$\Phi_0$. In this work, we propose a framework, based on Markov
kernels, that extends the nonparametric approach in a way that
explicitly allows us to consider extrapolation of any conditional
function $\Phi_0$ defined via the Markov kernel. We further ensure
partial identifiability by assuming that $\Phi_0$ has directional
derivatives on $\mathcal{X}$ that are bounded by its directional
derivatives on $\supp{X}$. This functional constraint type assumption
allows us to apply Taylor's theorem to construct extrapolation bounds
on $\Phi_0$ on all of $\mathcal{X}$ that are identified by $P_0$. We
show that these bounds can be useful in a range of statistical
inference tasks and can be estimated consistently from data. Our
framework for incorporating extrapolation in nonparametric statistical
inference, comes with the benefit that it can be applied with any
existing nonparametric estimate of $\Phi_0$ on $\supp{X}$ as long as
the extrapolation assumption for $\Phi_0$ holds. Importantly, the
resulting inference is extrapolation-aware in the sense that (in the
large sample limit) it remains unchanged if no extrapolation occurs,
since the lower and upper extrapolation bounds overlap on
$\supp{X}$. A further benefit of the proposed approach is that it does
not require knowledge of $\supp{X}$ which is often unknown and
non-trivial to estimate from data, in particular if $X$ is
multi-dimensional. The extrapolation bounds automatically adapt to
become wider in regions of the $X$-space with few (or no) observations
reasonably close by and where there is a large extrapolation
uncertainty in conditional function $\Phi_0$. We use this property to
derive a score that quantifies extrapolation which may be useful in
applications.

There is a large body of literature on related types of
extrapolation. The fields of domain adaptation and generalization
\citep{pan2009survey, schreiber2023encode}, which aim to find
predictive models that perform well on a test distribution that is
different from the training distribution, has obvious connections that
we discuss in Section~\ref{sec:out-of-support-prediction}. In most of
the literature, however, one assumes that the test and training
distributions have overlapping supports, which excludes extrapolation
as defined here. One exception are distributionally robust
optimization methods that use the Wasserstein distance which allows
for the supports to be disjoint \citep[e.g.,][]{sinha2018certifying}.
A further related area is causal inference, where the term
extrapolation is sometimes used to refer to the task of generalizing
from the observational to a previously unseen interventional
distribution. Again, most works in this area explicitly exclude
extrapolation as defined here by assuming overlapping
supports. Nevertheless, as we discuss in Section~\ref{sec:causality},
our framework can also be viewed from a causal
perspective. Importantly, as it extends the conventional nonparametric
model, it can also be applied to causal inference tasks with the same
causal assumptions required as in the non-extrapolation setting.

The remainder of the paper is structured as follows. In
Section~\ref{sec:taylor_extrapolation}, we use Taylor's theorem to
derive extrapolation bounds for specific types of differentiable
functions. In Section~\ref{sec:statistical_extrapolation}, we
introduce a nonparametric statistical framework that explicitly
accounts for extrapolation and discuss how the extrapolation bounds
can be applied in three statistical applications: out-of-distribution
prediction, extrapolation-aware uncertainty quantification and
quantifying extrapolation. We furthermore discuss a causal perspective
on extrapolation in Section~\ref{sec:causality}. In
Section~\ref{sec:estimation}, we propose an approach for estimating
the extrapolation bounds and prove that it is consistent. Finally, in
Section~\ref{sec:numerical_experiments}, we empirically investigate
how well the extrapolation bounds can be estimated on simulated data
and illustrate their use for providing extrapolation-aware prediction
intervals on two real-world data sets.

\paragraph{Notation} Let $\mathcal{X}\subseteq\mathbb{R}^d$ be a fixed
domain, denote by $C^{q}(\mathcal{X})$ the set of $q$-times
continuously differentiable functions and define $\mathcal{B}\coloneqq \{x\in\mathbb{R}^d\mid \|x\|_2=1\}$.  For all
$f\in C^{q}(\mathcal{X})$ and all $v\in\mathbb{R}^d$ define the
directional derivative $D_vf:\mathcal{X}\rightarrow\mathbb{R}$ for all
$x\in\mathcal{X}$ by
\begin{equation*}
    D_vf(x)\coloneqq\lim_{h\rightarrow 0}\frac{f(x+h\cdot v)-f(x)}{h}.
\end{equation*}
Moreover, for all $\ell\in\{1,\ldots,q\}$, define the $\ell$-th
directional derivative recursively by
$D_v^{\ell}f\coloneqq D_v(D_v^{\ell-1}f)$, where $D_v^0f=f$. For all
multi-indices $\balpha\in\mathbb{N}^d$ with
$|\balpha|\coloneqq \sum_{j=1}^d\balpha^j\leq q$ and all functions
$f\in C^q(\mathcal{X})$ define
$\partial^{\balpha}f\coloneqq D_{e_1}^{\balpha^1}\cdots
D_{e_d}^{\balpha^d}f$ and for all $j\in\{1,\ldots,d\}$ define
$\partial_jf=D_{e_j}^1f$, where $e_j$ denotes the $j$-th canonical
unit vector. Additionally, define the function
$\overline{v}:\mathcal{X}\times\mathcal{X}\rightarrow\mathbb{R}^d$
which maps all points $x_0,x_1\in\mathcal{X}$ to the unit vector
pointing in the same direction as the vector from $x_0$ to $x_1$, that
is, for all $x_0,x_1\in\mathcal{X}$ define
$\overline{v}(x_0, x_1)\coloneqq
\frac{x_1-x_0}{\|x_1-x_0\|_2}\mathds{1}_{\{x_0\neq x_1\}}$.  Lastly,
we denote by $\mathfrak{B}(\mathbb{R})$ the Borel-sigma algebra on
$\mathbb{R}$.

\section{Extrapolation via Taylor's theorem}\label{sec:taylor_extrapolation}

We start by considering extrapolation from a fully deterministic
perspective. Assume we are given a function $f\in C^q(\mathcal{X})$
but are only able to evaluate it on a closed domain
$\mathcal{D}\subseteq\mathcal{X}$. Since, $f$ is $q$-times
continuously differentiable, knowing the function on $\mathcal{D}$ can
help constrain how the function can behave on
$\mathcal{X}\setminus\mathcal{D}$. Formally, using Taylor's theorem
\citep{taylor1715}, we get for all $x_0\in\mathcal{D}$ and all
$x_1\in\mathcal{X}$ that there exists a $c\in [0, 1]$ such that for
$\xi\coloneqq cx_1+(1-c)x_0$ it holds that
\begin{equation}
    \label{eq:taylor_equation}
    f(x_1)= \sum_{\ell=0}^{q-1}D_{\overline{v}(x_0, x_1)}^{\ell}f(x_0)\frac{\|x_1-x_0\|_2^{\ell}}{\ell!} + D_{\overline{v}(x_0, x_1)}^{q}f(\xi) \frac{\|x_1-x_0\|_2^{q}}{q!}.
\end{equation}
The only quantity in this equation which can -- depending on $c$ --
rely on evaluating the function outside of $\mathcal{D}$ is the value
of $D_{\overline{v}(x_0, x_1)}^{q}f(\xi)$. However, if we are willing
to assume that the $q$-th directional derivative of $f$ is bounded on
all of $\mathcal{X}$, we can use \eqref{eq:taylor_equation} to
construct upper and lower bounds on $f(x_1)$.  Our approach is based
on assuming that $f$ -- the function we want to extrapolate -- behaves
at most as 'extreme' on $\mathcal{X}$ as on $\mathcal{D}$. To be at
most as 'extreme' in our setting, means that the directional
derivatives of $f$ on $\mathcal{X}$ are bounded by its directional
derivatives on $\mathcal{D}$, for all possible directions.
\begin{definition}[$q$-th derivative dominated]
  \label{def:qth_order_dominated}
  Let $f\in C^q(\mathcal{X})$ be a function and
  $\mathcal{D}\subseteq\mathcal{X}$ be a non-empty closed set. A
  function $g\in C^{q}(\mathcal{X})$ is called \emph{$q$-th derivative
    dominated by $f$ over $\mathcal{D}$}, denoted by
  $g \triangleleft_{\mathcal{D}}^q f$, if it holds for all
  $v\in\mathcal{B}$ that
  \begin{equation*}
    \inf_{x\in\mathcal{X}}D_v^qg(x)\geq \inf_{x\in\mathcal{D}}D_v^qf(x)
    \quad\text{and}\quad
    \sup_{x\in\mathcal{X}}D_v^qg(x)\leq \sup_{x\in\mathcal{D}}D_v^qf(x).
  \end{equation*}
\end{definition}
In words a function $g$ is $q$-th derivative dominated by $f$ on
$\mathcal{D}$ if all its directional derivatives of order $q$ are
bounded on all of $\mathcal{X}$ by the corresponding directional
derivatives of $f$ on $\mathcal{D}$. Based on this definition we now
consider functions $f\in C^q(\mathcal{X})$ that satisfy
\begin{equation}
  \label{eq:amae}
  f\triangleleft_{\mathcal{D}}^q f.
\end{equation}
This formalizes the previously mentioned intuitive notion of behaving
at most as 'extreme' on $\mathcal{X}$ as on $\mathcal{D}$. Whenever a
function satisfies \eqref{eq:amae}, we can use
\eqref{eq:taylor_equation} to provide bounds on its behavior on
$\mathcal{X}$ that only depend on the values it attains on
$\mathcal{D}$.

\begin{definition}[Extrapolation bounds]
  \label{def:extrapolation_bounds}
  For all $f\in C^q(\mathcal{X})$ and all non-empty closed
  $\mathcal{D}\subseteq\mathcal{X}$ define the \emph{extrapolation
    bounds}
  $B_{q, f, \mathcal{D}}^{\operatorname{lo}}, B_{q, f,
    \mathcal{D}}^{\operatorname{up}}:\mathcal{X}\rightarrow[-\infty,\infty]$
  given for all $x\in\mathcal{X}$ by
  \begin{equation*}
    B_{q, f, \mathcal{D}}^{\operatorname{lo}}(x)\coloneqq \sup_{x_0\in\mathcal{D}}\left(\sum_{\ell=0}^{q-1}D_{\overline{v}(x_0, x)}^{\ell}f(x_0)\frac{\|x-x_0\|_2^{\ell}}{\ell!} + \inf_{z\in\mathcal{D}}D^q_{\overline{v}(x_0,x)}f(z) \frac{\|x-x_0\|_2^{q}}{q!}\right)
  \end{equation*}
  and
  \begin{equation*}
    B_{q, f,\mathcal{D}}^{\operatorname{up}}(x)\coloneqq \inf_{x_0\in\mathcal{D}}\left(\sum_{\ell=0}^{q-1}D_{\overline{v}(x_0, x)}^{\ell}f(x_0)\frac{\|x-x_0\|_2^{\ell}}{\ell!} + \sup_{z\in\mathcal{D}}D^q_{\overline{v}(x_0,x)}f(z) \frac{\|x-x_0\|_2^{q}}{q!}\right).
  \end{equation*}
\end{definition}
The extrapolation bounds are constructed using
\eqref{eq:taylor_equation} and then replacing the highest order
derivative with the worst possible directional derivative $f$ attains
in $\mathcal{D}$. Since the resulting bound is valid for any anchor
point $x_0\in\mathcal{D}$, we select the one that results in the
tightest bound. From this construction it can be shown that for
all $x\in\mathcal{D}$ the bounds satisfy
\begin{equation*}
  B_{q, f, \mathcal{D}}^{\operatorname{lo}}(x)=f(x)=B_{q, f, \mathcal{D}}^{\operatorname{up}}(x),
\end{equation*}
as long as the $q$-th directional derivatives of $f$ in all directions
are bounded on $\mathcal{D}$. The bounds are visualized for three one
dimensional functions in
Figure~\ref{fig:extrapolation_bounds_visualization}. Different orders
$q$ capture different aspects of the function, for example, for $q=1$
monotone behavior is captured in
Figure~\ref{fig:extrapolation_bounds_visualization}
(middle). Moreover, as seen in
Figure~\ref{fig:extrapolation_bounds_visualization} (left), the
extrapolation bounds only bound the true function if it indeed
satisfies $f\triangleleft_{\mathcal{D}}^qf$.
\begin{figure}[t]
  \centering
  \includegraphics{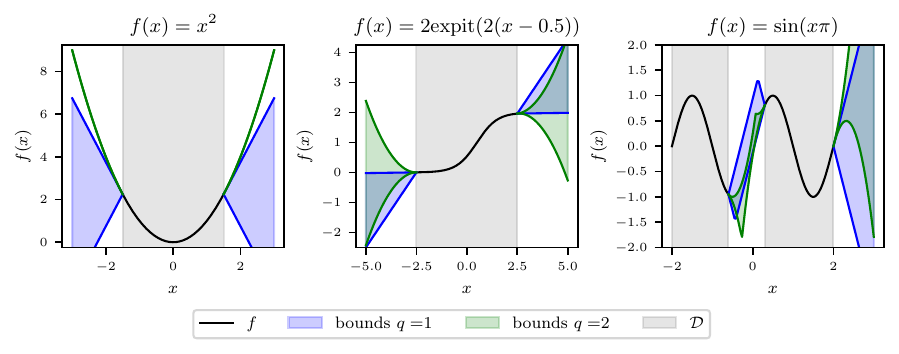}
  \caption{Visualization of the extrapolation bounds given in
    Definition~\ref{def:extrapolation_bounds} for three different
    functions $f$ and domains $\mathcal{D}$. The shaded gray area
    corresponds to $\mathcal{D}$ on which the function we would like
    to extrapolate is given in black. Blue corresponds to the first
    order upper and lower extrapolation bounds
    $B_{1,f,\mathcal{D}}^{\operatorname{lo}}$ and
    $B_{1,f,\mathcal{D}}^{\operatorname{up}}$ and green to the second
    order extrapolation bounds
    $B_{2,f,\mathcal{D}}^{\operatorname{lo}}$ and
    $B_{2,f,\mathcal{D}}^{\operatorname{up}}$. For this visualization,
    we approximate the bounds by sampling points uniformly in
    $\mathcal{D}$, which is consistent by
    Theorem~\ref{thm:consistency} below.
  }
  \label{fig:extrapolation_bounds_visualization}
\end{figure}
The following theorem provides a formal connection between the
extrapolation bounds for a function $f\in C^q(\mathcal{X})$ and
functions that are $q$-th derivative dominated by $f$.
\begin{theorem}[Properties of extrapolation bounds]
  \label{thm:taylor_extrapolation_bounds}
  Let $f\in C^q(\mathcal{X})$ be a function and
  $\mathcal{D}\subseteq\mathcal{X}$ be a non-empty closed set. Then,
  the following two statements hold:
  \begin{itemize}
  \item[(i)] For all $g\in C^{q}(\mathcal{X})$ satisfying for all
    $x\in\mathcal{D}$ that $g(x)=f(x)$ and
    $g\triangleleft_{\mathcal{D}}^q f$ it holds that
    \begin{equation*}
      \forall x\in\mathcal{X}:\quad B_{q, f, \mathcal{D}}^{\operatorname{lo}}(x)\leq g(x)\leq
      B_{q, f, \mathcal{D}}^{\operatorname{up}}(x).
    \end{equation*}
  \item[(ii)] If $\mathcal{X}$ is compact, there exists, for all
    $\star\in\{\operatorname{lo}, \operatorname{up}\}$, a sequence
    $(g_n^{\star})_{n\in\mathbb{N}}\subseteq C^q(\mathcal{X})$
    satisfying
    \begin{equation*}
      \lim_{n\rightarrow\infty}\sup_{x\in\mathcal{X}}\left|B_{q, f,
          \mathcal{D}}^{\star}(x)-g_n^{\star}(x)\right|=0
      \quad\text{and}\quad
      \forall n\in\mathbb{N}:\, g_n^{\star}\triangleleft_{\mathcal{D}}^q f.
    \end{equation*}
  \end{itemize}
\end{theorem}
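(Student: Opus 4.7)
For part (i), I apply Taylor's theorem (equation~\eqref{eq:taylor_equation}) directly to $g$ with an arbitrary anchor $x_0 \in \mathcal{D}$ and target $x \in \mathcal{X}$. Writing $v = \overline{v}(x_0, x)$ and $h = \|x - x_0\|_2$, the theorem produces a $\xi$ on the segment $[x_0, x]$ such that
\[
g(x) = \sum_{\ell=0}^{q-1} D^\ell_v g(x_0)\,\frac{h^\ell}{\ell!} + D^q_v g(\xi)\,\frac{h^q}{q!}.
\]
Since $g,f \in C^q(\mathcal{X})$ agree on $\mathcal{D}$, the lower-order directional derivatives satisfy $D^\ell_v g(x_0) = D^\ell_v f(x_0)$ for every $\ell < q$ (on the interior of $\mathcal{D}$ in $\mathcal{X}$ directly, and on boundary anchors by continuity of the derivatives as limits along sequences inside $\mathcal{D}$). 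The remainder $D^q_v g(\xi)$ is trapped between $\inf_{z\in\mathcal{D}} D^q_v f(z)$ and $\sup_{z\in\mathcal{D}} D^q_v f(z)$ by the hypothesis $g \triangleleft_\mathcal{D}^q f$. Taking a supremum over $x_0 \in \mathcal{D}$ delivers the lower extrapolation bound on $g(x)$, and taking an infimum delivers the upper bound.

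For part (ii), my plan is an approximation-then-smoothing argument. The key structural observation is that the class $\{g \in C^q(\mathcal{X}) : g \triangleleft_\mathcal{D}^q f\}$ is convex, because $D^q_v$ is linear in $g$, and is closed under convolution with smooth non-negative kernels, since $D^q_v(g \ast \rho) = (D^q_v g) \ast \rho$ and averaging can only shrink the range of a function. I would first establish continuity of $B^{\operatorname{up}}$ and $B^{\operatorname{lo}}$ using compactness of $\mathcal{X}$ and uniform continuity in $(x_0, v)$ of the Taylor-like expressions inside the infimum. Next, the infimum defining $B^{\operatorname{up}}$ is reduced to a minimum over a finite but increasingly dense grid of anchors in $\mathcal{D}$, approximating $B^{\operatorname{up}}$ uniformly to within $1/(2n)$ by a finite minimum $\tilde B_n$. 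The finite minimum is then approximated by a $C^q$ convex combination of its building blocks via a smooth partition of unity on $\mathcal{X}$; by convexity of the dominated class, this $C^q$ approximant remains $q$-th derivative dominated by $f$ over $\mathcal{D}$. A final mollification (after a $C^q$ extension of the approximant to a neighborhood of $\mathcal{X}$) yields $g_n^{\operatorname{up}} \in C^q(\mathcal{X})$ within $1/n$ of $B^{\operatorname{up}}$ while preserving the domination. The construction for $B^{\operatorname{lo}}$ is symmetric.

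The main technical obstacle lies in part (ii): the map $v \mapsto \sup_{z\in\mathcal{D}} D^q_v f(z)$ is in general only continuous on $\mathcal{B}$, so no single $C^q$ function can saturate the upper envelope in every direction simultaneously. Settling for uniform approximation rather than exact attainment is therefore unavoidable, and the payoff is that convex combinations and mollification can only shrink the range of $D^q_v$, so the domination property is automatic once uniform proximity to $B^{\operatorname{up}}$ is controlled. Making all of this uniform in $v \in \mathcal{B}$, using compactness of $\mathcal{B}$ and $\mathcal{X}$, is where the bulk of the technical work resides.
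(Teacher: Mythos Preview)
Your part (i) matches the paper's argument: Taylor-expand $g$ at an anchor $x_0\in\mathcal{D}$, replace the lower-order directional derivatives of $g$ by those of $f$ (the paper does this in one line ``where we used that $g=f$ on $\mathcal{D}$'' without your interior/boundary discussion), and bound the remainder via $g\triangleleft_{\mathcal{D}}^q f$.

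For part (ii) there is a genuine gap. Your two structural observations---that the dominated class is convex and closed under mollification---are correct, but the partition-of-unity step does not fall under either. A smooth partition of unity has \emph{$x$-dependent} weights $\rho_i(x)$, so $\sum_i \rho_i(x)\,h_i(x)$ is only a pointwise convex combination, not a convex combination in the function space. Applying $D^q_v$ via Leibniz produces cross terms
\[
\sum_i\sum_{k=1}^{q}\binom{q}{k}\,D^k_v\rho_i(x)\,D^{q-k}_v h_i(x),
\]
which scale like $\delta^{-k}$ if the $\rho_i$ vary on scale $\delta$ and are not bounded by $\sup_{\mathcal{D}}D^q_v f$ or $\inf_{\mathcal{D}}D^q_v f$. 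Mollifying afterwards does not repair this: mollification averages $D^q_v$ of the input, and that input already violates the bounds. A second, smaller issue is that your building blocks still contain $\sup_{z\in\mathcal{D}} D^q_{\overline{v}(x_0,x)} f(z)$, which is generally only continuous in $x$, so they are not even in $C^q(\mathcal{X})$.

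The paper's construction avoids both problems. It first replaces each building block by a genuine degree-$q$ polynomial $B_z(x)$, obtained by freezing not only the anchor $\optx(z)$ but also the maximizer $\optz(z)$ in the sup; then $\partial^{\balpha}B_z\equiv\partial^{\balpha}f(\optz(z))$ is constant for $|\balpha|=q$. Crucially, it combines these polynomials with \emph{piecewise constant} weights (normalized indicators over a finite cover), so that on the open set where the indicators are locally constant the weights have zero derivative and no Leibniz cross terms appear. Only then does it mollify: the mollified $q$-th partial is literally $\sum_\ell \partial^{\balpha}f(\optz(x_\ell))\,\bar w_\ell(x)$ with nonnegative $\bar w_\ell$ summing to one, from which domination is immediate. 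Your convexity/averaging intuition is exactly the engine the paper uses; what you are missing is a weighting scheme whose derivatives vanish (a.e.) before you differentiate.
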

A proof is given in Supplementary
material~\ref{proof:taylor_extrapolation_bounds}.  Part (i) in
particular implies that for all $f\in C^q(\mathcal{X})$ satisfying
$f\triangleleft_{\mathcal{D}}^qf$ the extrapolation bounds bound the
values of $f$ on all of $\mathcal{X}$, i.e.,
\begin{equation*}
  \forall x\in\mathcal{X}:\quad B_{q, f, \mathcal{D}}^{\operatorname{lo}}(x)\leq f(x)\leq
  B_{q, f, \mathcal{D}}^{\operatorname{up}}(x).
\end{equation*}
Part (ii) provides a partial converse of this statement. Specifically,
it states that the extrapolation bounds can be uniformly approximated
by a sequence of functions that are $q$-th derivative dominated by
$f$. As a side result, part~(ii) further implies that the
extrapolation bounds are uniformly continuous functions.

\section{Extrapolation in statistical applications}\label{sec:statistical_extrapolation}

In the previous section we used Taylor's theorem to construct
extrapolation bounds on functions under the assumption that the
directional derivatives of order $q$ are at most as extreme on
$\mathcal{X}$ as they are inside the observed domain $\mathcal{D}$. We
now use this as an extrapolation assumption in statistical
applications where we link the data-generating distributions to target
conditional functions. To ease notation we assume $q$ is fixed and
drop it from the notation.  We begin by formalizing extrapolation
within a nonparametric statistical model.

Let $P_0$ be a (data-generating) distribution over
$(X, Y)\in\mathcal{X}\times\mathbb{R}$ and define
$\Din\coloneqq\supp{X}$ and $\Dout\coloneqq\mathcal{X}\setminus\Din$.
Furthermore, let
$Q_0:\mathcal{X}\times\mathfrak{B}(\mathbb{R})\rightarrow[0,1]$ be a
Markov kernel that for all distributions over $X$ generates a
corresponding distribution over $Y$ and satisfies for all
$B\in\mathfrak{B}(\mathbb{R})$ that
\begin{equation}
\label{eq:probability_kernel_prop}
    P_0^Y(B) =\int_{\mathcal{X}}Q_0(x, B)P_0^X(dx),
\end{equation}
where $P_0^X$ and $P_0^Y$ denote the marginal distributions of $X$ and
$Y$ under $P_0$, respectively. The crucial problem is that the
conditional distribution of $Y\vert X=x$ is only defined on
$\Din$. However, the Markov kernel $Q_0$ provides a well-defined
notion of conditioning on all of $\mathcal{X}$, which will be
important for extrapolation to make sense. As we discuss in
Section~\ref{sec:causality} it is also possible to define
extrapolation using a causal model, as the resulting interventional
conditionals, unlike the observational conditionals, immediately
extend to the entire domain $\mathcal{X}$. We nevertheless choose to
avoid the overhead of a causal model and instead use the Markov kernel
$Q_0$ which is a well-defined albeit not necessarily fully identified
function. Our goal is now to perform inference on a conditional
function $\Phi_0:\mathcal{X}\rightarrow\mathbb{R}$ describing some
aspect of the Markov kernel $Q_0$. Here, we focus on the conditional
expectation and quantile functions, which cover a range of interesting
applications, but the theory extends to any conditional function.

\begin{definition}[Conditional expectations and quantiles]
  We call the function $\Psi_0:\mathcal{X}\rightarrow\mathbb{R}$ defined for all $x\in\mathcal{X}$ by
  \begin{equation}
    \label{eq:conditional_expectation_def}
    \Psi_0(x)\coloneqq\int_{\mathbb{R}} y Q_0(x, dy)
  \end{equation}
  the \emph{conditional
    expectation function}. Similarly, for all $\alpha\in(0,1)$ the function
  $\mathcal{T}_0^{\alpha}:\mathcal{X}\rightarrow\mathbb{R}$ defined for all $x\in\mathcal{X}$ by
  \begin{equation}
    \label{eq:conditional_quantile_def}
    \mathcal{T}_0^{\alpha}(x)\coloneqq\inf\{t\in\mathbb{R}\mid \textstyle\int_{\mathbb{R}} \mathds{1}(y\leq t) Q_0(x, dy)\geq \alpha\}.
  \end{equation}
  is called
  \emph{conditional $\alpha$-quantile function}.
\end{definition}
Both the conditional expectation and the conditional quantile are
defined on all elements of $\mathcal{X}$. However, a priori, they
might not be fully identified by $P_0$. To see this observe that,
using \eqref{eq:probability_kernel_prop}, it holds $P_0$-a.s. that
$\Psi_0(X)=\mathbb{E}[Y\vert X]$ and
$\mathcal{T}_0^{\alpha}(X)=\inf\{t\in\mathbb{R}\mid \mathbb{P}(Y\leq
t\vert X)\geq \alpha\}$. Hence -- without additional assumptions --
they are not identified at values $x\in\Dout$.  For a conditional
function $\Phi_0$ of interest, we therefore distinguish two types of
statistical inference: (i) \emph{interpolation}, which is inference on
$\Phi_0(x)$ for $x\in\Din$ and (ii) \emph{extrapolation}, which is
inference on $\Phi_0(x)$ for $x\in\Dout$. The distinguishing feature
between interpolation and extrapolation is that interpolation is
feasible with conventional nonparametric assumptions (e.g., smoothness
or shape constraints), while extrapolation is impossible without
additional extrapolation assumptions. To render extrapolation
feasible, we first specify extrapolation assumptions that are
reasonable in practice. The following example illustrates two types of
extrapolation assumptions -- parametric and periodic -- that are
widely used.

\begin{example}[Parametric and periodic extrapolation]
  \label{ex:simple_extra}
  The most common approach to making extrapolation meaningful is to
  assume that the conditional function of interest is parametric. For
  example, one could assume a linear structural equation model (SEM)
  given by
  \begin{equation*}
    Y = \theta^{\top} X + \varepsilon
    \quad\text{and}\quad \varepsilon\independent X,
  \end{equation*}
  where $\theta\in\mathbb{R}^d$ and $\varepsilon\sim\mu$ for some
  distribution $\mu$. This model implies a Markov kernel $Q_0$ that
  satisfies for all $x\in\mathcal{X}$ and all
  $B\in\mathcal{B}(\mathbb{R})$ that
  $Q_0(x, B)=\mu(B-\theta^{\top}x)$. The conditional expectation is
  then simply $\Psi_0:x\mapsto\theta^{\top}x$ and hence fully
  identified by the parameter $\theta$ which is identified as long as
  $P_0$ is a non-degenerate distribution.

  A further approach is to assume a periodic extrapolation model. For
  example, again using a SEM model, one could assume a model of the
  form
  \begin{equation*}
    Y = f(X) + \varepsilon
    \quad\text{and}\quad \varepsilon \independent X,
  \end{equation*}
  where $f$ is a measurable function satisfying for all $x\in[0,1)$
  and all $k\in\mathbb{Z}$ that $f(x)=f(x+k)$ and $\varepsilon\sim\mu$
  for some distribution $\mu$. The Markov kernel $Q_0$ implied by this
  model is given for all $x\in\mathcal{X}$ and
  $B\in\mathcal{B}(\mathbb{R})$ by $Q_0(x, B)=\mu(B-f(x))$. The
  conditional $\alpha$-quantile $\mathcal{T}^{\alpha}_0$ is then given
  for all $x\in\mathcal{X}$ by
  $\mathcal{T}^{\alpha}_0(x)=f(x) + \inf\{t\in\mathbb{R}\mid
  \mathbb{P}(\varepsilon\leq t)\geq \alpha\}$ which is identifiable as
  long as $f$ is identifiable from $P_0$. Since $f$ is periodic,
  identifiablility of $f$ again does not require $X$ to have full
  support on $\mathcal{X}$.
\end{example}
Both in the parametric and the periodic example the extrapolation
assumptions significantly constrain the data-generating distribution
$P_0$, so that conventional nonparametric estimates no longer
apply. We instead propose to place a smoothness based extrapolation
assumption directly on the conditional function of interest. This can
then be combined with existing nonparametic approaches. More
specifically, we assume that the conditional function of interest
$\Phi_{0}$ is $q$-th derivative dominated by itself.
\begin{definition}[$q$-th derivative extrapolating]
\label{def:extrapolating_model}
A conditional function $\Phi_0:\mathcal{X}\rightarrow\mathbb{R}$ is
called \emph{$q$-th derivative extrapolating} if
$\Phi_0\in C^q(\mathcal{X})$ and
  \begin{equation*} 
    \Phi_0\triangleleft_{\Din}^q\Phi_0.
  \end{equation*}
\end{definition}
This condition depends both on the conditional distribution of $Y$
given $X$ via the conditional function and on the marginal
distribution of $X$ via the support $\Din$. Importantly, whenever
$\Din=\mathcal{X}$, it only requires that the conditional function is
$q$-times continuously differentiable -- a common nonparametric
assumption. Hence, whenever our method performs an interpolation task,
the usual nonparametric smoothness assumption suffices. We note that
our methodology does not require knowing whether a task is inter- or
extrapolating and the method works in an automatic fashion. Moreover,
if we are extrapolating, the condition constrains the behavior of the
conditional function on $\Dout$ sufficiently much to provide
meaningful inference (albeit with added extrapolation uncertainty).
One can also make a high-level connection to the (empirical) Bayes
perspective. To see this, consider a prior over the conditional
functions with support within the class of functions with bounded
$q$-th derivatives everywhere with upper and lower bounds
$C_{\operatorname{upp}}$ and $C_{\operatorname{low}}$,
respectively. The posterior given data from the model would then have
the same support. If the bounds are a priori unknown, one could use
empirical Bayes and estimate the prior from data. In our context, this
means that we would estimate the bounds of the $q$-th derivatives from
data $C_{\operatorname{upp}}$ and $C_{\operatorname{low}}$ and any
posterior update would then preserve such upper and lower bounds, as
required in Definition \ref{def:extrapolating_model}.  Essentially,
assuming $\Phi_0$ is $q$-th derivative extrapolating, therefore
corresponds to assuming that nothing unexpected happens with the
$q$-th derivatives outside of the observed data range.

Under the assumption that $\Phi_{0}$ is $q$-th derivative
extrapolating, it is possible to bound $\Phi_{0}$ on $\Dout$ by only
knowing its values on $\Din$. More specifically, assuming that
$\Phi_0$ is $q$-th derivative extrapolating
Theorem~\ref{thm:taylor_extrapolation_bounds} ensures that $\Phi_0$
lies in the set of \emph{feasible conditional functions} defined by
\begin{equation}
    \label{eq:feasible_functions}
    \mathcal{F}_{\Phi_0}\coloneqq\left\{\phi\in C^0(\mathcal{X})\,\big\vert\,
    \forall x\in\mathcal{X}:\, B_{\Phi_{0},\Din}^{\operatorname{lo}}(x)\leq \phi(x)\leq
  B_{\Phi_{0},\Din}^{\operatorname{up}}(x)\right\},
\end{equation}
which is identifiable from $P_0$. This insight has useful implications
for a variety of statistical inference tasks. In the following
sections, we discuss three specific applications: (i) Out-of-support
prediction, (ii) extrapolation-aware uncertainty quantification, and
(iii) quantifying extrapolation. The main idea is to perform the
inference on the extrapolation bounds instead of on the conditional
function directly, leading to extrapolation-aware inference.

\subsection{Out-of-support prediction}\label{sec:out-of-support-prediction}

Consider a setting in which $n$ i.i.d.\ observations
$(X_1,Y_1),\ldots,(X_n,Y_n)$ from $P_0$ are observed and we want to
learn a prediction function $\hat{f}$ to predict the response $Y$ for
$X=x$ for some $x\in\mathcal{X}$. A standard nonparametric approach is
to estimate the conditional expectation $\Psi_{0}$ and use it as the
prediction function $\hat{f}$. The idea behind this is that under mild
regularity conditions $\Phi_0$ minimizes the mean squared prediction
error under $P_0$ and hence performs optimal (in the mean squared
sense). However, since $\Phi_0$ is only identified on $\supp{X}$
(without extrapolation assumptions), this guarantee is no longer valid
when considering predictions at points $x\in\Dout$. To avoid such
problems, we need to make explicit extrapolation assumptions.

Using the terminology of the previous section, assume that
$\Psi_{0}$ is $q$-th derivative extrapolating. Then, by
Theorem~\ref{thm:taylor_extrapolation_bounds}, it holds for all
$x\in\mathcal{X}$ that
\begin{equation}
  \label{eq:psi_bound_oos_pred}
  B_{\Psi_{0},\Din}^{\operatorname{lo}}(x)\leq \Psi_{0}(x)\leq
  B_{\Psi_{0},\Din}^{\operatorname{up}}(x).
\end{equation}
Since the extrapolation bounds are identified from $P_0$, this allows
us to bound the conditional expectation on all of $\mathcal{X}$. We
can use this for prediction by constructing a point estimate based on
the extrapolation bounds. As the conditional expectation is not
necessarily identified everywhere on $\mathcal{X}$, the performance of
the point estimate may depend on the underlying Markov kernel
$Q_0$. We therefore attempt to find a prediction function that is
worst-case optimal in the sense that it minimizes the worst-case
mean-squared prediction error among all Markov kernels $Q$ that are
equal to $Q_0$ on $\Din$ and for which
$x\mapsto \int_{\mathbb{R}}yQ(x, dy)$ is $q$-th derivative
extrapolating. Formally, we define the set of feasible Markov kernels
by
\begin{align}
  \label{eq:potential_kernels}
  \mathcal{Q}_0\coloneqq\big\{Q:\mathcal{X}\times\mathfrak{B}(\mathbb{R})\rightarrow
  [0,1] \text{ Markov kernel}\,\big\vert\, &\forall x\in\Din:\, Q(x,
                                             \cdot)= Q_0(x,\cdot) \\
                                           &\text{ and
                                             $x\mapsto\textstyle\int_{\mathbb{R}}yQ(x,
                                             dy)$ is $q$-th
                                             deriv. extr.}\big\}.\nonumber
\end{align}
For all $Q\in\mathcal{Q}_0$ and all $x\in\mathcal{X}$ we denote by
$Y_x$ a random variable with distribution $Q(x,\cdot)$. Using this
notation, our goal is to find a prediction function $\hat{f}$, such
that for all $x\in\mathcal{X}$, it minimizes the worst-case mean
squared prediction error
\begin{equation*} \sup_{Q\in\mathcal{Q}_0}\mathbb{E}_{Q(x,
\cdot)}[(Y_x-\hat{f}(x))^2].
\end{equation*} This guarantees that the prediction function $\hat{f}$
also performs well in cases where the true underlying Markov kernel
$Q_0$ is adversarial.

\begin{proposition}[Worst-case optimal prediction under extrapolation]
  \label{thm:worst_case_prediction}
  Let $\mathcal{X}$ be compact, assume the Markov kernel $Q_0$ is such
  that the conditional expectation $\Psi_{0}$ is $q$-th derivative
  extrapolating. Then, $f^*:\mathcal{X}\rightarrow\mathbb{R}$ defined
  for all $x\in\mathcal{X}$ by
  \begin{equation}
    \label{eq:point_estimate_oos}
    f^*(x)\coloneqq\frac{1}{2}\left(B^{\operatorname{lo}}_{\Psi_{0},\Din}(x)+B^{\operatorname{up}}_{\Psi_{0}, \Din}(x)\right)
  \end{equation}
  satisfies
  \begin{equation}
    \label{eq:worst_case_optimal_result}
    \inf_{f\in C^0(\mathcal{X})}\sup_{Q\in\mathcal{Q}_0}\mathbb{E}_{Q(x,
      \cdot)}[(Y_x-f(x))^2]=\sup_{Q\in\mathcal{Q}_0}\mathbb{E}_{Q(x,\cdot)}[(Y_x-f^*(x))^2].
  \end{equation}
\end{proposition}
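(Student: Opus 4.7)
The plan is, for a fixed $x \in \mathcal{X}$, to apply a bias-variance decomposition to isolate the $f$-dependent part of the loss, identify the feasible set of conditional means at $x$ with the interval $[B^{\operatorname{lo}}_{\Psi_0, \Din}(x), B^{\operatorname{up}}_{\Psi_0, \Din}(x)]$ via Theorem~\ref{thm:taylor_extrapolation_bounds}, and then solve the resulting one-dimensional minimax, whose minimizer is the midpoint $f^*(x)$.

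First, for any $Q \in \mathcal{Q}_0$, setting $\Psi_Q(x) \coloneqq \int y\, Q(x, dy)$, I decompose
\[
\mathbb{E}_{Q(x,\cdot)}[(Y_x - f(x))^2] = (\Psi_Q(x) - f(x))^2 + \operatorname{Var}_{Q(x,\cdot)}(Y_x).
\]
The variance term does not depend on $f$, so the problem reduces to the bias minimax $\inf_c \sup_{\mu \in \mathcal{M}(x)} (\mu - c)^2$, where $\mathcal{M}(x) \coloneqq \{\Psi_Q(x) : Q \in \mathcal{Q}_0\}$. To characterize $\mathcal{M}(x)$, I invoke Theorem~\ref{thm:taylor_extrapolation_bounds}(i) with $f = \Psi_0$ and $g = \Psi_Q$: since $\Psi_Q = \Psi_0$ on $\Din$ (as $Q(x,\cdot) = Q_0(x,\cdot)$ there) and both are in $C^q(\mathcal{X})$, their directional derivatives of order $q$ agree on $\Din$, which promotes the self-domination $\Psi_Q \triangleleft_{\Din}^q \Psi_Q$ (which is exactly the $q$-th derivative extrapolating condition on $Q$) to $\Psi_Q \triangleleft_{\Din}^q \Psi_0$. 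Theorem~\ref{thm:taylor_extrapolation_bounds}(i) then yields $\Psi_Q(x) \in [B^{\operatorname{lo}}_{\Psi_0, \Din}(x), B^{\operatorname{up}}_{\Psi_0, \Din}(x)]$. For the reverse inclusion, Theorem~\ref{thm:taylor_extrapolation_bounds}(ii) provides sequences $(g_n^{\star})_{n\in\mathbb{N}} \subseteq C^q(\mathcal{X})$ uniformly approximating $B^{\star}_{\Psi_0, \Din}$ with $g_n^{\star} \triangleleft_{\Din}^q \Psi_0$, for each $\star \in \{\operatorname{lo}, \operatorname{up}\}$. I realize each $g_n^{\star}$ as the conditional mean of a Markov kernel $Q_n^{\star} \in \mathcal{Q}_0$ by translating $Q_0(x, \cdot)$ by $g_n^{\star}(x) - \Psi_0(x)$ on $\Dout$ while keeping it equal to $Q_0$ on $\Din$, so that $B^{\operatorname{lo}}_{\Psi_0, \Din}(x)$ and $B^{\operatorname{up}}_{\Psi_0, \Din}(x)$ both lie in the closure of $\mathcal{M}(x)$.

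With this interval characterization in hand, the worst-case bias becomes
\[
\sup_{Q \in \mathcal{Q}_0} (\Psi_Q(x) - f(x))^2 = \max\!\bigl\{(B^{\operatorname{lo}}_{\Psi_0, \Din}(x) - f(x))^2,\; (B^{\operatorname{up}}_{\Psi_0, \Din}(x) - f(x))^2\bigr\},
\]
which is minimized in $f(x)$ at the midpoint $f^*(x)$ by the standard minimax calculation over an interval. Adding back the $f$-independent variance term shows that $f^*$ attains the infimum on the left-hand side of \eqref{eq:worst_case_optimal_result}, yielding the equality. The main obstacle I expect is the reverse inclusion: one must explicitly construct Markov kernels in $\mathcal{Q}_0$ whose mean functions realize the approximants $g_n^{\star}$ while simultaneously respecting agreement with $Q_0$ on $\Din$ and the $q$-th derivative extrapolating constraint on the global mean function. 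A secondary subtlety is transferring the self-domination $\Psi_Q \triangleleft_{\Din}^q \Psi_Q$ to $\Psi_Q \triangleleft_{\Din}^q \Psi_0$ at boundary points of $\Din$, which relies on continuity of the $q$-th derivatives together with a regularity assumption on $\Din$ (e.g., that it equals the closure of its interior).
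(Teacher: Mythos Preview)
Your approach is essentially the same as the paper's: both use the bias--variance decomposition $\mathbb{E}_{Q(x,\cdot)}[(Y_x-f(x))^2]=(\Psi_Q(x)-f(x))^2+\mathbb{E}_Q[U_x^2]$, invoke Theorem~\ref{thm:taylor_extrapolation_bounds}(i) to obtain $\Psi_Q(x)\in[B^{\operatorname{lo}}_{\Psi_0,\Din}(x),B^{\operatorname{up}}_{\Psi_0,\Din}(x)]$, and use Theorem~\ref{thm:taylor_extrapolation_bounds}(ii) to realize the endpoints (in the limit) as conditional means of kernels in $\mathcal{Q}_0$, so that the midpoint solves the resulting interval minimax. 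You are in fact somewhat more explicit than the paper about the two delicate points---constructing the approximating kernels $Q_n^{\star}$ and transferring $\Psi_Q\triangleleft_{\Din}^q\Psi_Q$ to $\Psi_Q\triangleleft_{\Din}^q\Psi_0$ via agreement of $q$-th derivatives on $\Din$---both of which the paper's proof handles only implicitly.
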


A proof is given in Supplementary
material~\ref{proof:worst_case_prediction}. The type of guarantee in
Proposition~\ref{thm:worst_case_prediction} is common in the field of
distribution generalization. In distribution generalization one
assumes the observed (training) data was generated under a
distribution $P_0$ but wants to predict $Y$ under a new (potentially
different) test distribution $\Ptest$. Without further assumptions on
$\Ptest$ this is clearly impossible. A well-established assumption is
the covariate-shift assumption \citep{sugiyama2007covariate}, which
assumes that the conditional expectation under $P_0$ and $\Ptest$
remains fixed. For this assumption to be sufficient for generalization
without additional extrapolation assumptions one however requires that
\begin{equation*}
    \supp{\Ptest^X}\subseteq \supp{P^X_0}=\Din,
\end{equation*}
otherwise predictions can be arbitrarily wrong outside of
$\Din$. Proposition~\ref{thm:worst_case_prediction} extends this by
allowing the test distribution to have arbitrary support as long as
the conditional $Y|X$ under $\Ptest$ is generated by a Markov kernel
$Q\in\mathcal{Q}_0$.

\subsection{Extrapolation-aware uncertainty
  quantification}\label{sec:CIsandPIs}

Quantifying uncertainty is important in real-world applications. In
this section we consider two important approaches for uncertainty
quantification when predicting a real-valued response $Y$ from
predictors $X$; confidence intervals, which quantify the uncertainty
in the estimation of the regression function and prediction intervals
which quantify the uncertainty in the prediction itself. Existing
methods for nonparametric regression, only apply within $\Din$ and
therefore cannot provide coverage guarantees when extrapolating. This
is particularly troubling in applications where it is difficult or
even impossible to ensure whether and to what degree extrapolation
occurs. The extrapolation assumptions discussed above can provide a
solution. We now show that the extrapolation bounds derived above can
be combined with existing nonparametric approaches to construct both
extrapolation-aware confidence and prediction intervals.

We begin with confidence intervals for the conditional expectation
function $\Psi_0$ -- the same can be done for other conditional
functions $\Phi_0$. Consider a setting in which we want to estimate
the conditional expectation $\Psi_{0}$ from i.i.d.\ observations
$(X_1,Y_1),\ldots,(X_n,Y_n)\sim P_0$. Assuming that $\Psi_{0}$ is
$q$-th derivative extrapolating, it follows from
Theorem~\ref{thm:worst_case_prediction} that
\begin{equation}
  \label{eq:psi_bounds}
  \Psi_{0}(x)\in\left[B^{\operatorname{lo}}_{\Psi_{0},
      \Din}(x),\, B^{\operatorname{up}}_{\Psi_{0}, \Din}(x)\right].
\end{equation}
Since the extrapolation bounds are identifiable from $P_0$, we can use
any procedure that produces asymptotically valid confidence intervals
for both the lower and upper extrapolation bound and combine them to
get extrapolation bounds that are valid on all of $\mathcal{X}$.
\begin{proposition}[Extrapolation-aware confidence interval coverage]
  \label{thm:confidence_intervals}
  Fix $\alpha\in(0,1)$ and assume the conditional expectation function
  $\Psi_{0}$ is $q$-th derivative extrapolating. For both
  $\star\in\{\operatorname{lo}, \operatorname{up}\}$, all $x\in\mathcal{X}$ and all
  $\gamma\in(0,1)$ let $\widehat{G}^{\star}_{n}(\gamma, x)$ be an
  estimation procedure based on $n$ i.i.d.\ observations from $P_0$
  satisfying
  $\lim_{n\rightarrow\infty}\mathbb{P}\left(B^{\star}_{\Psi_0,
      \Din}(x)\leq\widehat{G}^{\star}_{n}(\gamma, x)\right)=\gamma$.
  Define for all $x\in\mathcal{X}$ the confidence intervals
  \begin{equation*}
    \widehat{\operatorname{C}}_{n;\alpha}^{\operatorname{conf}}(x)\coloneqq \left[\widehat{G}_{n}^{\operatorname{lo}}(\tfrac{\alpha}{2},x),\, \widehat{G}_{n}^{\operatorname{up}}(1-\tfrac{\alpha}{2},x)\right].
  \end{equation*}
  Then, it holds for all $x\in\mathcal{X}$ that
  \begin{equation*}
    \liminf_{n\rightarrow\infty}\mathbb{P}(\Psi_0(x)\in\widehat{\operatorname{C}}_{n;
      \alpha}^{\operatorname{conf}}(x))\geq 1-\alpha.
  \end{equation*}
\end{proposition}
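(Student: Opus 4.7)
The plan is to reduce the probabilistic coverage statement to a deterministic sandwich inequality plus a one-line union bound. First, since $\Psi_0$ is $q$-th derivative extrapolating, I can apply Theorem~\ref{thm:taylor_extrapolation_bounds}(i) with $g=f=\Psi_0$ and $\mathcal{D}=\Din$ (the hypothesis $g=f$ on $\mathcal{D}$ is automatic, and $\Psi_0\triangleleft_{\Din}^q\Psi_0$ is exactly Definition~\ref{def:extrapolating_model}) to obtain the pointwise deterministic bracket
\begin{equation*}
  B^{\operatorname{lo}}_{\Psi_0,\Din}(x) \;\leq\; \Psi_0(x) \;\leq\; B^{\operatorname{up}}_{\Psi_0,\Din}(x),
  \qquad x\in\mathcal{X}.
\end{equation*}

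Second, I would introduce shorthand $L_n \coloneqq \widehat{G}^{\operatorname{lo}}_n(\tfrac{\alpha}{2},x)$, $U_n \coloneqq \widehat{G}^{\operatorname{up}}_n(1-\tfrac{\alpha}{2},x)$, $B^{\operatorname{lo}} \coloneqq B^{\operatorname{lo}}_{\Psi_0,\Din}(x)$, $B^{\operatorname{up}} \coloneqq B^{\operatorname{up}}_{\Psi_0,\Din}(x)$, and decompose the non-coverage event as $\{\Psi_0(x)\notin[L_n,U_n]\} = \{\Psi_0(x)<L_n\}\cup\{\Psi_0(x)>U_n\}$. Using the deterministic sandwich, $\{\Psi_0(x)<L_n\}\subseteq\{B^{\operatorname{lo}}<L_n\}\subseteq\{B^{\operatorname{lo}}\leq L_n\}$ and, symmetrically, $\{\Psi_0(x)>U_n\}\subseteq\{B^{\operatorname{up}}>U_n\}=\{B^{\operatorname{up}}\leq U_n\}^c$. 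This is where all the work really happens: the randomness-free part of the proof uses Theorem~\ref{thm:taylor_extrapolation_bounds}(i) to transfer events about the unidentified $\Psi_0(x)$ into events about the identified extrapolation bounds.

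Third, I would invoke the stated one-sided coverage assumption to control each piece. The hypothesis gives $\mathbb{P}(B^{\operatorname{lo}}\leq L_n)\to\tfrac{\alpha}{2}$ and $\mathbb{P}(B^{\operatorname{up}}\leq U_n)\to 1-\tfrac{\alpha}{2}$, hence
\begin{equation*}
  \limsup_{n\to\infty}\mathbb{P}(\Psi_0(x)<L_n)\leq\tfrac{\alpha}{2},
  \qquad
  \limsup_{n\to\infty}\mathbb{P}(\Psi_0(x)>U_n)\leq\tfrac{\alpha}{2}.
\end{equation*}
A union bound then yields $\limsup_n\mathbb{P}(\Psi_0(x)\notin[L_n,U_n])\leq\alpha$, which is equivalent to the claimed $\liminf_n\mathbb{P}(\Psi_0(x)\in\widehat{\operatorname{C}}_{n;\alpha}^{\operatorname{conf}}(x))\geq 1-\alpha$.

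Honestly, there is no serious obstacle: once Theorem~\ref{thm:taylor_extrapolation_bounds}(i) has been established, the proof is a one-paragraph union-bound argument. The only minor point worth stating carefully is that the coverage hypothesis is phrased with a non-strict inequality ($B^{\star}\leq\widehat{G}^{\star}_n$), but passing from strict to non-strict inequalities only enlarges the failure event and so preserves the $\limsup$ bound; no continuity or uniformity-in-$x$ assumption is needed since the statement is pointwise in $x$.
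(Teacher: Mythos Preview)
Your proposal is correct and follows essentially the same route as the paper: both use the deterministic sandwich from Theorem~\ref{thm:taylor_extrapolation_bounds}(i) to replace $\Psi_0(x)$ by the identified extrapolation bounds and then apply a union bound together with the assumed one-sided calibration of $\widehat{G}^{\star}_n$. The only cosmetic difference is that you bound the non-coverage probability from above via $\limsup$, whereas the paper bounds the coverage probability from below directly; these are equivalent rewritings of the same argument.
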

The proof follows from a direct application of \eqref{eq:psi_bounds}
and is provided in Supplementary
material~\ref{proof:confidence_intervals} for completeness. The
estimates $\widehat{G}_n^{\star}(\gamma, x)$ in
Proposition~\ref{thm:confidence_intervals} can be constructed by a
combination of an estimate of $B^{\star}_{\Psi_0,\Din}(x)$ (discussed
in Section~\ref{sec:estimation}) and a method to estimate the quantile
of the estimator distribution. This could for example be a
nonparametric bootstrap procedure, e.g., the percentile bootstrap
\citep{efron1981nonparametric}.  Since the lower and upper
extrapolation bounds are equal to $\Psi_{0}$ on $\Din$, the confidence
intervals for conditional expectation resulting from such a procedure
are extrapolation-aware in the sense that they become larger on
$\Dout$ while remaining tight on $\Din$, see
Example~\ref{ex:bootstrap}.

\begin{example}[Extrapolation-aware bootstrap confidence intervals]
  \label{ex:bootstrap}
  In Figure~\ref{fig:linear_nonlinear_xtrapolation} (left), the data
  come from a one-dimensional linear model. In this case, assuming a
  linear model is sufficient for extrapolation and a linear regression
  can be used to predict values outside of the support. However, in
  Figure~\ref{fig:linear_nonlinear_xtrapolation} (right) the data come
  from a model with a nonlinear conditional expectation. Even though a
  linear regression leads to a good fit, it does not extrapolate
  well. If one instead assumes that the model is first order
  $\Psi$-extrapolating, which is satisfied in both examples,
  extrapolation-aware confidence intervals based on the extrapolation
  bounds and a nonparametric percentile bootstrap (blue, solid) using
  a random forest estimate of $\Psi_{P_0}$ (red, solid) are able to
  preserve coverage also while extrapolating. In particular, the
  confidence intervals are extrapolation-aware: they detect the
  linearity in the linear setting resulting in tight bounds and
  capture the indeterminacy in the behavior of $\Psi_{P_0}$ outside of
  the support in the nonlinear setting.
\end{example}

\begin{figure}
  \centering
  \includegraphics{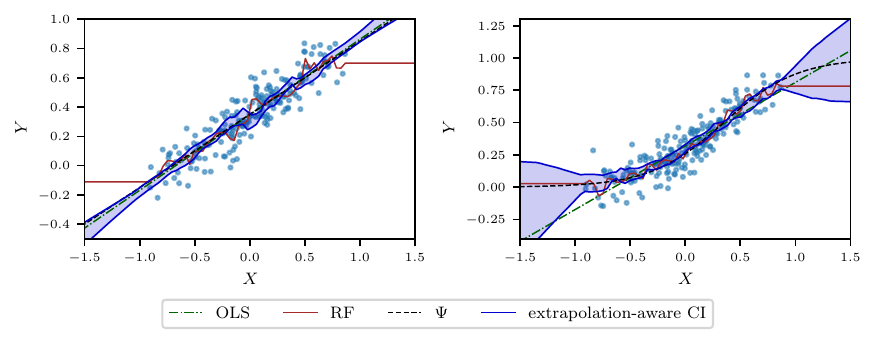}
  \caption{Linear (left) and nonlinear (right) conditional expectation
    models. In both cases, OLS regression (green, dotdashed) fits the
    data well ($\operatorname{RMSE}=0.31$ and
    $\operatorname{RMSE}=0.30$ respectively). However, the OLS only
    extrapolates well for the linear setting, while it is misleading
    in the nonlinear setting. Extrapolation-aware confidence intervals
    constructed using a percentile bootstrap and estimates of the
    extrapolation bounds based on random forest (RF) are able to
    capture the extrapolating behavior in both cases. In the linear
    setting it closely matches the OLS prediction with tight bounds,
    while in the nonlinear setting it captures the uncertainty outside
    of the data support.}
  \label{fig:linear_nonlinear_xtrapolation}
\end{figure}
Next, we consider prediction intervals. Now instead of estimating the
conditional expectation function, we want to estimate an interval that
contains $Y\vert X=x$ with a pre-specified probability $1-\alpha$. One
existing nonparametric approach is to estimate quantiles of the
conditional distribution of $Y$ and use them to construct prediction
intervals, that is, estimate the prediction interval
$[\mathcal{T}^{\alpha/2}_{0}(x), \mathcal{T}^{1-\alpha/2}_{0}(x)]$. We
can modify this approach to be extrapolation-aware as follows.
\begin{proposition}[Extrapolation-aware prediction interval coverage]
  \label{thm:prediction_intervals}
  Fix $\alpha\in(0,1)$ and assume the Markov kernel $Q_0$ is such that
  the conditional quantiles $\mathcal{T}^{\alpha/2}_{0}$ and
  $\mathcal{T}^{1-\alpha/2}_{0}$ are both $q$-th derivative
  extrapolating. Define for all $x\in\mathcal{X}$ the prediction
  intervals
  \begin{equation*}
    \operatorname{C}_{\alpha}^{\operatorname{pred}}(x)\coloneqq \left[B^{\operatorname{lo}}_{\mathcal{T}_{0}^{\alpha/2},\Din}(x),\, B^{\operatorname{up}}_{\mathcal{T}_{0}^{1-\alpha/2},\Din}(x)\right].
  \end{equation*}
  Then, it holds for all $x\in\mathcal{X}$ that
  \begin{equation*}
    \mathbb{P}_{Q_0(x, \cdot)}(Y_x\in\operatorname{C}_{\alpha}^{\operatorname{pred}}(x))\geq 1-\alpha.
  \end{equation*}
\end{proposition}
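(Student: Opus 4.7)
The plan is to reduce the claim directly to Theorem~\ref{thm:taylor_extrapolation_bounds}(i) applied separately to the two conditional quantile functions, followed by a standard lower-bound computation for the probability that $Y_x$ lies between its $\alpha/2$- and $(1-\alpha/2)$-quantiles. Concretely, I would proceed in three short steps.

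First, I invoke Theorem~\ref{thm:taylor_extrapolation_bounds}(i) with $f = g = \mathcal{T}_0^{\alpha/2}$ on the closed set $\Din$; the hypothesis that $\mathcal{T}_0^{\alpha/2}$ is $q$-th derivative extrapolating is exactly the condition $\mathcal{T}_0^{\alpha/2} \triangleleft_{\Din}^q \mathcal{T}_0^{\alpha/2}$ required by the theorem (with the trivial equality $g = f$ on $\Din$). This yields, for every $x\in\mathcal{X}$, the pointwise lower bound $B^{\operatorname{lo}}_{\mathcal{T}_0^{\alpha/2},\Din}(x) \leq \mathcal{T}_0^{\alpha/2}(x)$. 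The same argument applied to $\mathcal{T}_0^{1-\alpha/2}$ gives $\mathcal{T}_0^{1-\alpha/2}(x) \leq B^{\operatorname{up}}_{\mathcal{T}_0^{1-\alpha/2},\Din}(x)$.

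Second, I combine these two inclusions to conclude that for every $x\in\mathcal{X}$,
\begin{equation*}
\bigl[\mathcal{T}_0^{\alpha/2}(x),\,\mathcal{T}_0^{1-\alpha/2}(x)\bigr] \,\subseteq\, \operatorname{C}_{\alpha}^{\operatorname{pred}}(x).
\end{equation*}

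Third, I bound the probability that $Y_x \sim Q_0(x,\cdot)$ falls in the inner interval using the definition \eqref{eq:conditional_quantile_def} of conditional quantiles. By definition of $\mathcal{T}_0^{1-\alpha/2}(x)$ we have $\mathbb{P}_{Q_0(x,\cdot)}(Y_x \leq \mathcal{T}_0^{1-\alpha/2}(x)) \geq 1-\alpha/2$, and the infimum definition of $\mathcal{T}_0^{\alpha/2}(x)$ implies $\mathbb{P}_{Q_0(x,\cdot)}(Y_x < \mathcal{T}_0^{\alpha/2}(x)) \leq \alpha/2$. Subtracting gives $\mathbb{P}_{Q_0(x,\cdot)}(Y_x \in [\mathcal{T}_0^{\alpha/2}(x),\mathcal{T}_0^{1-\alpha/2}(x)]) \geq 1-\alpha$, and monotonicity of probability together with the inclusion from the previous step finishes the proof.

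There is no real obstacle here: the whole argument is essentially a one-line consequence of Theorem~\ref{thm:taylor_extrapolation_bounds}(i). The only mild subtlety worth being explicit about is the strict-versus-non-strict inequality in the lower-quantile step, which is why I use $\mathbb{P}(Y_x < \mathcal{T}_0^{\alpha/2}(x)) \leq \alpha/2$ rather than a non-strict inequality; this avoids any issue when the CDF has an atom at $\mathcal{T}_0^{\alpha/2}(x)$.
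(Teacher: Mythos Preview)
Your proposal is correct and follows essentially the same route as the paper: apply Theorem~\ref{thm:taylor_extrapolation_bounds}(i) to each conditional quantile to obtain the interval inclusion, then use the definition of quantiles and monotonicity of probability. If anything, your third step is more careful than the paper's, which simply asserts $\mathbb{P}_{Q_0(x,\cdot)}(Y_x\in[\mathcal{T}_0^{\alpha/2}(x),\mathcal{T}_0^{1-\alpha/2}(x)])=1-\alpha$ without discussing the strict-versus-non-strict issue.
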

A proof is given in Supplementary
material~\ref{proof:prediction_intervals}.  In practice one needs to
estimate the extrapolation bounds resulting in prediction intervals
being random. Consistency of the extrapolation bound estimates is
discussed in Section \ref{sec:estimation}, which then leads to
(pointwise) asymptotically valid prediction intervals (see
Corollary~\ref{thm:est_prediction_intervals}). The extrapolation-aware
confidence and prediction intervals can both be conservative for
points in $\Dout$. We see this as a positive feature, since it
protects against potentially misleading conclusions due to accidental
extrapolation.

\subsection{Quantifying extrapolation}\label{sec:extrapolation_score}

It is often difficult to quantify the level of extrapolation in a
meaningful way. Simply measuring the minimal distance to observed
samples does not always provide a good indicator of whether
extrapolation is problematic for a given point $x$, in particular, if
$X$ is multi-dimensional. We can use extrapolation bounds to quantify
the level of extrapolation in a way that takes both the target of
inference and the extrapolation assumptions into account. To this end,
we use that (by Theorem~\ref{thm:taylor_extrapolation_bounds}) if the
conditional function of interest $\Phi_0$ is $q$-th derivative
extrapolating, it holds 
that $\Phi_0(x)$ is
identifiable for all $x\in\mathcal{X}$ from $P_0$ if
\begin{equation*}
    B_{\Phi_{0},\Din}^{\operatorname{up}}(x)-B_{\Phi_{0},\Din}^{\operatorname{lo}}(x)=0.
\end{equation*}
Therefore, we can use the difference between the upper and lower
extrapolation bounds as a quantification of the level of
extrapolation. Quantifying extrapolation in this way, explicitly takes
the target of inference $\Phi_0$ into account. For example, if
$\Phi_0$ only changes in the first coordinate but is constant in the
remaining coordinates, the extrapolation bounds do not changes as long
as the first coordinate is kept fixed. Hence, the bounds can overlap
even if $x$ is far away from any observed samples in Euclidean
distance.

Depending on the conditional function under consideration, we propose
to use different variations of this score. For example, when
estimating a conditional expectation $\Psi_0$, we suggest using the
extrapolation score $S:\mathcal{X}\rightarrow [0,\infty)$ defined for
all $x\in\mathcal{X}$ by
\begin{equation*}
  S(x)\coloneqq \frac{B_{\Psi_{0},\Din}^{\operatorname{up}}(x)-B_{\Psi_{0},\Din}^{\operatorname{lo}}(x)}{\sqrt{\mathbb{E}[(Y-\Psi_{0}(X))^2]}}.
\end{equation*}
At a point $x\in\mathcal{X}$ with $S(x)=0$ there is no extrapolation,
while for $S(x)>0$ there is extrapolation. The normalization by the
residual standard deviation allows us to interpret the score values
more directly: if $S(x)=1$ the extrapolation uncertainty in the
conditional expectation function $\Psi_{0}$ is equal to the standard
deviation of the residual noise. In particular for scores greater than
one, we should be cautious when interpreting any point estimates as
the error due to unidentifiablity of $\Phi_0(x)$ may be larger than
the noise level.

\subsection{A causal perspective on
  extrapolation}\label{sec:causality}

When extrapolating a conditional function $\Phi_0$, we are implicitly
assuming a two step generative model that first selects $X$ and then
samples $Y$ according to the distribution $Q_0(X, \cdot)$ specified by
the Markov kernel. This two step generative model, does not need to
correspond to an underlying mechanism and hence needs to be
interpreted carefully. Instead of plain conditioning, one therefore
might be interested in quantities with a causal interpretation, where
one not only conditions but actively sets $X$ (or a part of it) to
specific values. Causal models provide a rigorous mathematical
framework to define such quantities using the notion of
interventions. Once a causal quantity of interest has been defined the
nonparametric approach to causal inference consists in formalizing
precise causal assumptions under which the causal quantity can be
expressed as a function of $P_0$ and hence estimated using
observational statistical methods. Since the proposed extrapolation
framework extends the conventional nonparametric model, it also
immediately extends this causal inference approach and allows us to
reason about causal quantities outside the observed support.

\begin{remark}[Causal extrapolation]
  \label{rmk:causal_extrapolation}
  In causal inference, the term extrapolation is sometimes used to
  refer to the task of generalizing from the observational to a
  previously unseen interventional distribution. To avoid confusion,
  we call such inference tasks \emph{causal extrapolation}. Causal
  extrapolation is in general different from the notion of
  extrapolation considered here because it does not necessarily (and
  in fact mostly does not) correspond to evaluating a conditional
  function outside of $\supp{X}$.
\end{remark}

We now formally discuss how to combine the proposed extrapolation
framework with causal models based on a treatment-response example. We
use structural causal models (SCMs) \citep{pearl2009causality} for
this as it allows us to specify the function classes more naturally,
but it is easy to transfer the ideas to other causal models e.g., the
potential outcome model \citep{rubin2005causal}. Let $M_0$ be an SCM
over $(X, Y)\in\mathcal{X}\times\mathbb{R}$, assume that the
covariates can be divided into $X=(T, W)$, where $T\in\mathcal{T}$ are
treatment variables and $W\in\mathcal{W}$ pre-treatment covariates,
and let $M_0$ be given by

\begin{minipage}{0.5\textwidth}
  \begin{equation*}
    M_0:\,
    \begin{cases}
      &W \leftarrow\varepsilon_W\\
      &T \leftarrow h(W, \varepsilon_T)\\
      &Y \leftarrow g(W, T) +  \varepsilon_Y.
    \end{cases}
  \end{equation*}
\end{minipage}%
\begin{minipage}{0.5\textwidth}
  \vspace{1em}
  \begin{tikzpicture}[scale=1.3]
    \tikzset{round/.style={circle, draw, fill=white!90!black,
        minimum width=2em}}
    \node[round] at (1,0) (Y){$Y$};
    \node[round] at (-1,0) (T){$T$};
    \node[round] at (0,1) (W){$W$};
    \path[-Latex,draw,thick] (T) edge (Y);
    \path[-Latex,draw,thick] (W) edge (Y);
    \path[-Latex,draw,thick] (W) edge (T);
  \end{tikzpicture}
  \vspace{1em}
\end{minipage}
The SCM $M_0$ induces an observed distribution $P_{0}$ over
$(T, W, Y)$ and for all $t\in\mathcal{T}$ the interventional
distributions $P_{\operatorname{do}(T=t)}$ over $(T, W, Y)$
corresponding to the intervention $\operatorname{do}(T=t)$ that
assigns treatment $t$. Given the SCM $M_0$ we can therefore define the
causal conditional functions we are interested in. For example, if $T$
is continuous, we could consider the dose-response curve
$\Phi_{0}^{\operatorname{DRC}}:\mathcal{T}\rightarrow\mathbb{R}$, also
called the average treatment function, defined for all
$t\in\mathcal{T}$ by
\begin{equation*}
  \Phi_{0}^{\operatorname{DRC}}(t)\coloneqq\mathbb{E}_{\operatorname{do}(T=t)}[Y],
\end{equation*}
where the subscript $\operatorname{do}(T=t)$ denotes that the
expectation is taken with respect the distribution
$P_{\operatorname{do}(T=t)}$.  Similarly, if $T$ is
binary, we could consider the conditional average treatment effect
$\Phi_{0}^{\operatorname{CATE}}:\mathcal{W}\rightarrow\mathbb{R}$
defined for all $w\in\mathcal{W}$ by
\begin{equation*}
  \Phi_{0}^{\operatorname{CATE}}(w)\coloneqq\mathbb{E}_{\operatorname{do}(T=1)}[Y\vert
  W=w]
  -\mathbb{E}_{\operatorname{do}(T=0)}[Y\vert W=w].
\end{equation*}
The conditional average treatment effect corresponds to the example
mentioned in the introduction (with $W$ taken to be age). Using
well-established identification procedures from causal inference
(e.g., the g-computation-formula \citep{robins1986new}) it directly
follows from the causal assumptions encoded in $M_0$ that both causal
target quantities can be expressed as functions of $P_0$. More
specifically, it holds for all $t\in\mathcal{T}$ that
\begin{equation*}
  \Phi_{0}^{\operatorname{DRC}}(t)=\mathbb{E}[\mathbb{E}[Y\vert T=t, W]]
\end{equation*}
and for all $w\in\mathcal{W}$ that
\begin{equation*}
  \Phi_{0}^{\operatorname{CATE}}(w)=\mathbb{E}[Y\vert T=1, W=w] - \mathbb{E}[Y\vert T=0, W=w].
\end{equation*}
In both cases, we thus reduced the causal quantities to conditional
functions based only on the observed distribution $P_0$. This
reduction from the interventional to the observed distribution is what
is sometimes called causal extrapolation (see
Remark~\ref{rmk:causal_extrapolation}). At least in the nonparametric
approach to causal inference, the functions
$\Phi_0^{\operatorname{DRC}}$ and $\Phi_{0}^{\operatorname{CATE}}$
are, however, only identified on $\supp{T}$ and on $\supp{W}$,
respectively. Therefore, if one is interested in genuine extrapolation,
additional assumptions, as discussed in
Section~\ref{sec:statistical_extrapolation}, are required. We propose
to use our extrapolation framework to perform inference on the causal
quantities outside of their respective observed supports.  Since the
framework applies to arbitrary conditional functions (see
Section~\ref{sec:categorical_covariates} for how to account for
categorical variables), it also applies here if we are willing to
assume that the causal quantity of interest (e.g.,
$\Phi_0^{\operatorname{DRC}}$ or $\Phi_{0}^{\operatorname{CATE}}$) is
$q$-th derivative extrapolating.

\section{Estimating extrapolation bounds}\label{sec:estimation}

Assume $n$ i.i.d.\ observations $(X_1,Y_1),\ldots,(X_n, Y_n)$ from the
distribution $P_0$ are observed and we want to estimate the
extrapolation bounds $B_{\Phi_{0}, \Din}^{\operatorname{lo}}$ and
$B_{\Phi_{0}, \Din}^{\operatorname{up}}$. By definition the
extrapolation bounds are completely identified by $P_0$ since they can
be exactly computed from the conditional function $\Phi_{0}$ on the
set $\Din$. A natural plugin estimator is therefore given by first
estimating $\Phi_{0}$ with a $q$-times differentiable function
$\widehat{\Phi}$ and then directly evaluating the bounds. In practice
this can be difficult for two reasons: (i) Directly computing the
extrapolation bounds involves two optimizations over the potentially
unknown set $\Din$ and (ii) existing nonparametric estimation
procedures for estimating $\Phi_{0}$ might not result in $q$-times
differentiable functions and even if they do the derivatives may be
ill-behaved.

We consider these two problems separately. First, in
Section~\ref{sec:consistency}, we start by assuming access to a
$q$-times differentiable estimate $\widehat{\Phi}_n$ that approximates
$\Phi_{0}$ and its derivatives sufficiently well. In that case, we
show that the extrapolation bounds can be estimated consistently, by
performing the two optimizations over the sample points
$X_1,\ldots,X_n$ only. Second, in Section~\ref{sec:forest_locpol}, we
propose a procedure based on random forests and local polynomials that
uses only
$(X_1,\widehat{\Phi}_n(X_1)),\ldots,(X_n,\widehat{\Phi}_n(X_n))$ to
estimate directional derivatives
$D_v^k\widehat{\Phi}_n(X_1),\ldots D_v^k\widehat{\Phi}_n(X_1)$ for
arbitrary directions $v$ and orders $k$. Finally, in
Section~\ref{sec:xtrapolation_estimator}, we combine the procedures
from Sections~\ref{sec:consistency} and~\ref{sec:forest_locpol} in a
computationally efficient way.

\subsection{Extrapolation bounds from differentiable
  estimates}\label{sec:consistency}

Let $\widehat{\Phi}_n$ be a $q$-times differentiable estimate of
$\Phi_{0}$ based on the data $(X_1, Y_1),\ldots,(X_n,Y_n)$. Then, at
an arbitrary point $x\in\mathcal{X}$, we propose to estimate the
extrapolation bounds by plugging the differentiable estimate
$\widehat{\Phi}_n$ into the definition and optimizing only over the
observed samples $X_1,\ldots,X_n$. More formally, define
\begin{equation}
  \label{eq:plugin_estimates}
  \widehat{B}^{\operatorname{lo}}_n(x)\coloneqq B^{\operatorname{lo}}_{\widehat{\Phi}_n,\{X_1,\ldots,X_n\}}(x)
  \quad\text{and}\quad
  \widehat{B}^{\operatorname{up}}_n(x)\coloneqq B^{\operatorname{up}}_{\widehat{\Phi}_n,\{X_1,\ldots,X_n\}}(x).
\end{equation}
Using multi-index notation, the lower extrapolation bound estimate can
be expressed as
\begin{equation} \label{eq:vanilla_lower_bound_estimator}
  \widehat{B}^{\operatorname{lo}}_n(x)\coloneqq
  \max_{i\in\{1,\ldots,n\}}\left(\sum_{\ell=0}^{q-1}\sum_{|\balpha|=\ell}\partial^{\balpha}\widehat{\Phi}_n(X_i)\tfrac{(x-X_i)^{\balpha}}{\balpha!}+\min_{k\in\{1,\dots,n\}}\sum_{|\balpha|=q}\partial^{\balpha}\widehat{\Phi}_n(X_k)\tfrac{(x-X_i)^{\balpha}}{\balpha!}\right).
\end{equation}
Similarly, the upper extrapolation bound estimate can be expressed as
\begin{equation}
  \label{eq:vanilla_upper_bound_estimator}
  \widehat{B}^{\operatorname{up}}_n(x)\coloneqq
  \min_{i\in\{1,\ldots,n\}}\left(\sum_{\ell=0}^{q-1}\sum_{|\balpha|=\ell}\partial^{\balpha}\widehat{\Phi}_n(X_i)\tfrac{(x-X_i)^{\balpha}}{\balpha!}+\max_{k\in\{1,\dots,n\}}\sum_{|\balpha|=q}\partial^{\balpha}\widehat{\Phi}_n(X_k)\tfrac{(x-X_i)^{\balpha}}{\balpha!}\right).
\end{equation}
From these expressions, it can be seen that, instead of evaluating all
possible directional derivatives, the estimates can be computed by
only evaluating $\partial^{\balpha}\widehat{\Phi}_n(X_i)$ once at
every observation $X_i$ and every partial derivative
$\partial^{\balpha}$. If $q=1$, for example, this means it is
sufficient to evaluate all $\partial^j\widehat{\Phi}_n(X_i)$
corresponding to $n d$ evaluations instead of $n^2$ evaluations for
$D_{\overline{v}(x,X_j)}\widehat{\Phi}(X_i)$. This becomes
particularly beneficial if the bounds are evaluated at many target
points $x\in\mathcal{X}$. As shown in the following theorem, the
estimates are consistent if $\widehat{\Phi}_n$ and all partial
derivatives $\partial^{\balpha}\widehat{\Phi}_n$ up to order $q$ are
uniformly consistent on $\Din$.
\begin{theorem}[Consistency of extrapolation bound estimates]
  \label{thm:consistency}
  Assume $\mathcal{X}$ is compact and let
  $\Phi_0:\mathcal{X}\rightarrow\mathbb{R}$ be a conditional function
  satisfying $\Phi_{0}\in C^{q+1}(\mathcal{X})$. For all
  $n\in\mathbb{N}$, let $\widehat{\Phi}_n$ be a $q$-times
  differentiable estimate of $\Phi_{0}$ based on $n$ i.i.d.\
  observations $(X_1, Y_1),\ldots,(X_n,Y_n)\sim P_0$ satisfying for
  all $\balpha\in\mathbb{N}^d$ with $|\balpha|\leq q$ that
  \begin{equation*}
    \sup_{x\in\Din}\left|\widehat{\Phi}_n(x)-\Phi_{0}(x)\right|\overset{P_0}{\longrightarrow}0
    \quad\text{and}\quad
    \sup_{x\in\Din}\left|\partial^{\balpha}\widehat{\Phi}_n(x)-\partial^{\balpha}\Phi_{0}(x)\right|\overset{P_0}{\longrightarrow}0
    \quad\text{as }n\rightarrow\infty.
  \end{equation*}
  Additionally, assume
  $\Lambda_n\coloneqq\sup_{z\in\Din}\min_{k\in\{1,\ldots,n\}}\|X_k-z\|_2\overset{P_0}{\longrightarrow}0$
  as $n$ goes to infinity and for all $n\in\mathbb{N}$ and all
  $x\in\mathcal{X}$, let $\widehat{B}^{\operatorname{lo}}_n(x)$ and
  $\widehat{B}^{\operatorname{up}}_n(x)$ be defined
  in~\eqref{eq:plugin_estimates}. Then, for all $x\in\mathcal{X}$, it
  holds that
  \begin{equation*}
    \left|B^{\operatorname{lo}}_{\Phi_{0},\Din}(x)-\widehat{B}^{\operatorname{lo}}_n(x)\right|\overset{P_0}{\longrightarrow}0
    \quad\text{and}\quad
    \left|B^{\operatorname{up}}_{\Phi_{0},\Din}(x)-\widehat{B}^{\operatorname{up}}_n(x)\right|\overset{P_0}{\longrightarrow}0
    \quad\text{as }n\rightarrow\infty.
  \end{equation*}
\end{theorem}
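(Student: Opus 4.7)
I would tackle the lower bound first; the upper bound follows by the same argument with sup/inf interchanged. The plan is a triangle-inequality decomposition that splits the error into an \emph{estimation} part (replacing $\Phi_0$ and its derivatives by $\widehat{\Phi}_n$ and its derivatives) and a \emph{discretization} part (replacing sup/inf over $\Din$ by max/min over $S_n\coloneqq\{X_1,\ldots,X_n\}$):
\begin{equation*}
\bigl|\widehat{B}^{\operatorname{lo}}_n(x) - B^{\operatorname{lo}}_{\Phi_0,\Din}(x)\bigr| \leq \bigl|\widehat{B}^{\operatorname{lo}}_n(x) - B^{\operatorname{lo}}_{\Phi_0,S_n}(x)\bigr| + \bigl|B^{\operatorname{lo}}_{\Phi_0,S_n}(x) - B^{\operatorname{lo}}_{\Phi_0,\Din}(x)\bigr|.
\end{equation*}
A useful preliminary step is the multinomial expansion $D^\ell_v f(y) = \sum_{|\balpha|=\ell}\tfrac{\ell!}{\balpha!}v^{\balpha}\partial^{\balpha}f(y)$; since $\overline{v}(x_0,x)\in\mathcal{B}$, this lets me bound any uniform discrepancy in directional derivatives (uniformly in $v$) by the corresponding uniform discrepancies in partial derivatives, which are provided by the hypotheses.

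For the estimation term, I would use the explicit formula \eqref{eq:vanilla_lower_bound_estimator} for both quantities, apply the elementary inequalities $|\max_i a_i-\max_i b_i|\leq\max_i|a_i-b_i|$ and the analogue for $\min$, and bound each summand by the uniform derivative error over $\Din$ times the bounded prefactor $\|x-X_i\|^\ell/\ell!\leq\operatorname{diam}(\mathcal{X})^\ell/\ell!$. Combined with the preliminary step, the assumed uniform $P_0$-consistency of $\widehat{\Phi}_n$ and its partial derivatives up to order $q$ on $\Din$ then gives that this term is $o_{P_0}(1)$.

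The discretization term is the main obstacle. Writing the target as $\sup_{x_0\in\Din}H(x_0,x)$ with $H$ encoding the sup-inf structure of Definition~\ref{def:extrapolation_bounds}, restricting $\Din$ to $S_n$ \emph{shrinks} the outer supremum but \emph{inflates} the inner infimum, so the two effects push in opposite directions and the inequality must be established in both directions separately. I would establish two uniform-continuity facts: (a) $(v,z)\mapsto D^q_v\Phi_0(z)$ is uniformly continuous on the compact $\mathcal{B}\times\Din$, which follows from $\Phi_0\in C^{q+1}(\mathcal{X})$ and the multinomial expansion that is linear in $v$; and (b) the map $x_0\mapsto H(x_0,x)$ is continuous on $\Din$, the only subtle point being at $x_0=x$, where the discontinuity of $\overline{v}(x_0,x)$ is absorbed by the prefactors $\|x-x_0\|^\ell$ for $\ell\geq 1$, so $H(x_0,x)\to\Phi_0(x)$ as $x_0\to x$. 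Using $\Lambda_n\to 0$ in $P_0$-probability together with (a), any $z^*\in\Din$ in the inner infimum can be approximated by some $X_k$ within $\Lambda_n$ at a cost of the modulus of continuity in (a); using (b), any near-maximizer $x_0^*\in\Din$ of the outer sup can be approximated by some $X_i$ within $\Lambda_n$ at a cost of the modulus of continuity in (b). Combining these in both directions controls the discretization term.

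Finally, summing the estimation and discretization contributions, both of which are $o_{P_0}(1)$, establishes $|\widehat{B}^{\operatorname{lo}}_n(x)-B^{\operatorname{lo}}_{\Phi_0,\Din}(x)|\xrightarrow{P_0}0$ at any fixed $x\in\mathcal{X}$. The argument for $\widehat{B}^{\operatorname{up}}_n$ is identical after interchanging $\sup\leftrightarrow\inf$ and $\max\leftrightarrow\min$ throughout.
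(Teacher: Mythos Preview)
Your proof is correct and uses the same core ingredients as the paper—the max/min stability inequalities $|\max_i a_i-\max_i b_i|\leq\max_i|a_i-b_i|$, uniform consistency of the partial derivatives, and the density $\Lambda_n\to 0$ combined with a (uniform) continuity argument on the compact $\Din$. The decomposition differs in order: you insert the intermediate quantity $B^{\operatorname{lo}}_{\Phi_0,S_n}(x)$ to completely separate an estimation term (both optima over $S_n$, $\widehat{\Phi}_n$ vs.\ $\Phi_0$) from a discretization term (both with $\Phi_0$, $S_n$ vs.\ $\Din$), whereas the paper inserts the function $\widehat{G}(w)=\widehat{F}_1(w)+\max_k\widehat{F}_2(w,X_k)$ and first shows $\sup_{w\in\Din}|G(w)-\widehat{G}(w)|\to 0$—a step that already mixes estimation with the \emph{inner} discretization—and only afterwards handles the outer discretization. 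A second difference is that the paper works entirely in the multi-index form $\sum_{|\balpha|=\ell}\partial^{\balpha}\Phi_0(w)\,(x-w)^{\balpha}/\balpha!$; since these expressions are polynomials in $x-w$ with $C^1$ coefficients (thanks to $\Phi_0\in C^{q+1}$), Lipschitz constants are immediate and the discontinuity of $\overline{v}(x_0,x)$ at $x_0=x$ never arises. Your route through directional derivatives is equally valid but obliges you to treat that singularity explicitly, which you do correctly via the absorbing prefactors $\|x-x_0\|_2^{\ell}$. Either approach yields the result; the paper's multi-index bookkeeping is slightly more economical, while your estimation/discretization split is arguably cleaner conceptually.
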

A proof is given in Supplementary material~\ref{proof:consistency}. An
immediate implication of the consistency in Theorem
\ref{thm:consistency} is asymptotic validity of the prediction
intervals discussed in Section~\ref{sec:CIsandPIs}.
\begin{corollary}
  \label{thm:est_prediction_intervals}
  Assume $\mathcal{X}$ is compact, fix $\alpha\in(0,1)$ and assume the
  Markov kernel $Q_0$ is such that the conditional quantiles
  $\mathcal{T}^{\alpha/2}_{0}$ and $\mathcal{T}^{1-\alpha/2}_{0}$ are
  both $q$-th derivative extrapolating.  Denote by
  $\widehat{B}^{\operatorname{lo}}_{n;\alpha}$ and
  $\widehat{B}^{\operatorname{up}}_{n;\alpha}$ estimates of the lower
  extrapolation bound of $\mathcal{T}_0^{\alpha/2}$ and the upper
  extrapolation bound of $\mathcal{T}_0^{1-\alpha/2}$,
  respectively. Furthermore, assume that the estimates satisfies the
  consistency in Theorem~\ref{thm:consistency} and define for all
  $x\in\mathcal{X}$ the intervals
  \begin{equation*}
    \widehat{C}_{n;\alpha}^{\mathrm{pred}}(x) \coloneqq
    \left[\widehat{B}^{\operatorname{lo}}_{n;\alpha}(x), \widehat{B}^{\operatorname{up}}_{n;\alpha}(x)\right].
  \end{equation*}
  Then, it holds for all $x\in\mathcal{X}$ that
  \begin{equation*}
    \liminf_{n \to \infty} \mathbb{P}_{Q_0(x,\cdot)}\left(Y_x\in\widehat{C}_{n;\alpha}^{\mathrm{pred}}(x)\right) \geq 1 - \alpha.
  \end{equation*}
\end{corollary}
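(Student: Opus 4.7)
The plan is to combine the deterministic coverage result from Proposition \ref{thm:prediction_intervals} with the consistency of the extrapolation bound estimates from Theorem \ref{thm:consistency}, via a Slutsky-type argument that crucially exploits the independence of $Y_x$ from the training sample $(X_1,Y_1),\ldots,(X_n,Y_n)$. Abbreviate $B^{\mathrm{lo}} \coloneqq B^{\mathrm{lo}}_{\mathcal{T}_0^{\alpha/2},\Din}(x)$ and $B^{\mathrm{up}} \coloneqq B^{\mathrm{up}}_{\mathcal{T}_0^{1-\alpha/2},\Din}(x)$. Proposition \ref{thm:prediction_intervals} yields $Q_0(x,[B^{\mathrm{lo}},B^{\mathrm{up}}]) \geq 1-\alpha$, while Theorem \ref{thm:consistency}, applied separately to the two conditional quantile functions, gives $\widehat{B}^{\mathrm{lo}}_{n;\alpha}(x)\to B^{\mathrm{lo}}$ and $\widehat{B}^{\mathrm{up}}_{n;\alpha}(x)\to B^{\mathrm{up}}$ in $P_0$-probability.

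For any fixed $\epsilon > 0$, I would next introduce the high-probability event
\begin{equation*}
    A_n(\epsilon) \coloneqq \bigl\{\widehat{B}^{\mathrm{lo}}_{n;\alpha}(x)\leq B^{\mathrm{lo}}+\epsilon\bigr\}\cap\bigl\{\widehat{B}^{\mathrm{up}}_{n;\alpha}(x)\geq B^{\mathrm{up}}-\epsilon\bigr\},
\end{equation*}
which by consistency satisfies $\mathbb{P}(A_n(\epsilon))\to 1$. On $A_n(\epsilon)$ the random interval $\widehat{C}_{n;\alpha}^{\mathrm{pred}}(x)$ contains the deterministic interval $[B^{\mathrm{lo}}+\epsilon, B^{\mathrm{up}}-\epsilon]$, so by independence of $Y_x$ from the training data, Fubini yields
\begin{equation*}
    \mathbb{P}\!\left(Y_x \in \widehat{C}_{n;\alpha}^{\mathrm{pred}}(x)\right)\geq \mathbb{P}(A_n(\epsilon))\cdot Q_0\!\left(x,[B^{\mathrm{lo}}+\epsilon, B^{\mathrm{up}}-\epsilon]\right).
\end{equation*}
Taking $\liminf_n$ eliminates the first factor, and subsequently letting $\epsilon\downarrow 0$ along a sequence avoiding the at most countably many atoms of $Q_0(x,\cdot)$ gives, by continuity of measure from below, $\liminf_n \mathbb{P}(Y_x \in \widehat{C}_{n;\alpha}^{\mathrm{pred}}(x))\geq Q_0(x,(B^{\mathrm{lo}},B^{\mathrm{up}}))$.

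The main technical delicacy is to conclude that this limiting probability is at least $1-\alpha$: Proposition \ref{thm:prediction_intervals} only delivers the bound for the closed interval, so one must verify that no mass is lost by passing to the open interval. In the typical extrapolation regime where $B^{\mathrm{lo}} < \mathcal{T}_0^{\alpha/2}(x) \leq \mathcal{T}_0^{1-\alpha/2}(x) < B^{\mathrm{up}}$, one has $[\mathcal{T}_0^{\alpha/2}(x),\mathcal{T}_0^{1-\alpha/2}(x)]\subseteq (B^{\mathrm{lo}},B^{\mathrm{up}})$ and the quantile definition gives coverage $\geq 1-\alpha$ immediately. The boundary case (which arises for $x\in\Din$, where the extrapolation bounds coincide with the quantiles themselves) is handled by the continuity assumption on the conditional distribution at its quantiles, which is implicit in the smoothness $\mathcal{T}_0^{\alpha/2},\mathcal{T}_0^{1-\alpha/2}\in C^{q+1}(\mathcal{X})$ required in Theorem \ref{thm:consistency}. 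Alternatively, one can invoke the Skorokhod representation to upgrade the convergence in probability of $(\widehat{B}^{\mathrm{lo}}_{n;\alpha}(x),\widehat{B}^{\mathrm{up}}_{n;\alpha}(x))$ to almost-sure convergence on an auxiliary probability space and then apply Fatou's lemma to the conditional coverage probabilities $Q_0(x,\widehat{C}_{n;\alpha}^{\mathrm{pred}}(x))$, bypassing the need to argue about the open/closed interval directly.
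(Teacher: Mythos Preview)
The paper does not give a detailed proof of this corollary; it simply states that it is ``an immediate implication of the consistency in Theorem~\ref{thm:consistency}'' combined with Proposition~\ref{thm:prediction_intervals}. Your overall strategy---consistency of the bound estimates plus the deterministic coverage of the population prediction interval, glued together via a Slutsky-type argument---is exactly the intended route, and you have worked out considerably more detail than the paper provides.

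You are also right to flag the open-versus-closed interval issue as the crux. However, both of your proposed resolutions fail. First, the claim that $\mathcal{T}_0^{\alpha/2},\mathcal{T}_0^{1-\alpha/2}\in C^{q+1}(\mathcal{X})$ (smoothness in $x$) implies continuity of $Q_0(x,\cdot)$ at its quantiles is false: take $Y_x=f(x)$ almost surely for any smooth $f$, so every conditional quantile function equals $f\in C^\infty(\mathcal{X})$, yet $Q_0(x,\cdot)=\delta_{f(x)}$ is purely atomic. Second, the Skorokhod/Fatou alternative does not bypass the problem. Under almost-sure convergence $\widehat{B}^{\mathrm{lo}}_{n;\alpha}(x)\to B^{\mathrm{lo}}$, for $y=B^{\mathrm{lo}}$ the indicator $\mathds{1}_{[\widehat{B}^{\mathrm{lo}}_{n;\alpha}(x),\,\widehat{B}^{\mathrm{up}}_{n;\alpha}(x)]}(y)$ equals $1$ precisely when $\widehat{B}^{\mathrm{lo}}_{n;\alpha}(x)\le B^{\mathrm{lo}}$, which may oscillate indefinitely; hence $\liminf_n$ of the indicator at the endpoints can be $0$, and Fatou again delivers only $Q_0(x,(B^{\mathrm{lo}},B^{\mathrm{up}}))$.

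This endpoint issue is a genuine gap: without an additional regularity assumption on $Q_0(x,\cdot)$ (for instance, that it has no atom at $\mathcal{T}_0^{\alpha/2}(x)$ and $\mathcal{T}_0^{1-\alpha/2}(x)$), the stated conclusion can fail for $x\in\Din$. The paper appears to have glossed over this; your argument is complete under such a mild atomlessness condition, and you should state it explicitly rather than try to derive it from the smoothness of the quantile functions in $x$.
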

  
Whenever a $q$-times differentiable estimator $\widehat{\Phi}_n$ is
available the above estimates for the lower and upper bounds can be
used. For conditional expectations, multiple methods have been
proposed that either provide differentiable estimates
\citep[e.g.,][]{hardle1989, mack1989, wahba1990spline} or that
estimate the corresponding derivatives separately
\citep[e.g.,][]{wang2015derivative, dai2016optimal}. These methods
are, however, generally constructed only for conditional expectations
and do not apply to other conditional functions $\Phi_0$. Furthermore,
they are targeted towards specific estimation procedures and often
only work for univariate $X$, making them inapplicable to modern
applications, where state-of-the-art performance is achieved with
nonparametric machine learning procedures. Unfortunately, those
machine learning estimates, in general, cannot be used directly as
they are either not smooth (e.g., random forests or boosted trees) or
the derivatives of the resulting estimates are ill-behaved without
additional regularization (e.g., neural networks or support vector
machines) \citep[e.g.,][]{de2013derivative}.  One possible solution is
to use procedures that start from a potentially non-differentiable
pilot estimate $\widehat{\Phi}_n$ and smooth the estimate such that
the smoothed estimate has well-behaved derivatives. Such a procedure
based on kernel-smoothing has been proposed by
\citet{klyne2023average}. We propose a related approach but instead of
directly smoothing $\widehat{\Phi}_n$ we only use the predictions
$\widehat{\Phi}_n(X_1),\ldots,\widehat{\Phi}_n(X_n)$ to estimate the
required derivatives. A related procedure was also used by
\citet{lundborg2023perturbation}.

\subsection{Estimating directional derivatives}\label{sec:forest_locpol}

We now introduce a procedure for estimating directional derivatives
based only on
\begin{equation}
  \label{eq:predicted_data}
  (X_1, \widehat{\Phi}_n(X_1)),\ldots, (X_n, \widehat{\Phi}_n(X_n)),
\end{equation}
where $\widehat{\Phi}_n$ is a potentially non-differentiable estimate
of the conditional function $\Phi_{0}$. We omit a detailed statistical
analysis of the procedure as this goes beyond the scope of this
article and instead argue heuristically and empirically (see
Section~\ref{sec:numerical_experiments}) that the proposed procedure
has several properties making it amenable to our application.

Throughout this section we fix a direction\footnote{For our proposed
  implementation of the first-order derivative estimation, we only
  need the directions $v\in\{e_1,\ldots,e_d\}$, where $e_j$ denotes
  the $j$-th unit vector.} $v\in\mathcal{B}$ and an order
$k\in\{1,\ldots,q\}$ and aim to estimate the directional derivatives
$D_v^k\Phi_{0}(X_1), \ldots, D_v^k\Phi_{0}(X_n)$ from
\eqref{eq:predicted_data}. Derivative estimation is well-known to be
statistically challenging, particularly in high-dimensions. Our
proposal aims to overcome these challenges by combining random forests
with local polynomials. Local polynomials are among the most prominent
methods used for derivative estimation. The idea is to estimate the
derivative $D_v^k\Phi_{0}(X_i)$ using a polynomial $p$ of order $q+1$
defined for all $x\in\mathcal{X}$ by
\begin{equation*}
  p(x)=\sum_{j=0}^{q+1}\beta_j((x-X_i)^{\top}v)^j.
\end{equation*}
Then, if for all $\ell\in\{1,\ldots,q+1\}$ it holds that
$\beta_{\ell} = \frac{D_{v}^{\ell}\Phi_{0}(X_i)}{k!}$, Taylor's
theorem implies that $p$ is a good approximation of $\Phi_{0}$ around
$X_i$ in the direction $v$ or, more formally, for all $h\in\mathbb{R}$
close to zero, $\Phi_{0}(X_i+hv) = p(X_i+hv)+\mathcal{O}(h^{q+2})$.
Based on this observation, we estimate coefficients
$\beta_0,\ldots,\beta_{q+1}$ such that the polynomial $p$ is a good
local approximation of $\Phi_{0}$ around $X_i$ and then use them to
estimate the directional derivative. More concretely, for a given
weight matrix $W\in\mathbb{R}^{n\times n}$, we minimize the weighted
mean squared loss
\begin{equation}
  \label{eq:local_polynomial_fit}
  \hat{\beta}(i)\coloneqq\argmin_{\beta\in\mathbb{R}^{q+1}}\sum_{\ell=1}^{n}\left(\widehat{\Phi}_n(X_\ell)-\sum_{j=0}^{q+1}\beta_j((X_{\ell}-X_i)^{\top}
    v)^{j}\right)^2 W_{i, \ell}.
\end{equation}
Then, using the estimated coefficients
$\hat{\beta}(i)$, we can estimate $D_v^k\Phi_{0}(X_i)$ as
\begin{equation}
  \label{eq:locpol_deriv_est}
  \widehat{D_v^k\Phi}_n(X_i)\coloneqq k!\, \hat{\beta}_k(i).
\end{equation}
Using local polynomials to estimate derivatives of conditional
expectations has been analyzed extensively in the literature
\citep[e.g.,][]{masry1997, de2013derivative}, however using
$Y_{\ell}$'s instead of $\widehat{\Phi}_n(X_{\ell})$'s in
\eqref{eq:local_polynomial_fit}. Most existing approaches use kernel
weights, i.e., $W_{i,\ell}=k((X_i-X_{\ell})/\sigma)$ for a kernel
function $k$ and a bandwidth $\sigma>0$. For our purposes, kernel
weights are not ideal for two reasons. Firstly, kernel weights can
perform poorly in higher dimensions and secondly, require careful
tuning of the bandwidth parameter.

To avoid these issues, we instead suggest to use weights constructed
by a random forest with a modified splitting rule
\citep{lin2006random, meinshausen2006quantile, athey2019}.
Intuitively, the weights at an observation $i$, that is
$W_{i, 1},\ldots,W_{i,n}$, should up weight a large set of
observations $\ell_1,\ldots,\ell_m$ for which
$(v^{\top}X_{\ell_1}, \widehat{\Phi}_n(X_{\ell_1})),\ldots,
(v^{\top}X_{\ell_m}, \widehat{\Phi}_n(X_{\ell_m}))$ and
$(v^{\top}X_{i}, \widehat{\Phi}_n(X_{i}))$ all can be (approximately)
described by the same polynomial of order $q+1$. Such weights can be
constructed in a greedy fashion by using a random forest with the
following modified splitting rule: For each proposed split, fit a
polynomial of order $q+1$ with $v^{\top}X$ as argument on each child
node and use the residual sum of squares across both child nodes as
impurity measure. We denote this type of random forest with polynomial
splitting in direction $v$ by \texttt{rfpoly}-$v$. The fitted random
forest regression function $\widehat{\mu}$ can be expressed as
\begin{equation*}
    \widehat{\mu}(x)=\sum_{i=1}^{n}\widehat{w}_i(x)\widehat{\Phi}_n(X_i),
\end{equation*}
where $\widehat{w}_i:\mathcal{X}\rightarrow [0,1]$ are weight
functions that are given by
\begin{equation*}
  \widehat{w}_i(x)=\frac{1}{M}\sum_{k=1}^{M}\frac{\mathds{1}(i\in \widehat{\mathcal{L}}_k(x))}{|\widehat{\mathcal{L}}_k(x)|},
\end{equation*}
with $M$ the number of trees in the random forest and
$\widehat{\mathcal{L}}_k(x)$ the sample indices specified by the
$k$-th tree's terminal node in which $x$ lies. Based on these weights,
we then define for all $i,\ell\in\{1,\ldots,n\}$ the weights
$W_{i,\ell}\coloneqq\widehat{w}_i(X_{\ell})$ and use them in the local
polynomial derivative estimation. The full procedure is detailed in
Algorithm~\ref{alg:forest_locpol} which includes an additional
regularization step discussed in the following section.

\begin{algorithm}
  \caption{\texttt{RFLocPol}}\label{alg:forest_locpol}
  \SetKwInOut{Input}{Input}
  \SetKwInOut{Parameters}{\normalfont \textit{Tuning}}
  \SetKwInOut{Output}{Output}

  \Input{Data
    $(X_1,\widehat{\Phi}_n(X_1)),\ldots,(X_n,\widehat{\Phi}_n(X_n))$,
    order $k$, direction $v$}
  \Parameters{Penalty $\lambda$, \texttt{rf} parameters $\Gamma$}
  \Output{Directional derivative estimates
    $\widehat{D_v^k\Phi}_n(X_1),\ldots,\widehat{D_v^k\Phi}_n(X_n)$}

  \BlankLine

  $\widehat{\mu}\gets$ \texttt{rfpoly}-$v$ on
  $(X_1,\widehat{\Phi}_n(X_1)),\ldots,(X_n,\widehat{\Phi}_n(X_n))$ with
  parameters $\Gamma$
  
  Extract weight matrix $W=(w_{i}(X_{\ell}))_{i, \ell}$ from $\widehat{\mu}$
  
  \For{$i\in\{1,\ldots,n\}$}{
    $\hat{\beta}\gets $ coefficients of
      order $q+1$ local polynomial fit in \eqref{eq:local_polynomial_penalized}
      with penalty $\lambda$
    
    $\widehat{D_v^k\Phi}_n(X_i)\gets k!\,\hat{\beta}_{i,k}$
  }
\end{algorithm}

\subsubsection{Additional regularization and tuning of
  hyperparameters}\label{sec:parameter_tuning}

Since the function $\Phi_0$ is assumed to be continuously
differentiable up to order $q$, it can be beneficial to regularize the
local polynomial estimate in \eqref{eq:local_polynomial_fit} to ensure
the derivatives become smoother. We propose to do this by estimating
all coefficients $\hat{\beta}=(\hat{\beta}(1),\ldots,\hat{\beta}(n))$
simultaneously by minimizing the penalized weighted mean squared loss
\begin{equation}
  \label{eq:local_polynomial_penalized}
  \hat{\beta}\coloneqq\argmin_{\beta\in\mathbb{R}^{n\times (q+1)}}\sum_{i,\ell=1}^{n}\left(\widehat{\Phi}_n(X_\ell)-\sum_{j=0}^{q+1}\beta_{i,j}((X_{\ell}-X_i)^{\top}
    v)^{j}\right)^2 W_{i, \ell} + \lambda P_W(\beta),
\end{equation}
where the penalty term $P_W$ is defined for all
$\beta\in\mathbb{R}^{n\times (q+1)}$ by
\begin{equation*}
    P_W(\beta)\coloneqq\sum_{i=1}^n\sum_{j=1}^{q+1}\left(\sum_{\ell=1}^n(j!\,\beta_{i,
      j}-j!\,\beta_{\ell, j})W_{i, \ell}\right)^2.
\end{equation*}
By \eqref{eq:locpol_deriv_est} the term $j!\,\beta_{i,j}$ parametrizes
the derivative $D_v^{j}\Phi_0(X_i)$, which implies that the penalty
term $P_W$ penalizes large differences between the derivatives at each
point $i$ and the locally averaged derivatives close to $i$. This
penalty therefore enforces smoothness of the derivatives.

Including this penalization the full \texttt{RFLocPol} procedure
depends on two types of hyperparameters; the penalty parameter
$\lambda$ from the penalized local polynomial and the random forest
parameters $\Gamma$ used to fit the random forests
\texttt{rfpoly}-$v$. Both parameters substantially affect the
performance of the overall procedure and need to be selected
carefully. We suggest a heuristic tuning procedure that selects
optimal parameters $(\Gamma^*, \lambda^*)$ from a $K$-tuple
$(\Gamma_1,\ldots,\Gamma_K)$ of random forest parameters and a
$L$-tuple $(\lambda_1,\ldots,\lambda_L)$ of penalty parameters. For
this we assume that the tuples are both ordered with decreasing
regularization stength, i.e., $\Gamma_i$ regularizes more than
$\Gamma_j$ and $\lambda_i>\lambda_j$ for all $i<j$. For the random
forest parameters, we could for example use an increasing sequence of
maximal depths or decreasing minimal node sizes. We then apply
\texttt{RFLocPol} for all different parameter settings and select the
parameters for which the local polynomial estimates
$\hat{\beta}_0(1),\ldots,\hat{\beta}_0(n)$ are not significantly worse
than $\widehat{\Phi}_n(X_1),\ldots,\widehat{\Phi}_n(X_n)$, measured by
a given loss function. Full details on this tuning procedure are
provided in Algorithm~\ref{alg:parameter_tuning} in Supplementary
material~\ref{sec:additional_algorithms}.

\subsection{Xtrapolation}\label{sec:xtrapolation_estimator}

We now adapt the plug-in estimates for the extrapolation bounds from
Section~\ref{sec:consistency} to use the forest-weighted local
polynomial derivative estimates from Section~\ref{sec:forest_locpol}
in a computationally efficient way. This leads to a procedure, which
we call \methodname/, that can estimate the extrapolation bounds from
arbitrary and potentially non-differentiable pilot estimates
$\widehat{\Phi}_n$.

Since the \texttt{RFLocPol} procedure estimates directional
derivatives it does not directly apply to the plug-in estimates in
\eqref{eq:vanilla_lower_bound_estimator} and
\eqref{eq:vanilla_upper_bound_estimator} which are expressed in terms
of partial derivatives. A workaround is to consider plug-in estimates
based on directional derivatives instead, however this relies on
computing directional derivatives in $n$ different directions which
would involve $n$ random forest fits. As this is computationally
infeasible in practice, we only focus on two special cases: The
order-one case (i.e., $q=1$ and arbitrary $d$) and the one-dimensional
case (i.e., $d=1$ and arbitrary $q$). In both cases the partial
derivatives correspond to directional derivatives and hence the
plug-in estimates can be combined with \texttt{RFLocPol}. More
specifically, for the order-one case the plug-in estimates only
involve first order partial derivatives
$\partial_1\widehat{\Phi}_n,\ldots,\partial_d\widehat{\Phi}_n$ which
are equal to the directional derivatives in the directions
$v\in\{e_1,\ldots,e_d\}$. Similarly, for the one-dimensional case all
involved partial derivatives correspond to the directional derivatives
in the direction $v=1$.

When using \texttt{RFLocPol} to estimate derivatives it can happen
that the derivative estimates are not equal to the derivatives of the
original estimate (assuming they even exist). As a consequence, it is
no longer guaranteed that the lower extrapolation bound estimate is
smaller than the upper bound estimate. To enforce this constraint, we
propose to check at a specific target point whether the lower estimate
is indeed smaller than the upper estimate and if not to set both
estimates to the average of the lower and upper extrapolation bound
estimate. The full \methodname/ procedure for the order-one case is
detailed in Algorithm~\ref{alg:xtrapolation_orderone}. The version for
the one-dimensional case is very similar and provided in
Algorithm~\ref{alg:xtrapolation_onedim} in Supplementary
material~\ref{sec:additional_algorithms}.

\begin{algorithm}
\caption{\methodname/ (order-one version)}\label{alg:xtrapolation_orderone}
\SetKwInOut{Input}{Input}
\SetKwInOut{Parameters}{\normalfont \textit{Tuning}}
\SetKwInOut{Output}{Output}

\Input{Estimates $\widehat{\Phi}_n(X_1),\ldots,\widehat{\Phi}_n(X_n)$,
  data $X_1,\ldots,X_n$, target points $\bar{x}_1,\ldots,\bar{x}_m$}

\Parameters{Penalty $\lambda$, \texttt{rf} parameters $\Gamma$}

\Output{Extrapolation bound estimates
  $\widehat{B}^{\operatorname{lo}}(\bar{x}_1),\ldots,
  \widehat{B}^{\operatorname{lo}}(\bar{x}_m)$,
  $\widehat{B}^{\operatorname{up}}(\bar{x}_1),\ldots,
  \widehat{B}^{\operatorname{up}}(\bar{x}_m)$}

\BlankLine

$\mathcal{D}\gets (X_1,\widehat{\Phi}_n(X_1)),\ldots,
(X_1,\widehat{\Phi}_n(X_n))$

\For{$j\in\{1,\ldots,d\}$}{
  $\widehat{\partial_j\Phi}_n(X_1),\ldots,\widehat{\partial_j\Phi}_n(X_n)\gets
  \texttt{RFLocPol}(\mathcal{D}, k=1, v=e_j, \lambda=\lambda,
  \Gamma=\Gamma)$
}

\For{$\ell\in\{1,\ldots,m\}$}{
  \For{$i\in\{1,\ldots,n\}$}{
    $S\gets \left\{\sum_{j=1}^d\widehat{\partial_j\Phi}_n(X_1)^{\top}(\bar{x}_{\ell}^j-X_i^j),\ldots,
      \sum_{j=1}^d\widehat{\partial_j\Phi}_n(X_n)^{\top}(\bar{x}_{\ell}^j-X_i^j)\right\}$
    
    $B_i^{\operatorname{lo}}\gets \widehat{\Phi}_n(X_i) + \min(S)
    \quad\text{and}\quad
    B_i^{\operatorname{up}}\gets \widehat{\Phi}_n(X_i) + \max(S)$
  }
  $\widehat{B}^{\operatorname{lo}}(\bar{x}_{\ell})\gets\max_{i}B_i^{\operatorname{lo}}
  \quad\text{and}\quad
  \widehat{B}^{\operatorname{up}}(\bar{x}_{\ell})\gets\min_{i}B_i^{\operatorname{up}}$
  
  \If{$\widehat{B}^{\operatorname{lo}}(\bar{x}_{\ell})> \widehat{B}^{\operatorname{up}}(\bar{x}_{\ell})$}{
    $\widehat{B}^{\operatorname{lo}}(\bar{x}_{\ell})\gets\big(\widehat{B}^{\operatorname{lo}}(\bar{x}_{\ell})
    + \widehat{B}^{\operatorname{up}}(\bar{x}_{\ell}))/2
    \quad\text{and}\quad
    \widehat{B}^{\operatorname{up}}(\bar{x}_{\ell})\gets\big(\widehat{B}^{\operatorname{lo}}(\bar{x}_{\ell})
    + \widehat{B}^{\operatorname{up}}(\bar{x}_{\ell}))/2$
  }

}
\end{algorithm}

\subsubsection{Computational speed up}\label{sec:computational_speedup}

We now consider a modification of the default \methodname/ procedure
that speeds up the computation in settings with large sample sizes $n$
and where one needs to estimate the extrapolation bounds at many
target points $m$. In those cases the default \methodname/ procedure
can be computationally expensive due to the $\mathcal{O}(nm)$
complexity. This can be reduce by only considering a subset of all
possible anchor points (i.e., the loop in line 6 of
Algorithm~\ref{alg:xtrapolation_orderone}). A naive approach could be
to simply subsample random anchor points and use those, however as is
clear in the one-dimensional case the bounds are generally tighter for
anchor points close the target point. Therefore, it can be beneficial
to subselect the anchor points by considering a notion of closeness to
the target points. While in one-dimensional settings using the
Euclidean distance is an obvious choice, it becomes more subtle in
multi-dimensional settings. This is because points that are far away
in Euclidean distance may have tight bounds if they are only far away
in directions in which the variance of the observed directional
derivatives is small. Therefore, to capture a more meaningful notion
of closeness (in the $q=1$ case), we propose to use a first-order
derivative scaled Euclidean distance. More specifically, denote by
$\widehat{\nabla\Phi}_n(X)\in\mathbb{R}^{n\times d}$ the matrix where
the $(i,j)$-th entry is $\widehat{\partial_j\Phi}_n(X_i)$ and let
$V\Sigma V^{\top}$ be the eigenvalue decomposition of the estimated
covariance
$$\widehat{\nabla\Phi}_n(X)^{\top}\widehat{\nabla\Phi}_n(X) -
\left(\frac{1}{n}\sum_{i=1}^n\widehat{\nabla\Phi}_n(X_i)\right)^{\top}\left(\frac{1}{n}\sum_{i=1}^n\widehat{\nabla\Phi}_n(X_i)\right).$$
To measure closeness of a sample point $X_i$ to a target point
$\bar{x}_k$, we propose to use the distance
$\|V\Sigma^{1/2}(X_i - \bar{x}_k)\|_2$. Intuitively, this distance is
larger in directions in which the derivatives change a lot and small
in directions in which the derivatives remain fixed.  As an
alternative, one can also use a distance measure induced by a random
forest, as proposed in the following section.

\subsubsection{Allowing for categorical covariates}\label{sec:categorical_covariates}

In some applications, not all covariates $X$ are continuous, which
means that the framework does not apply directly. However, it can be
adapted in settings where $X=(Z, W)$ with
$Z\in\mathcal{Z}\subseteq\mathbb{R}^{d_Z}$ continuous and
$W\in\mathcal{W}\subseteq\mathbb{R}^{d_W}$ categorical, as long as no
extrapolation occurs in the categorical predictors. The idea is to
apply the framework conditional on $W$. More specifically, we can
assume that for all $w\in\mathcal{W}$ the conditional function
$\Phi_0$ satisfies that $z\mapsto\Phi_0((z, w))$ is $q$-derivative
extrapolating and derive the same extrapolation bounds but conditional
on $W=w$. A straightforward modification of
Algorithm~\ref{alg:xtrapolation_orderone} is to select the anchor
points used in the loop of line 6 based on random forest weights. If
the random forest is grown sufficiently deep, one can expect that
samples with different $W$ values for which $\Phi_0$ is sufficiently
different will have small weights and hence not be used as anchor
points. This approach is used in the real data example in
Section~\ref{sec:exp_real} below.

\section{Numerical experiments}\label{sec:numerical_experiments}

We now present numerical experiments in which we investigate the
performance of the proposed estimation procedure
(Section~\ref{sec:exp_simulations}) and demonstrate possible
applications of the extrapolation bounds on real data
(Section~\ref{sec:exp_real}). All experiments can be reproduced using
the publicly available code
at
\url{https://github.com/NiklasPfister/ExtrapolationAware-Inference},
which includes an easy-to-use function to apply \texttt{Xtrapolation}
in other settings, too. For the regressions and cross-validation we
used the Python packages \texttt{scikit-learn} \citep{scikit-learn}
and \texttt{quantile-forest} \citep{Johnson2024}. 

\subsection{Simulation experiments}\label{sec:exp_simulations}

We begin by empirically analyzing the proposed \texttt{Xtrapolation}
procedure on simulated data. To this end, we consider random data
generating models for different sample sizes $n$ and dimensions
$d$. For each simulation, we randomly choose $\Din\subseteq[-2, 2]^d$
and $f:[-2, 2]^d\rightarrow\mathbb{R}$ and then generate $n$ i.i.d.\
copies $(X_1,Y_1),\ldots(X_{n}, Y_{n})$ of
$(X, Y)\in[-2, 2]^{d}\times\mathbb{R}$ defined via
\begin{equation*}
  X\sim \operatorname{Unif}(\Din)
  \quad\text{and}\quad
  Y = f(X) + \tfrac{1}{10}\varepsilon,
\end{equation*}
where $\varepsilon\sim\mathcal{N}(0, 1)$ independent of $X$. The way
we select the set $\Din$ and the function $f$ is such that the
extrapolation assumptions are satisfied and it is easy to interpret
the results. More specifically, we select $\Din$ and $f$ sequentially
as follows.
\begin{itemize}
\item[(1)] \emph{Selection of $\Din$:} Define the intervals
  $I_1\coloneqq[-2, -1)$, $I_2\coloneqq[-1, 0)$, $I_3\coloneqq[0, 1)$
  and $I_4\coloneqq[1, 2]$. Sample $d$ sets $C_1,\ldots,C_d$ uniformly
  from $\{[-2, 2]\setminus I_1,\ldots, [-2, 2]\setminus I_4\}$ (with
  replacement) and define $\Din\coloneqq \bigtimes_{j=1}^dC_j$ and
  $\Dout\coloneqq [-2,2]^d\setminus\Din$.
\item[(2)] \emph{Selection of $f$:} Let $f$ be a piecewise linear
  function in the first coordinate such that for all $x\in[-2,2]^d$ it
  holds
  \begin{equation*}
    f(x)=\sum_{j=1}^4(s_jx^1 + c_j)\mathds{1}_{I_j}(x^1),
  \end{equation*}
  where $s_1,\ldots,s_4$ are drawn randomly (see
  Supplementary material~\ref{sec:details_slopes}) and
  $c_1,\ldots,c_4$ are selected such that $f$ is continuous and
  $f(-2)=0$. Importantly, the slopes $s_1,\ldots,s_4$ are drawn in
  such a way that $f\triangleleft_{\Din}^1f$ (almost every) and
  $\operatorname{var}(f(U))=1$ for $U$ uniform on $[-2, 2]^d$.
\end{itemize}
This sampling procedure leads to models for which the conditional
expectation $\Psi_0$ of $Y$ give $X$ is first derivative extrapolating
almost everywhere. Furthermore, there are two types of extrapolation
scenarios that can happen depending on $\Din$ and $f$. Either the
lower and upper extrapolation bounds are equal on $\Dout$ implying
that the extrapolation assumption identifies $\Psi_0$ on $\Dout$ or
they do not coincide on $\Dout$ in which case $\Psi_0$ is not
identifiable on $\Dout$. The first case occurs if one of the inner
intervals (i.e., $I_2$ or $I_3$) is missing in the first coordinate
and $f$ has the smallest or largest slope on that interval. Examples
are shown in Figure~\ref{fig:details_slopes} in Supplementary
material~\ref{sec:details_slopes}.

For the following experiments, we generate $50$ datasets for all
combinations of
$n\in\{100,\allowbreak 200,\allowbreak 400,\allowbreak 600,\allowbreak
800,\allowbreak 1600\}$ and $d\in\{2, 8\}$, where each dataset is
sampled with randomly selected $\Din$ and $f$ as described above. We
then consider four regression procedures: Random forest regression
(\texttt{rf}), support vector regression (\texttt{svr}), neural
network regression (\texttt{mlp}) and ordinary least square regression
(\texttt{ols}). For all procedures -- except \texttt{ols} -- we tune
hyperparameters using a $5$-fold cross-validation and additionally
screen for variables using random forest based Gini impurity (see Supplementary material~\ref{sec:details_reg_tuning}). On top of each
regression fit, we then apply \texttt{Xtrapolation} with order $q=1$
and the parameter tuning discussed in
Section~\ref{sec:parameter_tuning} (see
Supplementary material~\ref{sec:details_xtra}) to
predict the lower and upper extrapolation bounds
$\widehat{B}^{\operatorname{lo}}_{n}(x)$ and
$\widehat{B}^{\operatorname{up}}_{n}(x)$ for all
$x\in\widehat{\mathcal{D}}_{\operatorname{in}}\cup\widehat{\mathcal{D}}_{\operatorname{out}}$,
where $\widehat{\mathcal{D}}_{\operatorname{in}}$ consists of $200$ uniformly
sampled points on $\Din$ and $\widehat{\mathcal{D}}_{\operatorname{out}}$ consists
of $200$ uniformly sampled points on $\Dout$.

\begin{figure}[t]
  \centering
  \includegraphics{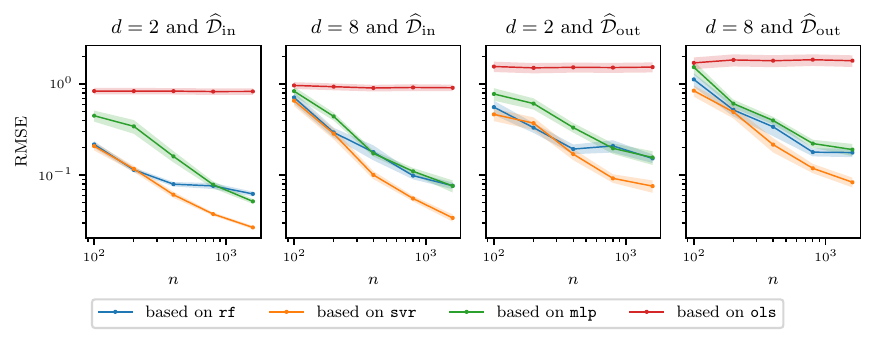}
  \caption{Accuracy of estimated extrapolation bounds measured using
    RMSE given in \eqref{eq:rmse_consistency}. For all three
    nonparametric regression procedures the RMSE decays with
    increasing $n$. As expected the extrapolation bounds based on
    \texttt{ols} do not decay as a linear regression cannot
    approximate the piecewise linear conditional expectations on
    $\Din$.}
  \label{fig:consistency_extrapolation_bound}
\end{figure}

\paragraph{Estimation accuracy of extrapolation bounds}

We first assess how accurately the extrapolation bounds for different
regression procedures are estimated and empirically validate
Theorem~\ref{thm:consistency}. We evaluate the accuracy of the
estimated bounds by comparing them with oracle extrapolation bounds
(i.e., using the true function $f$ but only optimizing over the anchor
points $X_1,\ldots,X_n$) evaluated at the same new
observations. Formally, we consider the root mean squared error (RMSE)
given by
\begin{equation}
  \label{eq:rmse_consistency}
  \sqrt{\frac{1}{200}\sum_{x\in\mathcal{D}}
    \left(\widehat{B}^{\operatorname{lo}}_{n}(x)
      -
      B^{\operatorname{lo}}_{f,\{X_1,\ldots,X_n\}}(x)\right)^2}
  + \sqrt{\frac{1}{200}\sum_{x\in\mathcal{D}}\left(\widehat{B}^{\operatorname{up}}_{n}(x)
      - B^{\operatorname{up}}_{f,\{X_1,\ldots,X_n\}}(x)\right)^2},
\end{equation}
where $\mathcal{D}=\widehat{\mathcal{D}}_{\operatorname{in}}$ or
$\mathcal{D}=\widehat{\mathcal{D}}_{\operatorname{out}}$. The results
are shown in Figure~\ref{fig:consistency_extrapolation_bound}. As
expected the extrapolation bounds based on \texttt{ols} do not
converge. In contrast, for all three nonparametric estimators
\texttt{rf}, \texttt{svr} and \texttt{mlp} the extrapolation bounds
converge both on $\Din$ and $\Dout$. Moreover, while the increased
dimension leads to slightly worse accuracy the RMSE decays at a
similar rate. This is particularly interesting as the
\texttt{Xtrapolation} procedure does not explicitly take sparsity into
account, but appears to automatically adapt to the sparsity in the
regression estimates (due to the variable screening). We see this as
promising empirical evidence that the random forest weights ensure
that the derivatives are estimated well even in multiple dimensions.

\paragraph{Out-of-support prediction}
Next we show how the extrapolation bounds can be used to construct
regression-agnostic predictions on $\Dout$ that are worst-case optimal
as discussed in Section~\ref{sec:out-of-support-prediction}. We use
the same simulations with $n=1600$ and $d=2$ but now additionally
estimate \eqref{eq:point_estimate_oos} in
Section~\ref{sec:out-of-support-prediction} by
\begin{equation}
  \label{eq:predicted_values}
  \widehat{f}_{\operatorname{xtra}}(x)=\frac{\widehat{B}_n^{\operatorname{lo}}(x)+\widehat{B}_n^{\operatorname{up}}(x)}{2}.
\end{equation}
We then compare this estimate with regression estimates
$\widehat{f}_{\operatorname{reg}}$ resulting from the plain
regressions. We evaluate the performance using the worst-case RMSE
given by
\begin{equation}
  \label{eq:worst_case_rmse}
  \frac{1}{|\mathcal{D}|}\sum_{x\in\mathcal{D}}\sup_{Q\in\mathcal{Q}_0}\sqrt{\mathbb{E}_{Q(x,\cdot)}[(Y_x-\widehat{f}(x))^2]},
\end{equation}
where
$\widehat{f}\in\{\widehat{f}_{\operatorname{xtra}},
\widehat{f}_{\operatorname{reg}}\}$ and
$\mathcal{D}\in\{\widehat{D}_{\operatorname{in}},
\widehat{D}_{\operatorname{out}}\}$. As its not directly possible to
evaluate this loss, we use that the worst-case $Q\in\mathcal{Q}_0$ is
attained at the true extrapolation bounds (see proof of
Proposition~\ref{thm:worst_case_prediction}) and then approximate the
loss using the oracle extrapolation bounds. The results are shown in
Figure~\ref{fig:out_of_support_prediction}, We additionally
distinguish between identifiable and non-identifiable extrapolation
settings (i.e., where $\Psi_0$ is identified on $\Dout$ and where not)
by splitting the $50$ simulations into either identifiable ($22$
simulations) and unidentifiable settings ($28$ simulations) depending
on whether the oracle bounds are approximately equal on the evaluated
points or not. We observe that on $\Din$ both plain regression and
\texttt{Xtrapolation} perform similarly. The largest difference occurs
for \texttt{rf}, which makes sense as \texttt{Xtrapolation} smooths
the estimates which has almost no effect on the already smooth
\texttt{svr} and \texttt{mlp} estimates but slightly improves the
non-smooth \texttt{rf} estimates. Furthermore, while the regression
estimates extrapolate differently on $\Dout$, the differences
disappear after applying \texttt{Xtrapolation} which is expected as
the extrapolation estimates the same target quantities in all cases.

\begin{figure}[t]
  \centering
  \includegraphics[width=\textwidth]{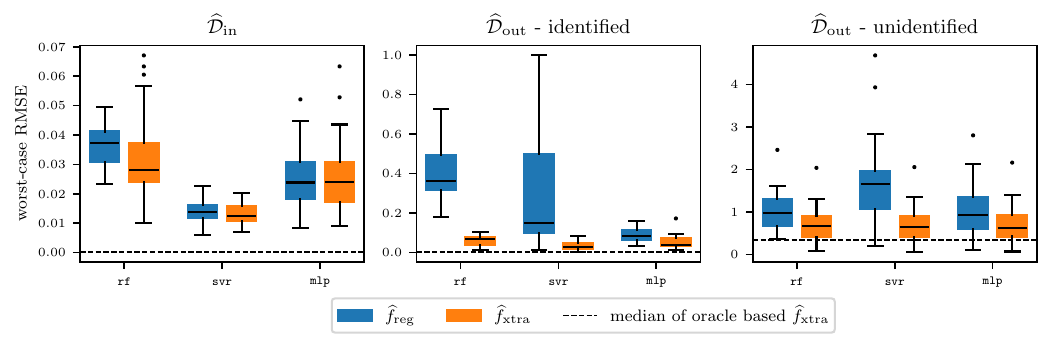}
  \caption{Comparison of worst-case RMSE (see \eqref{eq:worst_case_rmse}) for
    $\widehat{f}_{\operatorname{xtra}}$ and $\widehat{f}_{\operatorname{reg}}$ on
    $\Din$ (left) and $\Dout$ (middle and right). The out-of-support
    predictions are separated into simulations for which the oracle
    lower and upper bound agree, i.e., the conditional expectation is
    identified (middle) and those that are not (right). While the
    out-of-support predictions of plain regression
    $\widehat{f}_{\operatorname{reg}}$ depend on the used regression
    procedure this is not the case for the estimates
    $\widehat{f}_{\operatorname{xtra}}$ based on
    $\texttt{Xtrapolation}$.}
  \label{fig:out_of_support_prediction}
\end{figure}

\paragraph{Quantifying extrapolation}
Finally, we consider the proposed extrapolation score from
Section~\ref{sec:extrapolation_score}. To this end, we consider the
simulations with $n=1600$ and $d\in\{2, 8\}$. We estimate the
extrapolation scores for all
$x\in\widehat{\mathcal{D}}_{\operatorname{in}}\cup\widehat{\mathcal{D}}_{\operatorname{out}}$
by
\begin{equation*}
  \widehat{S}(x)\coloneqq \frac{\widehat{B}^{\operatorname{up}}_{n}(x)-\widehat{B}^{\operatorname{lo}}_{n}(x)}{\widehat{\sigma}_{\operatorname{CV}}},
\end{equation*}
where $\widehat{\sigma}_{\operatorname{CV}}$ is the square root of the
cross-validation generalization error computed for the regression
method used to estimate the extrapolation bounds. As a benchmark, we
additionally compute a minimal Euclidean distance defined for all
$x\in\widehat{\mathcal{D}}_{\operatorname{in}}\cup\widehat{\mathcal{D}}_{\operatorname{out}}$
by
\begin{equation*}
  \widehat{E}(x)\coloneqq \min_{i\in\{1,\ldots,n\}}\|x-X_i\|_2.
\end{equation*}
To compare how well $\widehat{S}$ and $\widehat{E}$ capture
extrapolation, we compute for all thresholds $\lambda\in[0,\infty)$
(i) the fraction of observations in
$\widehat{\mathcal{D}}_{\operatorname{in}}\cup\widehat{\mathcal{D}}_{\operatorname{out}}$
which have an extrapolation score below $\lambda$ and (ii) the
cumulative RMSE of the predictions $\widehat{f}_{\operatorname{xtra}}$
defined in \eqref{eq:predicted_values} at all points with a score
below $\lambda$, i.e.,
\begin{equation*}
  \sqrt{|\{x\in\widehat{\mathcal{D}}_{\operatorname{in}}\cup\widehat{\mathcal{D}}_{\operatorname{out}}\mid
    \widehat{\operatorname{score}}(x)\leq\lambda\}|^{-1}\sum_{x\in\widehat{\mathcal{D}}_{\operatorname{in}}\cup\widehat{\mathcal{D}}_{\operatorname{out}}}\left(\widehat{f}_{\operatorname{xtra}}(x)-f(x)\right)^2\mathds{1}(\widehat{\operatorname{score}}(x)\leq\lambda)},
\end{equation*}
where $\widehat{\operatorname{score}}=\widehat{S}$ or
$\widehat{\operatorname{score}}=\widehat{E}$. We use
$\hat{f}_{\operatorname{xtra}}$, here as the plain regression
estimates may behave arbitrary outside of $\Dout$. The results are
shown in Figure~\ref{fig:quantify_extrapolation} (left and
middle). For all regression methods the cumulative RMSE increases
sharply after $0.5$ when sorted according to $\widehat{E}$. This makes
sense as $\widehat{E}$ only separates $\Din$ from $\Dout$ but does not
take into account whether the function might also be accurate on
$\Dout$. In contrast, the extrapolation score $\widehat{S}$ also
separates points on $\Dout$ for which the predictions are expected to
be good.  We further separate points with $\widehat{S}\leq 1$ for
which the extrapolation error is of smaller order than residual noise
level and points with $\widehat{S}>1$ for which extrapolation error is
of larger order. The aggregated RMSEs for these splits are shown in
Figure~\ref{fig:quantify_extrapolation} (right). As expected the RMSE
is small (and on the order of the residual noise level) when
$\widehat{S}\leq 1$ and becomes large if $\widehat{S}> 1$.

\begin{figure}[t]
  \centering
  \includegraphics[width=\textwidth]{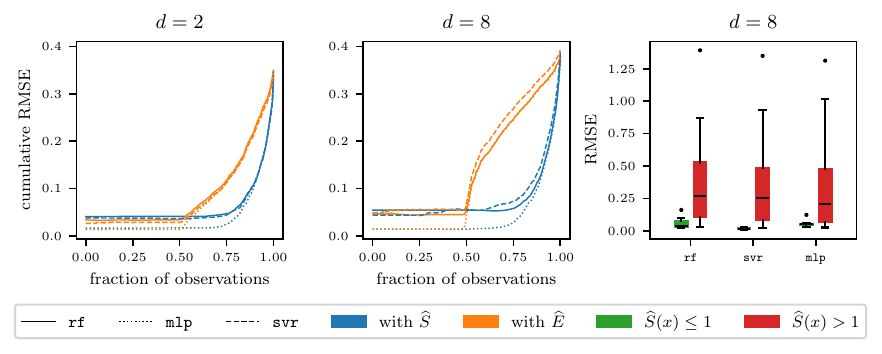}
  \caption{(left and middle) Comparison of extrapolation score
    $\widehat{S}$ and a Euclidean based benchmark score
    $\widehat{E}$. While $\widehat{E}$ is only able to separate $50\%$
    of the samples (which corresponds to the points in $\Din$) before
    the cumulative RMSE increases, the extrapolation score
    $\widehat{S}$ is able to separate approximately $75\%$ (hence also
    points in $\Dout$) before the cumulative RMSE increases. (right)
    Comparison of the RMSE for points with $\widehat{S}\leq 1$ versus
    points with $\widehat{S}>1$ (the value 1 corresponds to points
    where the extrapolation error is equal to the residual noise
    level).}
  \label{fig:quantify_extrapolation}
\end{figure}

\subsection{Extrapolation-aware prediction intervals on real
  data}\label{sec:exp_real}

In this section we illustrate how explicitly taking into account
extrapolation can improve uncertainty quantification and help detect
when nonparametric prediction procedures are extrapolating. We
consider two datasets: (i) The \texttt{biomass} dataset due to
\citet{hiernaux2023}, where the task is predicting the foilage dry
mass of a tree from its crown area and (ii) the well-known
\texttt{abalone} dataset from the UCI ML repository
\citep{misc_abalone_1}, where the task is to predict the age of
abalone shells from several phenotype measurements (sex, length,
diameter, height, whole weight, shucked weight, viscera weight and
shell weight).

Throughout this section we fix $\alpha=0.2$. We consider four
different standard nonparametric methods to construct predictions
intervals: (i) Quantile regression forests
\citep{meinshausen2006quantile}, denoted by \texttt{qrf}, (ii)
quantile neural networks \citep{taylor2000quantile}, denoted by
\texttt{qnn}, (iii) conformalized quantile regression forests, denoted
by \texttt{cpqrf}, and (iv) conformalized quantile neural networks,
denoted by \texttt{cpqnn}.  The two conformalized methods are based on
\citet{romano2019conformalized} and calibrate the prediction intervals
from the corresponding quantile regression to have a finite sample
exact unconditional coverage guarantee similar to conventional
conformal prediction \citep{balasubramanian2014conformal}. We then
compare each of these methods with its extrapolation-aware
counterpart, denoted by \texttt{xtra-qrf}, \texttt{xtra-qnn},
\texttt{xtra-cpqrf} and \texttt{xtra-cpqnn} respectively, which is
constructed by applying \texttt{Xtrapolation} to the conditional
quantiles. More specifically, for all
$\star\in\{\texttt{qrf},\texttt{qnn},\texttt{cpqrf},\texttt{cpqnn}\}$
we compare
\begin{equation*}
  \widehat{C}_{\star}^{\mathrm{pred}}(x) \coloneqq
  \left[\widehat{\mathcal{T}}_{\star}^{\alpha/2}(x),
    \widehat{\mathcal{T}}_{\star}^{1-\alpha/2}(x)\right]
  \quad\text{with}\quad
  \widehat{C}_{\texttt{xtra-}\star}^{\mathrm{pred}}(x) \coloneqq
  \left[\widehat{B}^{\operatorname{lo}}_{\widehat{\mathcal{T}}^{\alpha/2}_{\star}}(x), \widehat{B}^{\operatorname{up}}_{\widehat{\mathcal{T}}^{1-\alpha/2}_{\star}}(x)\right],
\end{equation*}
where $\widehat{\mathcal{T}}_{\star}^{\alpha/2}$ and
$\widehat{\mathcal{T}}_{\star}^{1-\alpha/2}$ are estimates based on
quantile regression forest,
$\widehat{B}^{\operatorname{lo}}_{\widehat{\mathcal{T}}^{\alpha/2}_{\star}}(x)$
is the estimate of the lower extrapolation bound for
$\widehat{\mathcal{T}}_{\star}^{\alpha/2}$ and
$\widehat{B}^{\operatorname{up}}_{\widehat{\mathcal{T}}^{1-\alpha/2}_{\star}}(x)$
is the estimate of the upper extrapolation bound for
$\widehat{\mathcal{T}}_{\star}^{1-\alpha/2}$. As the data contains
point masses (the age variable in the \texttt{abalone} dataset is
discrete), we use averaged randomized prediction intervals to
calibrate the coverage to the precise level $\alpha$ (see
Supplementary material~\ref{sec:details_randomized_pi} for details).

For the comparison we generate two types of train and test splits for
both datasets: (i) \textit{Random splits} that randomly split the data
into $8$ approximately equally sized sets and (ii)
\textit{extrapolation splits} that split the data into $8$
approximately equally sized sets according to a predictor variable
(crown area for \texttt{biomass} and length for \texttt{abalone}). The
resulting coverage on each split is given in
Figure~\ref{fig:biomass_inter_vs_extra} (top for \texttt{biomass}
bottom for \texttt{abalone}).
\begin{figure}[t]
  \centering
  \includegraphics[width=\textwidth]{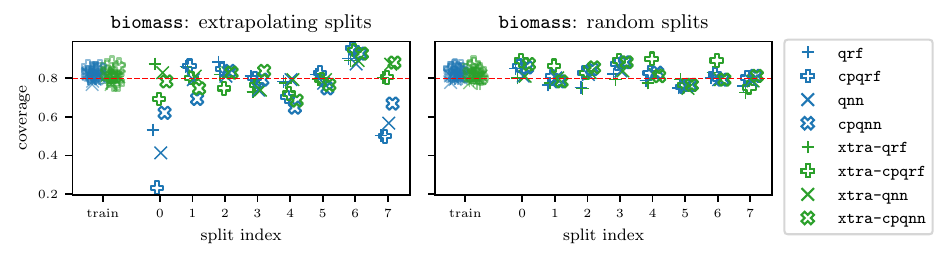}
  \includegraphics[width=\textwidth]{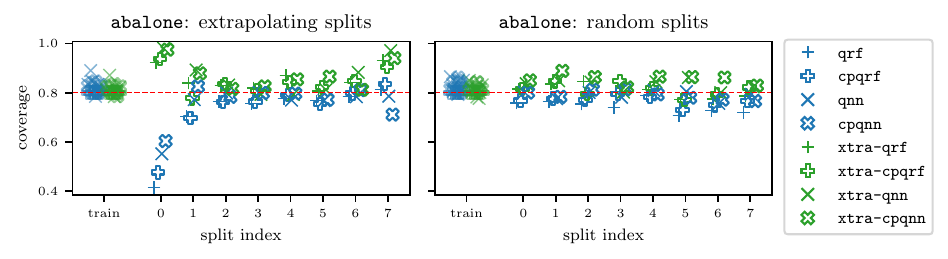}
  \caption{Coverage of prediction intervals on \texttt{biomass} and
    \texttt{abalone} datasets. Extrapolating splits are train test
    splits that leave out certain ranges of tree crown areas for
    \texttt{biomass} and length for \texttt{abalone}, while random
    splits are randomly drawn splits. While the standard prediction
    intervals (blue markers) under cover for some extrapolating
    splits, the extrapolation-aware counterparts (blue markers) guard
    against such under coverage.}
  \label{fig:biomass_inter_vs_extra}
\end{figure}
While the standard prediction interval estimates (blue markers) have
good coverage for random splits, they under cover for some of the
extrapolation splits. The reason is that they are not intended to work
outside of the support and will behave differently depending on the
underlying regression procedure (e.g., tree-based models extrapolate
constant and neural networks linearly). For \texttt{qrf} applied to
\texttt{biomass} this can be seen in
Figure~\ref{fig:quantile_scatterplot}, where we plot the estimated
quantiles for both \texttt{qrf} and \texttt{xtra-qrf} (similar plots
for the other methods are provided in Supplementary
material~\ref{sec:additional_results}). The extrapolation-aware
prediction intervals (green markers) perform similar to standard
prediction intervals on random splits but avoid under coverage on the
extrapolation splits. The slightly conservative behavior of the
extrapolation-aware prediction intervals is expected since the
extrapolation bounds account for the uncertainty due to the
extrapolation. The fact that the coverage is preserved provides
empirical evidence that the proposed extrapolation assumption is
indeed satisfied on both datasets.

\begin{figure}[t]
  \centering
  \includegraphics{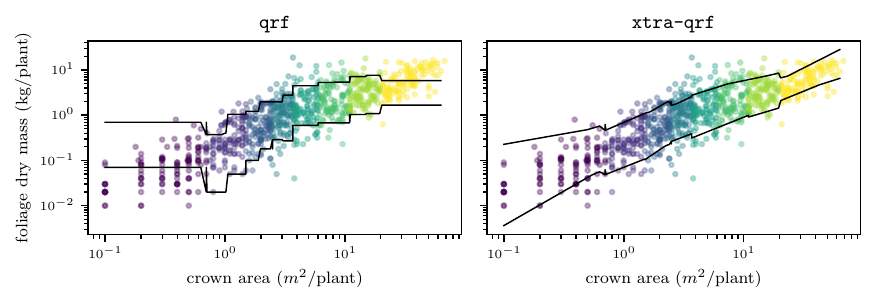}
  \caption{Estimated prediction intervals for \texttt{qrf} (left) and
    \texttt{xtra-qrf} (right) for the extrapolation split. Colors
    correspond to different extrapolation splits. While \texttt{qrf}
    always extrapolates constantly, the extrapolation-aware versions
    adapt to the changes observed on the training data.}
  \label{fig:quantile_scatterplot}
\end{figure}

We now show how the difference between lower and upper extrapolation
bounds can be used as an extrapolation score. To this end, we further
compute, in an $8$-fold cross-validation style (using the
extrapolation splits), for all
$\star\in\{\texttt{qrf},\texttt{qnn},\texttt{cpqrf},\texttt{cpqnn}\}$
and for each sample point $X_i$ the extrapolation scores
\begin{equation}
  \label{eq:extra-score-predint}
  \left(\widehat{B}_{
    \widehat{\mathcal{T}}^{\alpha/2}_{\star}}^{\operatorname{up}}(X_i)-\widehat{B}_{
    \widehat{\mathcal{T}}^{\alpha/2}_{\star}}^{\operatorname{lo}}(X_i)\right) + \left(\widehat{B}_{\widehat{\mathcal{T}}^{1-\alpha/2}_{\star}}^{\operatorname{up}}(X_i)-\widehat{B}_{\widehat{\mathcal{T}}^{1-\alpha/2}_{\star}}^{\operatorname{lo}}(X_i)\right),
\end{equation}
where the estimates in this expression are computed on all splits not
containing $X_i$. We then sort the samples according to this score
from small to large and estimate the coverage using a rolling window
(size $100$ for \texttt{biomass} and size $400$ for
\texttt{abalone}). The result is shown in
Figure~\ref{fig:coverage_extrapolation_score}.
\begin{figure}[h!]
  \centering
  \includegraphics{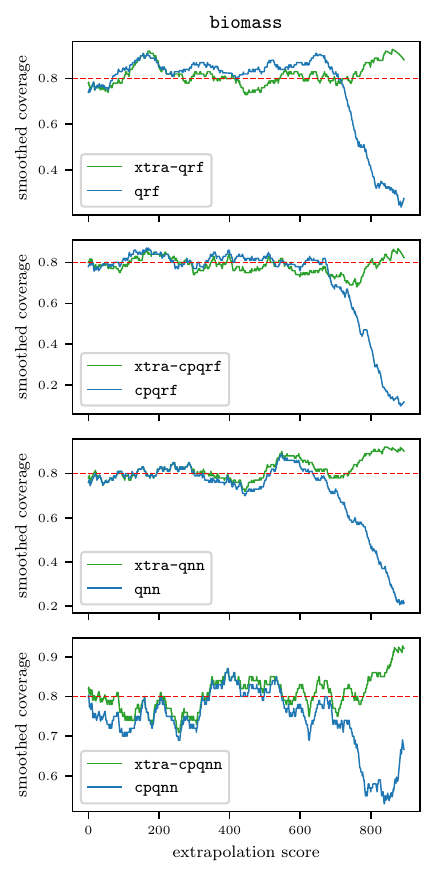}%
  \includegraphics{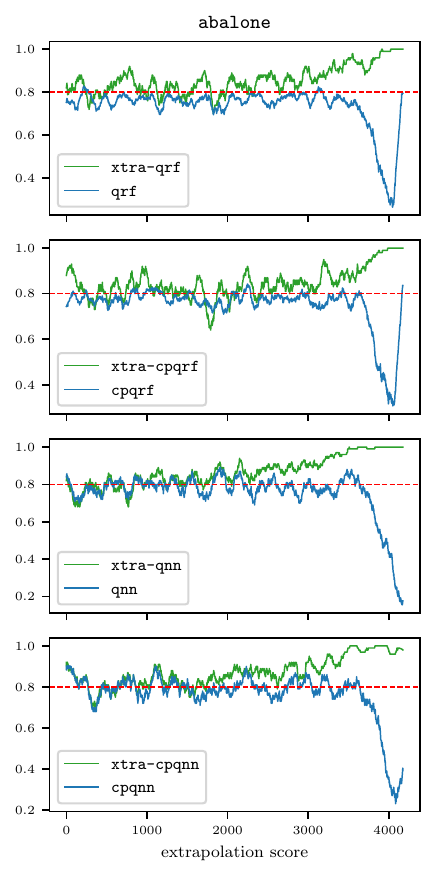}%
  \caption{Coverage computed based on a rolling window across
    observations sorted by the extrapolation score in
    \eqref{eq:extra-score-predint}. For observations with a small
    extrapolation score the coverage of standard prediction intervals
    is close to their extrapolation-aware counterparts. However, for
    large extrapolation scores, the coverages diverge. The
    extrapolation-aware prediction intervals are expected to remain
    valid or become conservative (i.e., above $0.8$) when the
    extrapolation score becomes large, while we have no guarantees on
    the extrapolation behavior of standard prediction intervals. }
  \label{fig:coverage_extrapolation_score}
\end{figure}
For both datasets and all methods the coverage remains similar for
small extrapolation scores but starts diverging for larger ones.
Importantly, while the standard prediction intervals start to under
cover, their extrapolation-aware counterparts only become
conservative. Figure~\ref{fig:biomass_inter_vs_extra} (extrapolating
split 7) and Figure~\ref{fig:coverage_extrapolation_score} (for high
extrapolation scores) further show that the extrapolation behavior on
\texttt{abalone} is substantially different between the \texttt{qrf}
and \texttt{qnn} based intervals (likely due to the difference in the
underlying function classes). In contrast and as expected from the
underlying theory, the extrapolation-aware counterparts appear to
extrapolate similarly regardless of the underlying method.

As an overall summary, we find that none of the standard methods or
methodologies for constructing prediction intervals (e.g., quantile
regression or conformalized versions of them) provides guarantees for
extrapolation and may fail severely when extrapolating. Our
extrapolation-aware framework, however, works well but is somewhat
conservative. While mathematical guarantees for extrapolation must
rely on uncheckable conditions, namely in our context that derivatives
are extrapolating, it is encouraging to see that the validation of
conditional-tailored coverage on real datasets supports our
theoretical results assuming an uncheckable extrapolation condition.

\section{Discussion}

We defined extrapolation as the process of performing inference on a
conditional function outside of the support of the conditioning
variable $X$. This type of extrapolation is, however, not directly
feasible in a conventional nonparametric sense as the data generating
distribution $P_0$ does not specify the conditional distribution
outside of the support of $X$. We therefore assumed the existence of a
Markov kernel that fully specifies the conditional -- also outside of
the support of $X$ -- but which might not be identified by $P_0$
alone. Then, by assuming that the conditional function behaves at most
as extreme on the entire domain as it does on the support of $X$, we
were able to construct extrapolation bounds on the conditional
function that are identified by $P_0$. We proposed to perform
inference on these extrapolation bounds instead of on the conditional
function directly. We emphasize that an extrapolation assumption is
needed: ours, assuming "at most as extreme derivatives" as in the
observed domain, seems natural and we gave some additional
interpretation after its Definition \ref{def:extrapolating_model}.

A key feature of our framework is that performing inference on the
extrapolation bounds instead of the conditional function (at least on
a population level) only affects the nonparametric analysis if
extrapolation occurs, since the lower and upper bound are both equal
to the conditional function on the support of $X$. This ensures
that the analysis is extrapolation-aware, which guards against
potential errors and can provide additional insights when one is
indeed extrapolating. Here, we considered three specific applications,
out-of-support prediction, uncertainty quantification and quantifying
extrapolation, but there are likely many more that could benefit
from this type of extrapolation-aware analysis. Even though the
extrapolation bounds are identified, they may be difficult to estimate
in practice. We propose a method that is able to estimate the bounds
in an estimator-agnostic way, which performs well in empirical
experiments. Importantly, this estimation procedure only takes the
estimated conditional function at the sample points as input (i.e.,
$(X_1,\widehat{\Phi}(X_1)),\ldots,(X_n,\widehat{\Phi}(X_n))$),
allowing practitioners to use their preferred nonparametric estimates
$\widehat{\Phi}$ to estimate the extrapolation bounds.

We hope the proposed framework will inspire new developments related
to extrapolation, which -- given its importance -- has received too
little attention in the wider statistics community so far. An
important direction of future work is to consider other types of
extrapolation assumptions. For example, one could consider shape
constraints on the conditional function, which can likely be
incorporated similarly to the derivative assumptions considered here.
Additionally, it would be interesting to see the proposed framework
applied across several applied domains in the style of the real data
analysis reported in Figure
\ref{fig:coverage_extrapolation_score}. This would provide valuable
insights into how realistic the extrapolation assumptions are and
whether there are variations that are more amenable in certain
applications.

\section*{Acknowledgements}

We thank Anton Rask Lundborg for helpful discussions as well as
Christian Igel and Pierre Hiernaux for providing the biomass data.  NP
was supported by a research grant (0069071) from Novo Nordisk
Fonden. The research was partially conducted during NP's research stay
at the Institute for Mathematical Research at ETH Z\"urich (FIM). PB
has received funding from the European Research Council (ERC) under
the European Union’s Horizon 2020 research and innovation programme
(grant agreement No 786461).

\bibliographystyle{abbrvnat}
\bibliography{refs}

\begin{thebibliography}{38}
\providecommand{\natexlab}[1]{#1}
\providecommand{\url}[1]{\texttt{#1}}
\expandafter\ifx\csname urlstyle\endcsname\relax
  \providecommand{\doi}[1]{doi: #1}\else
  \providecommand{\doi}{doi: \begingroup \urlstyle{rm}\Url}\fi

\bibitem[Athey et~al.(2019)Athey, Tibshirani, and Wager]{athey2019}
S.~Athey, J.~Tibshirani, and S.~Wager.
\newblock {Generalized random forests}.
\newblock \emph{The Annals of Statistics}, 47\penalty0 (2):\penalty0 1148 --
  1178, 2019.

\bibitem[Balasubramanian et~al.(2014)Balasubramanian, Ho, and
  Vovk]{balasubramanian2014conformal}
V.~Balasubramanian, S.-S. Ho, and V.~Vovk.
\newblock \emph{Conformal prediction for reliable machine learning: theory,
  adaptations and applications}.
\newblock Newnes, 2014.

\bibitem[Christiansen et~al.(2021)Christiansen, Pfister, Jakobsen, Gnecco, and
  Peters]{christiansen2021causal}
R.~Christiansen, N.~Pfister, M.~E. Jakobsen, N.~Gnecco, and J.~Peters.
\newblock A causal framework for distribution generalization.
\newblock \emph{IEEE Transactions on Pattern Analysis and Machine
  Intelligence}, 44\penalty0 (10):\penalty0 6614--6630, 2021.

\bibitem[Dai et~al.(2016)Dai, Tong, and Genton]{dai2016optimal}
W.~Dai, T.~Tong, and M.~G. Genton.
\newblock Optimal estimation of derivatives in nonparametric regression.
\newblock \emph{Journal of Machine Learning Research}, 17\penalty0
  (1):\penalty0 5700--5724, 2016.

\bibitem[De~Brabanter et~al.(2013)De~Brabanter, De~Brabanter, Gijbels, and
  De~Moor]{de2013derivative}
K.~De~Brabanter, J.~De~Brabanter, I.~Gijbels, and B.~De~Moor.
\newblock Derivative estimation with local polynomial fitting.
\newblock \emph{Journal of Machine Learning Research}, 14\penalty0
  (1):\penalty0 281--301, 2013.

\bibitem[Dong and Ma(2022)]{dong2022first}
K.~Dong and T.~Ma.
\newblock First steps toward understanding the extrapolation of nonlinear
  models to unseen domains.
\newblock In \emph{The Eleventh International Conference on Learning
  Representations}, 2022.

\bibitem[Efron(1981)]{efron1981nonparametric}
B.~Efron.
\newblock Nonparametric standard errors and confidence intervals.
\newblock \emph{Canadian Journal of Statistics}, 9\penalty0 (2):\penalty0
  139--158, 1981.

\bibitem[Folland(1999)]{folland1999real}
G.~B. Folland.
\newblock \emph{Real analysis: modern techniques and their applications},
  volume~40.
\newblock John Wiley \& Sons, 1999.

\bibitem[H\"ardle and Stoker(1989)]{hardle1989}
W.~H\"ardle and T.~M. Stoker.
\newblock Investigating smooth multiple regression by the method of average
  derivatives.
\newblock \emph{Journal of the American Statistical Association}, 84\penalty0
  (408):\penalty0 986--995, 1989.

\bibitem[Hiernaux et~al.(2023)Hiernaux, Issoufou, Igel, Kariryaa, Kourouma,
  Chave, Mougin, and Savadogo]{hiernaux2023}
P.~Hiernaux, H.~B.-A. Issoufou, C.~Igel, A.~Kariryaa, M.~Kourouma, J.~Chave,
  E.~Mougin, and P.~Savadogo.
\newblock Allometric equations to estimate the dry mass of sahel woody plants
  mapped with very-high resolution satellite imagery.
\newblock \emph{Forest Ecology and Management}, 529:\penalty0 120653, 2023.

\bibitem[H{\"o}rmander(2015)]{hormander2015analysis}
L.~H{\"o}rmander.
\newblock \emph{The analysis of linear partial differential operators {I}:
  {D}istribution theory and {F}ourier analysis}.
\newblock Springer, 2015.

\bibitem[Johnson(2024)]{Johnson2024}
R.~A. Johnson.
\newblock quantile-forest: A python package for quantile regression forests.
\newblock \emph{Journal of Open Source Software}, 9\penalty0 (93):\penalty0
  5976, 2024.
\newblock \doi{10.21105/joss.05976}.
\newblock URL \url{https://doi.org/10.21105/joss.05976}.

\bibitem[Klyne and Shah(2023)]{klyne2023average}
H.~Klyne and R.~D. Shah.
\newblock Average partial effect estimation using double machine learning.
\newblock \emph{arXiv preprint arXiv:2308.09207}, 2023.

\bibitem[Li and Heckman(2003)]{li2003local}
X.~Li and N.~E. Heckman.
\newblock Local linear extrapolation.
\newblock \emph{Journal of Nonparametric Statistics}, 15\penalty0
  (4-5):\penalty0 565--578, 2003.

\bibitem[Lin and Jeon(2006)]{lin2006random}
Y.~Lin and Y.~Jeon.
\newblock Random forests and adaptive nearest neighbors.
\newblock \emph{Journal of the American Statistical Association}, 101\penalty0
  (474):\penalty0 578--590, 2006.

\bibitem[Lundborg and Pfister(2023)]{lundborg2023perturbation}
A.~R. Lundborg and N.~Pfister.
\newblock Perturbation-based analysis of compositional data.
\newblock \emph{arXiv preprint arXiv:2311.18501}, 2023.

\bibitem[Mack and M{\"u}ller(1989)]{mack1989}
Y.~Mack and H.-G. M{\"u}ller.
\newblock Derivative estimation in nonparametric regression with random
  predictor variable.
\newblock \emph{Sankhy{\=a}: The Indian Journal of Statistics, Series A},
  51\penalty0 (1):\penalty0 59--72, 1989.

\bibitem[Masry and Fan(1997)]{masry1997}
E.~Masry and J.~Fan.
\newblock Local polynomial estimation of regression functions for mixing
  processes.
\newblock \emph{Scandinavian Journal of Statistics}, 24\penalty0 (2):\penalty0
  165--179, 1997.

\bibitem[Meinshausen(2006)]{meinshausen2006quantile}
N.~Meinshausen.
\newblock Quantile regression forests.
\newblock \emph{Journal of Machine Learning Research}, 7\penalty0
  (35):\penalty0 983--999, 2006.

\bibitem[Nash et~al.(1995)Nash, Sellers, Talbot, Cawthorn, and
  Ford]{misc_abalone_1}
W.~Nash, T.~Sellers, S.~Talbot, A.~Cawthorn, and W.~Ford.
\newblock {Abalone}.
\newblock UCI Machine Learning Repository, 1995.
\newblock {DOI}: https://doi.org/10.24432/C55C7W.

\bibitem[Pan and Yang(2009)]{pan2009survey}
S.~J. Pan and Q.~Yang.
\newblock A survey on transfer learning.
\newblock \emph{IEEE Transactions on knowledge and data engineering},
  22\penalty0 (10):\penalty0 1345--1359, 2009.

\bibitem[Pearl(2009)]{pearl2009causality}
J.~Pearl.
\newblock \emph{Causality}.
\newblock Cambridge University Press, 2009.

\bibitem[Pedregosa et~al.(2011)Pedregosa, Varoquaux, Gramfort, Michel, Thirion,
  Grisel, Blondel, Prettenhofer, Weiss, Dubourg, Vanderplas, Passos,
  Cournapeau, Brucher, Perrot, and Duchesnay]{scikit-learn}
F.~Pedregosa, G.~Varoquaux, A.~Gramfort, V.~Michel, B.~Thirion, O.~Grisel,
  M.~Blondel, P.~Prettenhofer, R.~Weiss, V.~Dubourg, J.~Vanderplas, A.~Passos,
  D.~Cournapeau, M.~Brucher, M.~Perrot, and E.~Duchesnay.
\newblock Scikit-learn: Machine learning in {P}ython.
\newblock \emph{Journal of Machine Learning Research}, 12\penalty0
  (85):\penalty0 2825--2830, 2011.

\bibitem[Pfister et~al.(2019)Pfister, Bauer, and
  Peters]{pfister2019causalkinetix}
N.~Pfister, S.~Bauer, and J.~Peters.
\newblock Learning stable and predictive structures in kinetic systems.
\newblock \emph{Proceedings of the National Academy of Sciences}, 116\penalty0
  (51):\penalty0 25405--25411, 2019.

\bibitem[Robins(1986)]{robins1986new}
J.~Robins.
\newblock A new approach to causal inference in mortality studies with a
  sustained exposure period—application to control of the healthy worker
  survivor effect.
\newblock \emph{Mathematical modelling}, 7\penalty0 (9-12):\penalty0
  1393--1512, 1986.

\bibitem[Romano et~al.(2019)Romano, Patterson, and
  Candes]{romano2019conformalized}
Y.~Romano, E.~Patterson, and E.~Candes.
\newblock Conformalized quantile regression.
\newblock \emph{Advances in neural information processing systems}, 32, 2019.

\bibitem[Rubin(2005)]{rubin2005causal}
D.~B. Rubin.
\newblock Causal inference using potential outcomes: Design, modeling,
  decisions.
\newblock \emph{Journal of the American Statistical Association}, 100\penalty0
  (469):\penalty0 322--331, 2005.

\bibitem[Saengkyongam et~al.(2023)Saengkyongam, Rosenfeld, Ravikumar, Pfister,
  and Peters]{saengkyongam2023identifying}
S.~Saengkyongam, E.~Rosenfeld, P.~Ravikumar, N.~Pfister, and J.~Peters.
\newblock Identifying representations for intervention extrapolation.
\newblock \emph{arXiv preprint arXiv:2310.04295}, 2023.

\bibitem[Schreiber et~al.(2023)Schreiber, Boix, wook Lee, Li, Guan, Chang,
  Chang, Hawkins-Hooker, Sch{\"o}lkopf, Schweikert, Carulla, Canakoglu, Guzzo,
  Nanni, Masseroli, Carman, Pinoli, Hong, Yip, Spence, Batra, Song, Mahony,
  Zhang, Tan, Shen, Sun, Shi, Adrian, Sandstrom, Farrell, Halow, Lee, Jiang,
  Yang, Epstein, Strattan, Bernstein, Snyder, Kellis, Stafford, Kundaje, and
  Participants]{schreiber2023encode}
J.~Schreiber, C.~Boix, J.~wook Lee, H.~Li, Y.~Guan, C.-C. Chang, J.-C. Chang,
  A.~Hawkins-Hooker, B.~Sch{\"o}lkopf, G.~Schweikert, M.~R. Carulla,
  A.~Canakoglu, F.~Guzzo, L.~Nanni, M.~Masseroli, M.~J. Carman, P.~Pinoli,
  C.~Hong, K.~Y. Yip, J.~P. Spence, S.~S. Batra, Y.~S. Song, S.~Mahony,
  Z.~Zhang, W.~Tan, Y.~Shen, Y.~Sun, M.~Shi, J.~Adrian, R.~Sandstrom,
  N.~Farrell, J.~Halow, K.~Lee, L.~Jiang, X.~Yang, C.~Epstein, J.~S. Strattan,
  B.~Bernstein, M.~Snyder, M.~Kellis, W.~Stafford, A.~Kundaje, and E.~I.~C.
  Participants.
\newblock The encode imputation challenge: a critical assessment of methods for
  cross-cell type imputation of epigenomic profiles.
\newblock \emph{Genome biology}, 24\penalty0 (1):\penalty0 79, 2023.

\bibitem[Shen and Meinshausen(2023)]{shen2023engression}
X.~Shen and N.~Meinshausen.
\newblock Engression: Extrapolation for nonlinear regression?
\newblock \emph{arXiv preprint arXiv:2307.00835}, 2023.

\bibitem[Sinha et~al.(2018)Sinha, Namkoong, and Duchi]{sinha2018certifying}
A.~Sinha, H.~Namkoong, and J.~Duchi.
\newblock Certifying some distributional robustness with principled adversarial
  training.
\newblock In \emph{International Conference on Learning Representations}, 2018.

\bibitem[Sugiyama et~al.(2007)Sugiyama, Krauledat, and
  M{\"u}ller]{sugiyama2007covariate}
M.~Sugiyama, M.~Krauledat, and K.-R. M{\"u}ller.
\newblock Covariate shift adaptation by importance weighted cross validation.
\newblock \emph{Journal of Machine Learning Research}, 8\penalty0
  (35):\penalty0 985--1005, 2007.

\bibitem[Taylor(1715)]{taylor1715}
B.~Taylor.
\newblock Methodus incrementorum directa et inversa [direct and reverse methods
  of incrementation].
\newblock \emph{Pearsonianis prostant apud Gul. Innys}, 1715.

\bibitem[Taylor(2000)]{taylor2000quantile}
J.~W. Taylor.
\newblock A quantile regression neural network approach to estimating the
  conditional density of multiperiod returns.
\newblock \emph{Journal of forecasting}, 19\penalty0 (4):\penalty0 299--311,
  2000.

\bibitem[Wahba(1990)]{wahba1990spline}
G.~Wahba.
\newblock \emph{Spline models for observational data}.
\newblock SIAM, 1990.

\bibitem[Wang et~al.(2022)Wang, He, and Hahn]{wang2022local}
M.~Wang, J.~He, and P.~R. Hahn.
\newblock Local {G}aussian process extrapolation for bart models with
  applications to causal inference.
\newblock \emph{arXiv preprint arXiv:2204.10963}, 2022.

\bibitem[Wang and Lin(2015)]{wang2015derivative}
W.~W. Wang and L.~Lin.
\newblock Derivative estimation based on difference sequence via locally
  weighted least squares regression.
\newblock \emph{Journal of Machine Learning Research}, 16\penalty0
  (1):\penalty0 2617--2641, 2015.

\bibitem[Wilson and Adams(2013)]{wilson2013gaussian}
A.~Wilson and R.~Adams.
\newblock Gaussian process kernels for pattern discovery and extrapolation.
\newblock In \emph{International conference on machine learning}, pages
  1067--1075. PMLR, 2013.

\end{thebibliography}

\newpage
\appendix
\renewcommand{\appendixpagename}{Supplementary material}

\appendixpage

\begin{itemize}
  \item Section~\ref{sec:additional_algorithms}: Additional algorithms
  \item Section~\ref{sec:additional_results}: Additional results from numerical experiments
  \item Section~\ref{sec:details_numerical_experiments}: Details on numerical experiments
  \item Section~\ref{sec:proofs}: Proofs
  \item Section~\ref{sec:auxiliary_results}: Auxiliary results
\end{itemize}

\section{Additional algorithms}\label{sec:additional_algorithms}

\begin{algorithm}
  \caption{\texttt{ParameterTuning}}\label{alg:parameter_tuning}
  \SetKwInOut{Input}{Input}
  \SetKwInOut{Parameters}{\normalfont \textit{Tuning}}
  \SetKwInOut{Output}{Output}

  \Input{Data
    $(X_1,\widehat{\Phi}_n(X_1)),\ldots,(X_n,\widehat{\Phi}_n(X_n))$,
    parameter lists $(\lambda_1,\ldots,\lambda_L)$ and
    $(\Gamma_1,\ldots,\Gamma_K)$} \Parameters{Tolerance \texttt{tol},
    number of folds $K$}
  \Output{Optimal parameters $\lambda^*$ and $\Gamma^*$}

  \BlankLine

  Split indices $\{1,\ldots,n\}$ into $M$ disjoint sets
  $I_1,\ldots, I_M$ of (roughly) equal size

  \For{$k\in\{1,\ldots,K\}$}{
    $\widehat{\mu}\gets$ \texttt{rfpoly}-$v$ on
    $(X_1,\widehat{\Phi}_n(X_1)),\ldots,(X_n,\widehat{\Phi}_n(X_n))$ with
    parameters $\Gamma_k$
  
    Extract weight matrix $W=(w_{i}(X_{\ell}))_{i, \ell}$ from
    $\widehat{\mu}$
    
    \For{$\ell\in\{1,\ldots,L\}$}{
      \For{$m\in\{1,\ldots,M\}$}{
        
        $\hat{f}_m\gets $ local polynomial fit with
        \eqref{eq:local_polynomial_penalized} and penalty
        $\lambda=\lambda_{\ell}$ using
        $(X_i,\widehat{\Phi}_i(X_i))_{i\in I_{-m}}$

        $\hat{Y}_{i}\gets \hat{f}_m(X_i)$ for all $i\in I_m$
      }
      
      $E_{k, \ell, i}\gets L(\hat{Y}_{i},\widehat{\Phi}_n(X_i))$ for
      all $i\in\{1,\ldots,n\}$

      $\overline{E}_{k, \ell}\gets \frac{1}{n}\sum_{i=1}^nE_{k,\ell, i}$
    }
  }
  $(\bar{k}, \bar{\ell})\gets\argmin\{\overline{E}_{k,\ell}\mid (k, \ell)\in\{1,\ldots,K\}\times\{1,\ldots,L\}\}$

  \For{$(k,\ell)\in\{1,\ldots,K\}\times\{1,\ldots,L\}$}{
      $S_{k, \ell}\gets \left(\frac{1}{n}\sum_{i=1}^n(E_{\bar{k},
          \bar{\ell}, i} - E_{k, \ell,i})^2\right)^{\frac{1}{2}}/\sqrt{n}$
  }

  $k^*\gets \min\{k\in\{1,\ldots,K\}\mid \exists
  \ell\in\{1,\ldots,L\}:\, \overline{E}_{k,\ell} \leq \overline{E}_{\bar{k},\bar{\ell}} +
  \texttt{tol}\cdot S_{k, \ell}\}$

  $\ell^*\gets \min\{\ell\in\{1,\ldots,L\}\mid \overline{E}_{k^*,\ell}
  \leq \overline{E}_{\bar{k},\bar{\ell}} + \texttt{tol}\cdot S_{k^*,
    \ell}\}$

  $\Gamma^*\gets \Gamma_{k^*}$ and $\lambda^*\gets \lambda_{\ell^*}$
\end{algorithm}

\begin{algorithm}
\caption{\methodname/ (one-dimensional version)}\label{alg:xtrapolation_onedim}
\SetKwInOut{Input}{Input}
\SetKwInOut{Output}{Output}

\SetKwInOut{Input}{Input}
\SetKwInOut{Parameters}{\normalfont \textit{Tuning}}
\SetKwInOut{Output}{Output}

\Input{Estimates $\widehat{\Phi}_n(X_1),\ldots,\widehat{\Phi}_n(X_n)$,
  data $X_1,\ldots,X_n$, target points $\bar{x}_1,\ldots,\bar{x}_m$, order $k$}

\Parameters{Penalty $\lambda$, \texttt{rf} parameters $\Gamma$}

\Output{Extrapolation bound estimates
  $\widehat{B}^{\operatorname{lo}}(\bar{x}_1),\ldots,
  \widehat{B}^{\operatorname{lo}}(\bar{x}_m)$,
  $\widehat{B}^{\operatorname{up}}(\bar{x}_1),\ldots,
  \widehat{B}^{\operatorname{up}}(\bar{x}_m)$}

\BlankLine

$\mathcal{D}\gets (X_1,\widehat{\Phi}_n(X_1)),\ldots,
(X_1,\widehat{\Phi}_n(X_n))$

\For{$k\in\{1,\ldots,q\}$}{
  $\widehat{\partial^k\Phi}_n(X_1),\ldots,\widehat{\partial^k\Phi}_n(X_n)\gets
  \texttt{RFLocPol}(\mathcal{D}, k=k, v=1, \lambda=\lambda, \Gamma=\Gamma)$
}

\For{$\ell\in\{1,\ldots,m\}$}{
  \For{$i\in\{1,\ldots,n\}$}{
    $S\gets \left\{\widehat{\partial^q\Phi}_n(X_1)\frac{(\bar{x}_{\ell}-X_i)^q}{q!},\ldots,
    \widehat{\partial^q\Phi}_n(X_n)\frac{(\bar{x}_{\ell}-X_i)^q}{q!}\right\}$

    $B_i^{\operatorname{lo}}\gets
    \sum_{k=1}^{q-1}\widehat{\partial^k\Phi}_n(X_i)\frac{(\bar{x}_{\ell}-X_i)^k}{k!}+\min(S)$
    
    $B_i^{\operatorname{up}}\gets
    \sum_{k=1}^{q-1}\widehat{\partial^k\Phi}_n(X_i)\frac{(\bar{x}_{\ell}-X_i)^k}{k!}+\max(S)$
  }
  $\widehat{B}^{\operatorname{lo}}(\bar{x}_{\ell})\gets\max_{i}B_i^{\operatorname{lo}}
  \quad\text{and}\quad
  \widehat{B}^{\operatorname{up}}(\bar{x}_{\ell})\gets\min_{i}B_i^{\operatorname{up}}$
  
  \If{$\widehat{B}^{\operatorname{lo}}(\bar{x}_{\ell})> \widehat{B}^{\operatorname{up}}(\bar{x}_{\ell})$}{
    $\widehat{B}^{\operatorname{lo}}(\bar{x}_{\ell})\gets\big(\widehat{B}^{\operatorname{lo}}(\bar{x}_{\ell})
    + \widehat{B}^{\operatorname{up}}(\bar{x}_{\ell}))/2
    \quad\text{and}\quad
    \widehat{B}^{\operatorname{up}}(\bar{x}_{\ell})\gets\big(\widehat{B}^{\operatorname{lo}}(\bar{x}_{\ell})
    + \widehat{B}^{\operatorname{up}}(\bar{x}_{\ell}))/2$
  }
}

\end{algorithm}

\FloatBarrier
\section{Additional results from numerical
  experiments}\label{sec:additional_results}

  \begin{figure}[h!]
  \centering
  \includegraphics{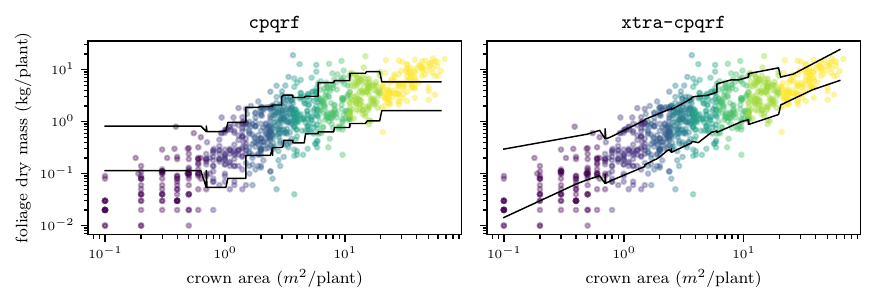}
  \caption{Estimated prediction intervals for \texttt{cpqrf} (left) and
    \texttt{xtra-cpqrf} (right) for the extrapolation split. Colors
    correspond to different extrapolation splits. }
  \label{fig:quantile_scatterplot}
\end{figure}

\begin{figure}[h!]
  \centering
  \includegraphics{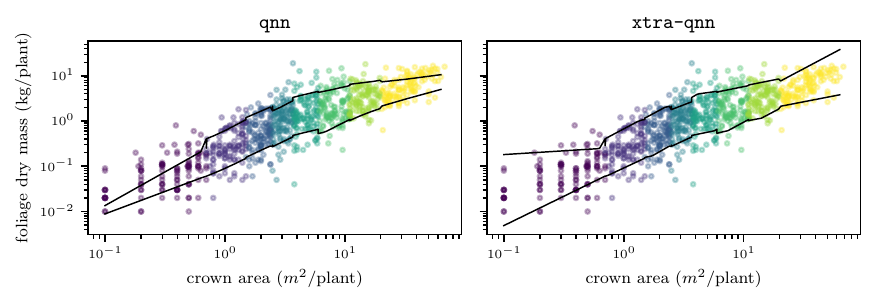}
  \caption{Estimated prediction intervals for \texttt{qnn} (left) and
    \texttt{xtra-qnn} (right) for the extrapolation split. Colors
    correspond to different extrapolation splits.}
  \label{fig:quantile_scatterplot}
\end{figure}

\begin{figure}[h!]
  \centering
  \includegraphics{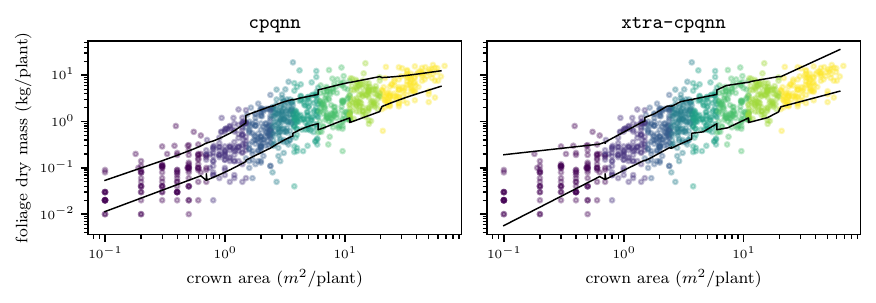}
  \caption{Estimated prediction intervals for \texttt{cpqnn} (left) and
    \texttt{xtra-cpqnn} (right) for the extrapolation split. Colors
    correspond to different extrapolation splits. }
  \label{fig:quantile_scatterplot}
\end{figure}

\FloatBarrier

\section{Details on numerical
  experiments}\label{sec:details_numerical_experiments}

\subsection{Sampling of slopes in simulation}\label{sec:details_slopes}

Let $j\in\{1,\ldots,4\}$ be the index of the interval $I_j$ that was
left out from $C_1$. The slopes $s_1,s_2,s_3,s_4$ are then sampled as
follows. First, for all $\ell\in\{1,\ldots,4\}\setminus\{j\}$
independently of each other sample
$s_{\ell}\sim\operatorname{Unif}([-10, 10])$. Then, randomly draw
$j^*\sim\operatorname{Unif}(\{1,\ldots,4\}\setminus\{j\})$ and set
$s_j\coloneqq s_{j^*}$. This ensures that the slope corresponding to
the left out region of the first coordinate $I_j$ are all observed in
$\Din$, which further guarantees that $f\triangleleft_{\Din}^1f$
(almost every).

\begin{figure}[h!]
  \centering
  \includegraphics{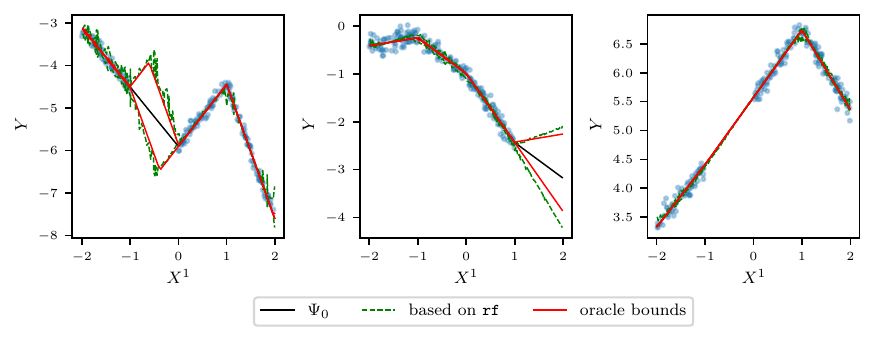}
  \caption{Three simulations generated according to the model
    introduced in Section~\ref{sec:exp_simulations}. Since $\Psi_0$ is
    first derivative extrapolating, it is identified also on $\Dout$
    in the example on the right.}\label{fig:details_slopes}
\end{figure}

\subsection{Hyperparameter selection for regression procedures}\label{sec:details_reg_tuning}

Firstly, for the simulation experiments in
Section~\ref{sec:exp_simulations}, we combine each regression
procedure with a variable screening step and tune the hyperparameters
with cross-validation, to ensure that the regressions procedures
perform well across a large range of settings. More specifically, for
the variable selection we fit a random forest and only keep features
with Gini importance that is larger than the mean of the Gini
importance across all coordinates. The variable screening captures the
sparsity in the simulation setup and improves the predictive
performance if $X$ is multi-dimensional. This variable screening step
is always performed on the same training data on which the subsequent
regression procedure is fitted. For each regressions procedure, we
then construct grids of potential hyperparameters and select the
optimal one based on a $5$-fold cross-validation using the mean
squared error as a score:
\begin{itemize}
\item \texttt{rf}: We used $500$ trees and choose an optimal
  \texttt{max\_depth} from $\{1, 2, 4, 8, 16, \infty\}.$
\item \texttt{svr}: We used a radial basis function kernel and choose
  an optimal bandwidth \texttt{gamma} in
  $\{\frac{10^{-3}}{d}, \frac{10^{-2}}{d}, \frac{10^{-1}}{d},
  \frac{1}{d}, \frac{10^{1}}{d}, \frac{10^{2}}{d}, \frac{10^{3}}{d}\}$
  and an optimal penalty \texttt{C} in
  $\{10^{-3}, 10^{-2}, 10^{-1}, 1, 10, 10^2, 10^3\}$. Furthermore, we
  scaled the data before applying the support vector regression.
\item \texttt{mlp}: We fit a neueral network with 'relu' activations
  using the 'adam' solver and with early stopping. We chose the
  optimal architecture \texttt{hidden\_layer\_sizes} in
  $\{(100,), (20, 20, 20,)\}$ and the optimal penalty \texttt{alpha}
  in $\{10^{-5},10^{-4}, 10^{-3},10^{-2},10^{-1}\}$.
\end{itemize}

Secondly, in the real data applications in Section~\ref{sec:exp_real},
we tuned the hyperparameters of the regression models using a $5$-fold
cross-validation with the average pinball loss for the two target
quantiles as a score. We used the following hyperparameter grids for
the different methods:
\begin{itemize}
\item \texttt{qrf} and \texttt{cpqrf}: We used the grid
  $\{1, 5, 10, 20, 40, 80, 160\}$ for \texttt{min\_samples\_leaf}.
\item \texttt{qnn} and \texttt{cpqnn}: We used the grids
  $\{10^{-4}, 10^{-5}, 10^{-6}, 10^{-7}, 10^{-8}\}$ and
  $\{64, 128, 256\}$ for \texttt{weight\_decay} (specifying the
  penalty) and \texttt{hidden\_size} (specifying the number of hidden
  variables), respectively.
\end{itemize}
In all cases we selected the most regularized parameter that was at
most one standard deviation worse in score than the best model
(similar in spirit to Algorithm~\ref{alg:parameter_tuning} with
$\texttt{tol}=1$). We used the quantile neural network and the
conformalized regression procedures included in the code of
\citet{romano2019conformalized}. The remaining parameters were left at
their default values. In all cases we included a scaling step that
scaled the training data to have mean zero and variance one.

\subsection{Hyperparameter selection for \texttt{Xtrapolation}}\label{sec:details_xtra}

When applying extrapolation (for $q=1$) as described in
Algorithm~\ref{alg:xtrapolation_orderone}, we need to select the
penalty parameter $\lambda$ and the \texttt{rf} parameters
$\Gamma$. To select the optimal parameter we used the parameter tuning
described in Algorithm~\ref{alg:parameter_tuning} with
$\texttt{tol}=1$ and $K=5$ and parameter grids that depend on the
experiment:
\begin{itemize}
\item Simulation experiment (Section~\ref{sec:exp_simulations}): We
  selected $\lambda$ among $\{10, 1, 0.1, 0.01, 0.001, 0\}$ and
  $\Gamma$ among default trees but with \texttt{impurity\_tol} among
  $\{100, 10, 1, 0.1, 0.01\}$.
\item Section~\ref{sec:exp_real} \texttt{biomass} data: We set
  $\lambda=0$ and selected $\Gamma$ among default trees but with
  \texttt{min\_samples\_leaf} among $\{40, 30, 20, 10\}$.
\item Section~\ref{sec:exp_real} \texttt{abalone} data: We selected
  $\lambda$ among $\{0.1, 0.01, 0.001\}$ and $\Gamma$ among default
  trees but with \texttt{min\_samples\_leaf} among $\{10, 5, 1\}$.
\end{itemize}
To speed up the computation, we further only went over a subselection
of all possible anchor points (see
Section~\ref{sec:computational_speedup}). More specifically, for the
experiments in Section~\ref{sec:exp_simulations} we used the $n/2$
closest points in Euclidean distance. For the \texttt{biomass}
application we used all possible anchor points and for the
\texttt{abalone} application we used the $100$ closest (but non-zero
weighted) points based on a random forest closeness measure (required
here as the covariate 'sex' is binary and hence not included in the
extrapolation).

\subsection{Randomized prediction intervals}\label{sec:details_randomized_pi}

Since in both the \texttt{biomass} and \texttt{abalone} data there are
samples with the exact same $Y$ values (relatively common in
\texttt{abalone} since years are counts), we randomized the prediction
intervals in the experiments. This ensures that its always possible to
reach the exact coverage on average. To formalize randomized
prediction intervals, we introduce a random variable
$U\sim\operatorname{Ber}(p)$. Then for any interval
$C=[C_{\operatorname{lo}}, C_{\operatorname{up}}]$ we can define a
corresponding randomized interval $C^{\operatorname{rand}}$ by
\begin{equation*}
    C^{\operatorname{rand}}(U)\coloneqq
    \begin{cases}
        [C_{\operatorname{lo}}, C_{\operatorname{up}}]\quad &\text{if }U=1\\
        (C_{\operatorname{lo}}, C_{\operatorname{up}})\quad &\text{if }U=0.
    \end{cases}
\end{equation*}
For a given prediction interval $\widehat{C}$ which does not have the
correct level due to atoms on the boundaries of the interval, we can
then use the corresponding randomized version
$\widehat{C}^{\operatorname{rand}}(U)$, where we choose the
probability $p$ to calibrate the prediction interval on the training
data to have the correct coverage. For all results, we then report the expected coverage of such randomized prediction intervals.

\section{Proofs}\label{sec:proofs}

\subsection{Proof of Theorem~\ref{thm:taylor_extrapolation_bounds}}\label{proof:taylor_extrapolation_bounds}

\begin{proof}
  We prove the two parts separately.
  
  \textit{Part (i):} First, fix arbitrary $x\in\mathcal{X}$ and let
  $g\in C^{q}(\mathcal{X})$ satisfy for all $x\in\mathcal{D}$ that
  $g(x)=f(x)$ and for all $v\in\mathcal{B}$ that
  \begin{equation}
    \label{eq:dominating_equation_proof}
    \inf_{x\in\mathcal{X}}D_v^qg(x)\geq \inf_{x\in\mathcal{D}}D_v^qf(x)
    \quad\text{and}\quad
    \sup_{x\in\mathcal{X}}D_v^qg(x)\leq \sup_{x\in\mathcal{D}}D_v^qf(x).
  \end{equation}
  Then, using \eqref{eq:taylor_equation}, it holds for all
  $x_0\in\mathcal{D}$ that there exists $\xi_{x_0}\in\mathcal{X}$ such
  that
  \begin{align}
    g(x)
    &=\sum_{\ell=0}^{q-1}D_{\overline{v}(x_0,
      x)}^{\ell}g(x_0)\frac{\|x-x_0\|_2^{\ell}}{\ell!} +
      D_{\overline{v}(x_0, x)}^{q}g(\xi_{x_0})
      \frac{\|x-x_0\|_2^{q}}{q!}\nonumber\\
    &=\sum_{\ell=0}^{q-1}D_{\overline{v}(x_0,
      x)}^{\ell}f(x_0)\frac{\|x-x_0\|_2^{\ell}}{\ell!} +
      D_{\overline{v}(x_0, x)}^{q}g_n(\xi_{x_0})
      \frac{\|x-x_0\|_2^{q}}{q!},\label{eq:upper_bound_part0}
  \end{align}
  where we used that $g=f$ on $\mathcal{D}$. Moreover, using
  \eqref{eq:dominating_equation_proof}, we get for all
  $x_0\in\mathcal{D}$ that
  \begin{equation}
    \label{eq:upper_bound_part2}
    D_{\overline{v}(x_0, x)}^{q}g(\xi_{x_0})
    \leq\sup_{z\in\mathcal{X}}D_{\overline{v}(x_0, x)}^{q}g(z)
    \leq\sup_{z\in\mathcal{D}}D_{\overline{v}(x_0, x)}^{q}f(z).
  \end{equation}
  Hence, together with \eqref{eq:upper_bound_part0} we have for all
  $x_0\in\mathcal{D}$ that
  \begin{equation}
    g(x)\leq \sum_{\ell=0}^{q-1}D_{\overline{v}(x_0,
        x)}^{\ell}f(x_0)\frac{\|x-x_0\|_2^{\ell}}{\ell!} +
      \sup_{z\in\mathcal{D}}D_{\overline{v}(x_0, x)}^{q}f(z)
      \frac{\|x-x_0\|_2^{q}}{q!}.\label{eq:upper_bound_part3}
  \end{equation}
  Finally, taking the infimum over $x_0\in\mathcal{D}$ on both sides
  results in
  \begin{equation*}
    g(x)\leq B_{q, f, \mathcal{D}}^{\operatorname{up}}(x).
  \end{equation*}
  The same argument also applies to the lower extrapolation bound,
  which completes the proof of part~(i).

  \textit{Part (ii):} We only prove the result for the upper bound, the
  same arguments apply to the lower bound. Since we assume that
  $\mathcal{X}$ is compact it holds that $\mathcal{D}$ is also
  compact. We can therefore define the function
  $\optx:\mathcal{X}\rightarrow\mathcal{D}$ such that for all
  $x\in\mathcal{X}$ it holds that $\optx(x)\in\mathcal{D}$ satisfies
  \begin{equation*}
    B_{q, f,
      \mathcal{D}}^{\operatorname{up}}(x)=\sum_{\ell=0}^{q-1}D_{\overline{v}(\optx(x),
      x)}^{\ell}f(\optx(x))\frac{\|x-\optx(x)\|_2^{\ell}}{\ell!} +
    \sup_{z\in\mathcal{D}}D^q_{\overline{v}(\optx(x),
      x)}f(z)
    \frac{\|x-\optx(x)\|_2^{q}}{q!}.
  \end{equation*}
  Moreover, define function
  $\optz:\mathcal{X}\rightarrow\mathcal{D}$ such that
  for all $x\in\mathcal{X}$ it holds that
  $\optz(x)\in\mathcal{D}$ satisfies
  \begin{equation*}
    \sup_{z\in\mathcal{D}}D^q_{\overline{v}(\optx(x),
      x)}f(z)
    =D^q_{\overline{v}(\optx(x),
      x)}f(\optz(x)).
  \end{equation*}
  In the following, we construct a q-times differentiable sequence
  $(g^{\operatorname{up}}_n)_{n\in\mathbb{N}}$ which approximates the
  extrapolation bound $B^{\operatorname{up}}_{q,f,\mathcal{D}}$, is
  $q$-times continuously differentiable and satisfies
  $g^{\operatorname{up}}_n\triangleleft_{\mathcal{D}}^q
  f$. Constructing such a sequence directly is difficult due to the
  two optimizations $\optx$ and $\optz$. We therefore construct the
  approximation in two steps.

  \textit{First approximation step:} Fix arbitrary
  $\varepsilon>0$. For all $z\in\mathcal{X}$, using multi-index
  notation, define the functions
  $B_{z}:\mathbb{R}^d\rightarrow\mathbb{R}$ for all $x\in\mathbb{R}^d$
  by
  \begin{equation}
    \label{eq:B_z_delta}
    B_{z}(x)\coloneqq \sum_{|\balpha|<
      q}\partial^{\balpha}f(\optx(z))\cdot\frac{(x-\optx(z))^{\balpha}}{\balpha!}
    +\sum_{|\balpha|=q}\partial^{\balpha}f(\optz(z))\cdot\frac{(x-\optx(z))^{\balpha}}{\balpha!}.
  \end{equation}
  Since for fixed $z\in\mathcal{X}$ the points $\optx(z)$ and
  $\optz(z)$ are fixed it is easier to analyze $B_z$, which is
  multivariate polynomial of degree at most $q$. Moreover, for all
  $v\in\mathcal{B}$, $\ell\in\{0,\ldots,d\}$ and $x\in\mathcal{X}$ the
  directional derivative satisfies
  $D_{v}^{\ell}f(x)=\sum_{|\balpha|=\ell}\frac{\ell!}{\balpha!}\partial^{\balpha}f(x)v^{\balpha}$.
  This implies for all $z\in\mathcal{B}$ that
  \begin{equation}
    \label{eq:Bz_equals_Bup}
    B_z(z)=B^{\operatorname{up}}_{q,f,\mathcal{D}}(z).
  \end{equation}
  More properties of
  $B_z$ are listed in
  Lemma~\ref{thm:properties_first_approximation}.

  Consider now the collection of open sets
  $\{\mathcal{U}_{\varepsilon}(x)\mid x\in\mathcal{X}\}$, where
  $\mathcal{U}_{\varepsilon}(x)\coloneqq\{y\in\mathbb{R}^d\mid
  \|x-y\|_2<\varepsilon\}$. Since, $\mathcal{X}$ is compact, there
  exists a finite set of points
  $x_1^{\varepsilon},\ldots,x_{M_{\varepsilon}}^{\varepsilon}$ such
  that
  $\{\mathcal{U}_{\varepsilon}(x_{\ell}^{\varepsilon})\mid
  \ell\in\{1,\ldots,M_{\varepsilon}\}\}$ covers $\mathcal{X}$. We can
  therefore define the function
  $g_{\varepsilon}:\mathbb{R}^d\rightarrow\mathbb{R}$ for all
  $x\in\mathbb{R}^d$ by
  \begin{equation*}
    g_{\varepsilon}(x)\coloneqq \frac{1}{w_{\varepsilon}(x)}\sum_{\ell=1}^{M_{\varepsilon}}B_{x_{\ell}^{\varepsilon}}(x)\mathds{1}_{\mathcal{U}_{\varepsilon}(x_{\ell}^{\varepsilon})}(x),
  \end{equation*}
  where $w_{\varepsilon}(x)\coloneqq|\{\ell\in\{1,\ldots,M_{\varepsilon}\}\mid x\in
  \mathcal{U}_{\varepsilon}(x_{\ell}^{\varepsilon})\}|$. We now prove that there exists
  $K_1>0$ such that
  \begin{equation}
    \label{eq:step1_approximation}
    \sup_{x\in\mathcal{X}}\left|B_{q,f,\mathcal{D}}^{\operatorname{up}}(x)-g_{\varepsilon}(x)\right|\leq
    K_1 \varepsilon.
  \end{equation}
  To see this, fix $z\in\mathcal{X}$, then using the triangle
  inequality and \eqref{eq:Bz_equals_Bup} it holds that
  \begin{align}
    \sup_{x\in\mathcal{U}_{\varepsilon}(z)}\big|B_{q,f,\mathcal{D}}^{\operatorname{up}}(x)-B_{z}(x)\big|
    &\leq\sup_{x\in\mathcal{U}_{\varepsilon}(z)}\big|B_{q,f,\mathcal{D}}^{\operatorname{up}}(x)-B_{q,f,\mathcal{D}}^{\operatorname{up}}(z)\big| +\sup_{x\in\mathcal{U}_{\varepsilon}(z)}\big|B_{z}(z) -
      B_{z}(x)\big|\nonumber\\
    &\leq (c_1+c_2)\varepsilon,\label{eq:Bx_uniform_bound}
  \end{align}
  where $c_1>0$ and $c_2>0$ exists by
  Lemma~\ref{thm:uniform_continuity_of_bounds} and
  Lemma~\ref{thm:properties_first_approximation}~(ii). Using
  \eqref{eq:Bx_uniform_bound} we further get
  \begin{align*}
    \sup_{x\in\mathcal{X}}\left|B_{q,f,\mathcal{D}}^{\operatorname{up}}(x)-g_{\varepsilon}(x)\right|
    &= \sup_{x\in\mathcal{X}}\left|B_{q,f,\mathcal{D}}^{\operatorname{up}}(x)-\frac{1}{w_{\varepsilon}(x)}\sum_{\ell=1}^{M_{\varepsilon}}B_{x_{\ell}^{\varepsilon}}(x)\mathds{1}_{\mathcal{U}_{\varepsilon}(x_{\ell}^{\varepsilon})}(x)\right|\\
    & \leq \sup_{x\in\mathcal{X}} \left(\frac{1}{w_{\varepsilon}(x)}\sum_{\ell=1}^{M_{\varepsilon}}\left|B_{q,f,\mathcal{D}}^{\operatorname{up}}(x)-B_{x_{\ell}^{\varepsilon}}(x)\right|\mathds{1}_{\mathcal{U}_{\varepsilon}(x_{\ell}^{\varepsilon})}(x)\right)\\
    & \leq \sup_{x\in\mathcal{X}} \left(\frac{1}{w_{\varepsilon}(x)}\sum_{\ell=1}^{M_{\varepsilon}}\mathds{1}_{\mathcal{U}_{\varepsilon}(x_{\ell}^{\varepsilon})}(x)\right)
     (c_1+c_2)\varepsilon\\
    & = \underbrace{(c_1+c_2)}_{\eqqcolon K_1}\varepsilon,
  \end{align*}
  which proves \eqref{eq:step1_approximation}. Furthermore, notice
  that the set
  $\mathcal{W}\coloneqq\mathbb{R}^d\setminus(\partial\mathcal{U}_{\varepsilon}(x_{1}^{\varepsilon})\cup\cdots\cup\partial\mathcal{U}_{\varepsilon}(x_{M_{\varepsilon}}^{\varepsilon}))$
  is open and hence for all $z\in\mathcal{W}$ there exists
  $\kappa>0$ such that
  $\mathcal{U}_{\kappa}(z)\subseteq\mathcal{W}$. Moreover, for
  all $x\in \mathcal{U}_{\kappa}(z)$ it holds that
  \begin{equation}
    \label{eq:diff_of_geps}
    g_{\varepsilon}(x)=\frac{1}{w_{\varepsilon}(x)}\sum_{\ell=1}^{M_{\varepsilon}}B_{x_{\ell}^{\varepsilon}}(x)\mathds{1}_{\mathcal{U}_{\varepsilon}(x_{\ell}^{\varepsilon})}(x)=\frac{1}{w_{\varepsilon}(z)}\sum_{\ell=1}^{M_{\varepsilon}}B_{x_{\ell}^{\varepsilon}}(x)\mathds{1}_{\mathcal{U}_{\varepsilon}(x_{\ell}^{\varepsilon})}(z).
  \end{equation}
  Since $B_{x_{\ell}^{\varepsilon}}$ is $q$-times continuously
  differentiable and since $\mathbb{R}^d\setminus\mathcal{W}$ is a
  Lebesgue null-set this implies that $g_{\varepsilon}$ is Lebesgue
  almost everywhere $q$-times continuously differentiable. In
  particular, using \eqref{eq:diff_of_geps} and
  Lemma~\ref{thm:properties_first_approximation}~(i), it holds for every
  multi-index $\balpha\in\mathbb{N}^d$ with $|\balpha|=q$ and
  every $x\in\mathcal{W}$ that
  \begin{equation}
    \label{eq:derivative_g_delta_general}
    \partial^{\balpha}g_{\varepsilon}(x)
    =\frac{1}{w_{\varepsilon}(x)}\sum_{\ell=1}^{M_{\varepsilon}}\partial^{\balpha}f(\optz(x_{\ell}^{\varepsilon}))\mathds{1}_{\mathcal{U}_{\varepsilon}(x_{\ell}^{\varepsilon})}(x).
  \end{equation}

  \textit{Second approximation step:} Again let $\varepsilon>0$ be
  arbitrary. Denote by
  $\eta_{\varepsilon}:\mathbb{R}^d\rightarrow\mathbb{R}$ the standard
  mollifier defined for all $x\in\mathbb{R}^d$ by
  \begin{equation*}
    \eta_{\varepsilon}(x)\coloneqq
    \begin{cases}
      C\exp\left(1/\left(\|x\|_2^2-\varepsilon^2\right)\right) &\quad\text{if
                                      }x\in\mathcal{U}_{\varepsilon}(0)\\
      0 &\quad\text{otherwise},
    \end{cases}
  \end{equation*}
  where $C>0$ is chosen such that
  $\int_{\mathbb{R}^d}\eta_{\epsilon}(x)\text{d}x=1$. For any
  integrable function $h:\mathbb{R}^d\rightarrow\mathbb{R}$, we define
  the mollified function $h * \eta_{\varepsilon}$ for all
  $x\in\mathbb{R}^d$ by the convolution
  \begin{equation*}
    (h * \eta_{\varepsilon})(x)\coloneqq \int_{\mathcal{U}_{\varepsilon}(0)}h(x-y)\eta_{\varepsilon}(y)\text{d}y.
  \end{equation*}
  We now define the approximation function
  $\bar{g}_{\varepsilon}:\mathbb{R}^d\rightarrow\mathbb{R}$ for all
  $x\in\mathbb{R}^d$ by
  \begin{equation*}
    \bar{g}_{\varepsilon}(x)\coloneqq (g_{\varepsilon}*\eta_{\varepsilon})(x).
  \end{equation*}
  Using that the convolution with a mollifier is smooth
  \citep[Theorem~1.3.1]{hormander2015analysis}, we get that
  $\bar{g}_{\varepsilon}$ is $q$-times continuously
  differentiable. Furthermore, using Jensen's inequality and the
  triangle inequality, we can bound the approximation error of the
  mollification as follows,
  \begin{align}
    &\sup_{x\in\mathcal{X}}\left|\bar{g}_{\varepsilon}(x)-g_{\varepsilon}(x)\right|\nonumber\\
    &\quad=\sup_{x\in\mathcal{X}}\left|(g_{\varepsilon}*\eta_{\varepsilon})(x)-g_{\varepsilon}(x)\right|\nonumber\\
    &\quad\leq\sup_{x\in\mathcal{X}}\int_{\mathcal{U}_{\varepsilon}(0)}\left|g_{\varepsilon}(x-y)-g_{\varepsilon}(x)\right|\eta_{\varepsilon}(y)\text{d}y\nonumber\\
    &\quad\leq\sup_{\substack{x,y\in\mathcal{X}:\\ \|x-y\|_2\leq\varepsilon}}\left|g_{\varepsilon}(y)-g_{\varepsilon}(x)\right|\nonumber\\
    &\quad=\sup_{\substack{x,y\in\mathcal{X}:\\ \|x-y\|_2\leq\varepsilon}}\left|g_{\varepsilon}(y)-B_{q,f,\mathcal{D}}^{\operatorname{up}}(y)
    - g_{\varepsilon}(x)+B_{q,f,\mathcal{D}}^{\operatorname{up}}(x) +
    B_{q,f,\mathcal{D}}^{\operatorname{up}}(y) - B_{q,f,\mathcal{D}}^{\operatorname{up}}(x)\right|\nonumber\\
    &\quad\leq
      2\sup_{x\in\mathcal{X}}\left|g_{\varepsilon}(x)-B_{q,f,\mathcal{D}}^{\operatorname{up}}(x)\right|
      + \sup_{\substack{x,y\in\mathcal{X}:\\
    \|x-y\|_2\leq\varepsilon}}\left|B_{q,f,\mathcal{D}}^{\operatorname{up}}(x)-B_{q,f,\mathcal{D}}^{\operatorname{up}}(y)\right|\nonumber\\
    &\quad\leq \underbrace{(2K_1+c_1)}_{\eqqcolon K_2}\varepsilon,\label{eq:step2_approximation}
  \end{align}
  where we used \eqref{eq:step1_approximation} and
  Lemma~\ref{thm:uniform_continuity_of_bounds} in the
  last step.

  Let $\balpha\in\mathbb{N}^d$ with $|\balpha|=q$, then using that all
  partial derivatives of $g_{\varepsilon}*\eta_{\varepsilon}$ can be
  bounded on $\overline{\mathcal{U}_{\varepsilon}(x)}$, we can
  pull derivatives under the integral
  \citep[e.g.,][Theorem~2.27]{folland1999real}. Together with \eqref{eq:derivative_g_delta_general}, we hence get
  \begin{align*}
    \partial^{\balpha}\bar{g}_{\varepsilon}(x)
    &=\partial^{\balpha}(g_{\varepsilon}*\eta_{\varepsilon})(x)\\
    &=\partial^{\balpha}\int_{\mathcal{U}_{\varepsilon}(0)}g_{\varepsilon}(x-y)\eta_{\varepsilon}(y)\text{d}y\\
    &=\int_{\mathcal{U}_{\varepsilon}(0)}(\partial^{\balpha}g_{\varepsilon})(x-y)\eta_{\varepsilon}(y)\text{d}y\\
    &=\int_{\mathcal{U}_{\varepsilon}(0)}\frac{1}{w_{\varepsilon}(x-y)}\sum_{\ell=1}^{M_{\varepsilon}}\partial^{\balpha}f(\optz(x_{\ell}^{\varepsilon}))\mathds{1}_{\mathcal{U}_{\varepsilon}(x_{\ell}^{\varepsilon})}(x-y)\eta_{\varepsilon}(y)\text{d}y\\
    &=\sum_{\ell=1}^{M_{\varepsilon}}\partial^{\balpha}f(\optz(x_{\ell}^{\varepsilon}))\underbrace{\int_{\mathcal{U}_{\varepsilon}(0)}\frac{1}{w_{\varepsilon}(x-y)}\mathds{1}_{\mathcal{U}_{\varepsilon}(x_{\ell}^{\varepsilon})}(x-y)\eta_{\varepsilon}(y)\text{d}y}_{\eqqcolon
      \bar{w}_{\ell}(x)}.
  \end{align*}
  Hence expressing the directional derivative in terms of partial
  derivatives we get for all $v\in\mathcal{B}$ and all
  $x\in\mathcal{X}$ that
  \begin{align*}
    D_{v}^q\bar{g}_{\varepsilon}(x)
    &=\sum_{|\balpha|=q}\frac{q!}{\balpha!}\partial^{\balpha}\bar{g}_{\varepsilon}(x)
      v^{\balpha}\\
    &=\sum_{\ell=1}^{M_{\varepsilon}}\sum_{|\balpha|=q}\frac{q!}{\balpha!}\partial^{\balpha}f(\optz(x_{\ell}^{\varepsilon}))v^{\balpha}\bar{w}_{\ell}(x)\\
    &=\sum_{\ell=1}^{M_{\varepsilon}}D_{v}^qf(\optz(x_{\ell}^{\varepsilon}))\bar{w}_{\ell}(x)
  \end{align*}
  This further implies for all $v\in\mathcal{B}$ and all
  $x\in\mathcal{X}$ that
  \begin{equation*}
    \inf_{z\in\mathcal{D}}D_v^qf(z)=\inf_{z\in\mathcal{D}}D_v^qf(z)
    \sum_{\ell=1}^{M_{\varepsilon}}\bar{w}_{\ell}(x)\leq D_{v}^q\bar{g}_{\varepsilon}(x)\leq
    \sup_{z\in\mathcal{D}}D_v^qf(z) \sum_{\ell=1}^{M_{\varepsilon}}\bar{w}_{\ell}(x)=\sup_{z\in\mathcal{D}}D_v^qf(z).
  \end{equation*}
  Hence, we have shown that
  \begin{equation}
    \label{eq:minmax_dominated_approx}
    \bar{g}_{\varepsilon}
    \triangleleft_{\mathcal{D}}^q f.
  \end{equation}
  \textit{Conclude by constructing $g^{\operatorname{up}}_n$:} Define
  for all $n\in\mathbb{N}$ the functions
  \begin{equation*}
    g^{\operatorname{up}}_n\coloneqq \bar{g}_{\varepsilon_n}
    \quad\text{with}\quad \varepsilon_n\coloneqq \frac{1}{n(K_1+K_2)}.
  \end{equation*}
  Then, with \eqref{eq:step1_approximation} and
  \eqref{eq:step2_approximation} it holds that
  \begin{align}
    \sup_{x\in\mathcal{X}}|g^{\operatorname{up}}_n(x)-B_{q,f,\mathcal{D}}^{\operatorname{up}}(x)|
    &=\sup_{x\in\mathcal{X}}|\bar{g}_{\varepsilon_n}(x)-B_{q,f,\mathcal{D}}^{\operatorname{up}}(x)|\nonumber\\
    &\leq
    \sup_{x\in\mathcal{X}}|\bar{g}_{\varepsilon_n}(x)-g_{\varepsilon_n}(x)|
    + \sup_{x\in\mathcal{X}}|g_{\varepsilon_n}(x)-B_{q,
      f,\mathcal{D}}^{\operatorname{up}}(x)|\nonumber\\
    &\leq \varepsilon_n K_2 + \varepsilon_n K_1 =
      \frac{1}{n}.\label{eq:convergence_of_hn}
  \end{align}
  By \eqref{eq:minmax_dominated_approx} it further holds for all
  $n\in\mathbb{N}$ that
  $g^{\operatorname{up}}_n \triangleleft_{\mathcal{D}}^q f$, which
  completes the proof of part~(ii) and hence also of
  Theorem~\ref{thm:taylor_extrapolation_bounds}.
\end{proof}

\subsection{Proof of
  Proposition~\ref{thm:worst_case_prediction}}\label{proof:worst_case_prediction}

\begin{proof}
  Fix $x\in\mathcal{X}$ throughout the proof. For all
  $Q\in\mathcal{Q}_0$, denote by
  $\Psi_Q:\mathcal{X}\rightarrow\mathbb{R}$ the conditional
  expectation function corresponding to the Markov kernel $Q$. Then
  for all $Q\in\mathcal{Q}_0$ there exists a mean-zero random variable
  $U_x$ such that $Y_x=\Psi_Q(x) + U_x$.

  We now first construct an upper bound on the right-hand side of
  \eqref{eq:worst_case_optimal_result}. To this end, fix
  $Q\in\mathcal{Q}_0$, then it holds that
  $\Psi_Q(x)\in\left[B^{\operatorname{lo}}_{\Psi_{0},\Din}(x),\,B^{\operatorname{up}}_{\Psi_{0},\Din}(x)\right]$. Hence,
  we get that
  \begin{equation*}
    \left|\Psi_Q(x)-\frac{1}{2}\left(B^{\operatorname{lo}}_{\Psi_{0},\Din}(x)+B^{\operatorname{up}}_{\Psi_{0},\Din}(x)\right)\right|
    \leq \frac{1}{2}\left(B^{\operatorname{up}}_{\Psi_{0},\Din}(x)-B^{\operatorname{lo}}_{\Psi_{0},\Din}(x)\right).
  \end{equation*}
  Using this bound and $\mathbb{E}_Q[U_x]=0$, we further get
  \begin{align*}
    \mathbb{E}_{Q}[(Y_x-f^*(x))^2]
    &=(\Psi_Q(x)-f^*(x))^2 + \mathbb{E}_{Q}[U_x^2]\\
    &=\left(\Psi_Q(x)-\frac{1}{2}\left(B^{\operatorname{lo}}_{\Psi_{0},\Din}(x)+B^{\operatorname{up}}_{\Psi_{0},\Din}(X)\right)\right)^2
      + \mathbb{E}_{Q}[U_x^2]\\
    &\leq\frac{1}{4}\left(B^{\operatorname{up}}_{\Psi_{0},\Din}(x)-B^{\operatorname{lo}}_{\Psi_{0},\Din}(x)\right)^2
      + 
      \mathbb{E}_{Q}[U_x^2].
  \end{align*}
  Since $Q\in\mathcal{Q}_0$ was arbitrary this implies
  \begin{equation}
    \label{eq:final_upperbound}
    \sup_{Q\in\mathcal{Q}_0}\mathbb{E}_Q[(Y_x-f^*(x))^2]
    \leq\frac{1}{4}\left(B^{\operatorname{up}}_{\Psi_{0},\Din}(x)-B^{\operatorname{lo}}_{\Psi_{0},\Din}(x)\right)^2
    + \sup_{Q\in\mathcal{Q}_0}\mathbb{E}_{Q}[U_x^2].
  \end{equation}
  
  Next, we derive a lower bound. To this end, fix an arbitrary
  $f\in C^0(\mathcal{X})$ and $Q\in\mathcal{Q}_0$. The triangle
  inequality implies
  \begin{equation}
    \label{eq:triangle_uplo}
    \left|B^{\operatorname{up}}_{\Psi_{0},\Din}(x)-f(x)\right|+\left|B^{\operatorname{lo}}_{\Psi_{0},\Din}(x)-f(x)\right|
    \geq\left|(B^{\operatorname{up}}_{\Psi_{0},\Din}(x)-B^{\operatorname{lo}}_{\Psi_{0},\Din}(x)\right|.
  \end{equation}
  By Theorem~\ref{thm:taylor_extrapolation_bounds} it holds that
  $B^{\operatorname{lo}}_{\Psi_{0},\Din},
  B^{\operatorname{up}}_{\Psi_{0},\Din}\in C^0(\mathcal{X})$ and hence
  there exist
  $Q_{\operatorname{lo}},Q_{\operatorname{up}}\in\mathcal{Q}_0$ such
  that
  $\Psi_{Q_{\operatorname{lo}}}\equiv
  B^{\operatorname{lo}}_{\Psi_{0},\Din}$ and
  $\Psi_{Q_{\operatorname{up}}}\equiv
  B^{\operatorname{up}}_{\Psi_{0},\Din}$.  This, implies that
  \begin{align}
    \sup_{Q\in\mathcal{Q}_0}\left(\Psi_Q(x)-f(x)\right)
    &\geq\max\left((B^{\operatorname{up}}_{\Psi_{0},\Din}(x)-f(x))^2,
      (B^{\operatorname{lo}}_{\Psi_{0},\Din}(x)-f(x))^2\right)\nonumber\\
    &\geq\frac{1}{2}\left((B^{\operatorname{up}}_{\Psi_{0},\Din}(x)-f(x))^2 +
      (B^{\operatorname{lo}}_{\Psi_{0},\Din}(x)-f(x))^2\right)\nonumber\\
    &\geq\frac{1}{4}\left(|B^{\operatorname{up}}_{\Psi_{0},\Din}(x)-f(x)| +
      |B^{\operatorname{lo}}_{\Psi_{0},\Din}(x)-f(x)|\right)^2\nonumber\\
    &\geq\frac{1}{4}\left(B^{\operatorname{up}}_{\Psi_{0},\Din}(x) - B^{\operatorname{lo}}_{\Psi_{0},\Din}(x)\right)^2,\label{eq:lower_bound_quarter}
  \end{align}
  where for the third inequality we used $(a+b)^2\leq 2a^2 + 2b^2$ and
  \eqref{eq:triangle_uplo} for the last inequality. Therefore together
  with \eqref{eq:lower_bound_quarter} it holds that
  \begin{equation}
    \label{eq:final_lowerbound}
    \sup_{Q\in\mathcal{Q}_0}\mathbb{E}[(Y_x-f(x))^2]\geq\frac{1}{4}\left(B^{\operatorname{up}}_{\Psi_{0},\Din}(x)-B^{\operatorname{lo}}_{\Psi_{0},\Din}(x)\right)^2
    + \sup_{Q\in\mathcal{Q}_0}\mathbb{E}_Q[U_x].
  \end{equation}
  Combining \eqref{eq:final_upperbound} and
  \eqref{eq:final_lowerbound} completes the proof of
  Proposition~\ref{thm:worst_case_prediction}.
\end{proof}

\subsection{Proof of
  Proposition~\ref{thm:confidence_intervals}}\label{proof:confidence_intervals}

\begin{proof}
  Since $\Psi_0$ is assumed to be $q$-th derivative extrapolating it holds by Theorem~\ref{thm:taylor_extrapolation_bounds} that $\Psi_0\in[B^{\operatorname{lo}}_{\Psi_0,
   \Din}(x), B^{\operatorname{up}}_{\Psi_0,
   \Din}(x)]$. We can use this together with standard probability inequalities to get
  \begin{align*}
  &\mathbb{P}(\Psi_0(x)\in\widehat{\operatorname{C}}_{n;
      \alpha}^{\operatorname{conf}}(x))\\
  &\quad=\mathbb{P}\left(\Psi_0(x)\in\left[\widehat{G}_{n}^{\operatorname{lo}}(\tfrac{\alpha}{2},x),\, \widehat{G}_{n}^{\operatorname{up}}(1-\tfrac{\alpha}{2},x)\right]\right)\\
  &\quad=\mathbb{P}\left(\Psi_0(x)\geq \widehat{G}_{n}^{\operatorname{lo}}(\tfrac{\alpha}{2},x),\, \Psi_0(x)\leq \widehat{G}_{n}^{\operatorname{up}}(1-\tfrac{\alpha}{2},x)\right)\\
  &\quad\geq\mathbb{P}\left(B^{\operatorname{lo}}_{\Psi_0,
   \Din}(x)> \widehat{G}_{n}^{\operatorname{lo}}(\tfrac{\alpha}{2},x),\, B^{\operatorname{up}}_{\Psi_0,
      \Din}(x)\leq \widehat{G}_{n}^{\operatorname{up}}(1-\tfrac{\alpha}{2},x)\right)\\
  &\quad\geq 1 - \mathbb{P}\left(B^{\operatorname{lo}}_{\Psi_0,
   \Din}(x)\leq \widehat{G}_{n}^{\operatorname{lo}}(\tfrac{\alpha}{2},x)\right) - \mathbb{P}\left(B^{\operatorname{up}}_{\Psi_0,
   \Din}(x)> \widehat{G}_{n}^{\operatorname{up}}(1-\tfrac{\alpha}{2},x)\right)\\
   &\quad= - \mathbb{P}\left(B^{\operatorname{lo}}_{\Psi_0,
   \Din}(x)\leq \widehat{G}_{n}^{\operatorname{lo}}(\tfrac{\alpha}{2},x)\right) +\mathbb{P}\left(B^{\operatorname{up}}_{\Psi_0,
   \Din}(x)\leq\widehat{G}_{n}^{\operatorname{up}}(1-\tfrac{\alpha}{2},x)\right).
  \end{align*}
  Next, taking the $\liminf$ as $n$ goes to infinity on both sides together with the assumptions on $\widehat{G}^{\star}_n$, we get that
  \begin{equation*}
      \liminf_{n\rightarrow\infty}\mathbb{P}(\Psi_0(x)\in\widehat{\operatorname{C}}_{n;
      \alpha}^{\operatorname{conf}}(x))\geq 1-\alpha.
  \end{equation*}
  This completes the proof of Proposition~\ref{thm:confidence_intervals}.
\end{proof}

\subsection{Proof of
  Proposition~\ref{thm:prediction_intervals}}\label{proof:prediction_intervals}

\begin{proof}
  By Theorem~\ref{thm:taylor_extrapolation_bounds}, since
  $\mathcal{T}^{\alpha/2}_{0}$ and $\mathcal{T}^{1-\alpha/2}_{0}$ are
  both $q$-th derivative extrapolating, it holds that
  \begin{equation*}
    \left[\mathcal{T}_{0}^{\alpha/2}(x),\, \mathcal{T}_{0}^{1-\alpha/2}(x)\right]\subseteq\operatorname{C}_{\alpha}^{\operatorname{pred}}(x).
  \end{equation*}
  Using the definition of the conditional quantile function, we
  therefore get that
  \begin{equation*}
    \mathbb{P}_{Q_0(x,
      \cdot)}(Y_x\in\operatorname{C}_{\alpha}^{\operatorname{pred}}(x))\geq
    \mathbb{P}_{Q_0(x,
      \cdot)}(Y_x\in [\mathcal{T}_{0}^{\alpha/2}(x),\, \mathcal{T}_{0}^{1-\alpha/2}(x)])=
    1-\alpha.
  \end{equation*}
  This completes the proof of
  Proposition~\ref{thm:prediction_intervals}. 
\end{proof}

\subsection{Proof of
  Theorem~\ref{thm:consistency}}\label{proof:consistency}

\begin{proof}
  Recall, that the directional derivative can be expressed in
  multi-index notation as
  $D_v^{\ell}f(z)=\sum_{\balpha=\ell}\partial^{\balpha}f(z)v^{\balpha}\frac{\ell!}{\balpha!}$. Furthermore,
  by compactness of $\mathcal{X}$ it holds that
  $K\coloneqq \sup_{x,y\in\mathcal{X}}\|x-y\|_2<\infty$.  Throughout,
  the proof we fix $x\in\mathcal{X}$. We begin by defining for all
  $w,z\in\mathcal{X}$ the functions
  \begin{equation*}
    F_1(w)\coloneqq \sum_{\ell=0}^{q-1}\sum_{|\balpha|=\ell}\partial^{\balpha}\Phi_{0}(w)\frac{(x-w)^{\balpha}}{\balpha!}
    \quad\text{and}\quad
    F_2(w,z)\coloneqq \sum_{|\balpha|=q}\partial^{\balpha}\Phi_{0}(z)\frac{(x-w)^{\balpha}}{\balpha!}.
  \end{equation*}
  Then, it holds, using the multi-index expression of the directional
  derivative, that
  \begin{equation*}
    B^{\operatorname{up}}_{\Phi_{0},\Din}(x)=\inf_{w\in\Din}\left(F_1(w)+\sup_{z\in\Din}F_2(w,z)\right).
  \end{equation*}
  Similarly define
  for all $w,z\in\mathcal{X}$ the functions
  \begin{equation*}
    \widehat{F}_1(w)\coloneqq \sum_{\ell=0}^{q-1}\sum_{|\balpha|=\ell}\partial^{\balpha}\widehat{\Phi}_n(w)\frac{(x-w)^{\balpha}}{\balpha!}
    \quad\text{and}\quad
    \widehat{F}_2(w,z)\coloneqq \sum_{|\balpha|=q}\partial^{\balpha}\widehat{\Phi}_n(z)\frac{(x-w)^{\balpha}}{\balpha!}.
  \end{equation*}
  Then, it holds that
  \begin{equation*}
    \widehat{B}^{\operatorname{up}}_n(x)=\min_{i\in\{1,\dots,n\}}\left(\widehat{F}_1(X_i)+\max_{k\in\{1,\dots,n\}}\widehat{F}_2(X_i,X_k)\right).
  \end{equation*}
  Lastly, define for all $w\in\Din$ the functions
  \begin{equation*}
    G(w)\coloneqq F_1(w)+\sup_{z\in\Din}F_2(w,z)
    \quad\text{and}\quad
    \widehat{G}(w)\coloneqq \widehat{F}_1(w)+\max_{k}\widehat{F}_2(w,X_k).
  \end{equation*}
  For the remaining proof we make use of the following properties:
  \begin{itemize}
  \item[(i)] $F_1$ is Lipschitz with constant $L_1>0$.
  \item[(ii)] There exists a constant $L_2>0$ such that for all
    $z\in\mathcal{X}$ the function $w\mapsto F_2(w, z)$ is
    Lipschitz with constant $L_2$.
  \item[(iii)] There exists a constant $L_3>0$ such that for all
    $w\in\mathcal{X}$ the function $z\mapsto F_2(w, z)$ is Lipschitz
    with constant $L_3$.
  \item[(iv)] For all functions $f:\mathcal{D}\rightarrow\mathbb{R}$
    and $f:\mathcal{D}\rightarrow\mathbb{R}$ it holds that
    \begin{equation*}
      |\sup_{z\in\mathcal{D}} f(z) - \sup_{z\in\mathcal{D}} g(z)|\leq \sup_{z\in\mathcal{D}}|f(z)-g(z)|
      \quad\text{and}\quad
      |\inf_{z\in\mathcal{D}} f(z) - \inf_{z\in\mathcal{D}} g(z)|\leq \sup_{z\in\mathcal{D}}|f(z)-g(z)|.
    \end{equation*}
  \end{itemize}
  Property~(i) holds since any continuously differentiable function on
  a compact set is Lipschitz. To prove property (ii) we use that by
  compactness of $\mathcal{X}$ there exists $\bar{K}^1\in\mathbb{R}$
  such that for all $\balpha\in\mathbb{N}^d$ with $|\balpha|=q$, it
  holds that
  $\sup_{z\in\mathcal{X}}|\partial^{\balpha}\Phi_0(z)|<\bar{K}^1$. Moreover,
  for all $\balpha\in\mathbb{N}^d$ with $|\balpha|=q$, the function
  $w\mapsto (x-w)^{\balpha}$ is Lipschitz with constant
  $\bar{L}_{\balpha}^1$ (since it is differentiable and $\mathcal{X}$
  is compact). Then, for all $z,w,w'\in\mathcal{X}$ we can apply the
  triangle inequality and bound each term to get
  \begin{equation*}
    |F_2(w, z) - F_2(w', z)|\leq\sum_{|\balpha|=q}\frac{\bar{K}^1\bar{L}^1_{\balpha}}{\balpha!}\|w-w'\|_2,
  \end{equation*}
  which proves (ii) since the resulting constant does not depend on
  $z$. For (iii), we additionally use that since $\Phi_0$ is
  $(q+1)$-differentiable it holds for all $\balpha\in\mathbb{N}^d$
  with $|\balpha|=q$ that $\partial^{\balpha}\Phi_0$ is Lipschitz with
  constant $\bar{L}_{\balpha}^2$ (since $\mathcal{X}$ is
  compact). Then, for $w,z,z'\in\mathcal{X}$ it holds using the
  triangle inequality and bounding the terms that
  \begin{equation*}
    |F_2(w, z) - F_2(w, z')|\leq\sum_{|\balpha|=q}\frac{K^q\bar{L}^2_{\balpha}}{\balpha!}\|z-z'\|_1,
  \end{equation*}
  which proves (iii) since the resulting constant does not depend on
  $w$. Finally, for (iv) observe that for all $w\in\mathcal{D}$ it holds
  \begin{equation*}
    f(w)\leq \sup_{z\in\mathcal{D}}|f(z)-g(z)| + g(w).
  \end{equation*}
  Now taking either the sup or inf on both sides, rearranging and then
  swapping the roles of $f$ and $g$ proves the result.

  We are now ready to prove the main result in three steps:
  \begin{itemize}
  \item Step 1: Show that
    \begin{equation}
      \label{eq:convergence_of_F1}
      \sup_{w\in\Din}\left|F_1(w)-\widehat{F}_1(w)\right|\overset{P_0}{\longrightarrow}0
      \quad\text{as }n\rightarrow\infty.
    \end{equation}
  \item Step 2: Show that
    \begin{equation}
      \label{eq:convergence_of_F2}
      \sup_{w\in\Din}\left|\sup_{z\in\Din}F_2(w,
        z)-\max_{k}\widehat{F}_2(w,
        X_k)\right|\overset{P_0}{\longrightarrow}0
      \quad\text{as }n\rightarrow\infty.
    \end{equation}
  \item Step 3: Conclude the proof using $G$ and $\widehat{G}$.
  \end{itemize}

  \noindent \textit{Step 1:} To see this, we use the triangle
  inequality and bound the resulting terms as follows
  \begin{align*}
    &\sup_{w\in\Din}\left|F_1(w)-\widehat{F}_1(w)\right|\\
    &\quad=\sup_{w\in\Din}\left|\sum_{\ell=0}^{q-1}\sum_{|\balpha|=\ell}\left(\partial^{\balpha}\Phi_{0}(w)-\partial^{\balpha}\widehat{\Phi}_n(w)\right)\frac{(x-w)^{\balpha}}{\balpha!}\right|\\
    &\quad\leq\sum_{\ell=0}^{q-1}\sum_{|\balpha|=\ell}\sup_{w\in\Din}\left|\partial^{\balpha}\Phi_{0}(w)-\partial^{\balpha}\widehat{\Phi}_n(w)\right|\frac{K^q}{\balpha!}.
  \end{align*}
  Since each of the summands converges in probability to zero by
  assumption, this proves \eqref{eq:convergence_of_F1}.

  \noindent \textit{Step 2:} We begin by fixing $w\in\Din$. Then it
  holds that
  \begin{align}
    &\left|\sup_{z\in\Din}F_2(w, z)-\max_{k}\widehat{F}_2(w, X_k)\right|\nonumber\\
    &\quad\leq
      \left|\max_{k}\widehat{F}_2(w,X_k)-\max_{k}F_2(w,X_k)\right| +
      \left|\sup_{z\in\Din}F_2(w, z)-\max_{k}F_2(w, X_k)\right|\nonumber\\
    &\quad\leq \max_{k}\left|\widehat{F}_2(w,X_k)-F_2(w,X_k)\right| +
      \left|\sup_{z\in\Din}F_2(w, z)-\max_{k}F_2(w, X_k)\right|\nonumber\\
    &\quad\leq
      \sup_{z\in\Din}\left|\widehat{F}_2(w,z)-F_2(w,z)\right| +
      \left|\sup_{z\in\Din}F_2(w, z)-\max_{k}F_2(w, X_k)\right|,\label{eq:F2bound}
  \end{align}
  where for the second inequality, we used property~(iv). Next,
  we bound each summand separately. First, using the triangle
  inequality and bounding the terms we get
  \begin{align}
    \sup_{z\in\Din}\left|\widehat{F}_2(w, z)-F_2(w, z)\right|
    &=\sup_{z\in\Din}\left|\sum_{|\balpha|=q}\left(\partial^{\balpha}\widehat{\Phi}_n(z)-\partial^{\balpha}\Phi_{0}(z)\right)\frac{(x-w)^{\balpha}}{\balpha!}\right|\nonumber\\
    &\leq\sum_{|\balpha|=q}\sup_{z\in\Din}\left|\partial^{\balpha}\widehat{\Phi}_n(z)-\partial^{\balpha}\Phi_{0}(z)\right|\frac{K^q}{\balpha!}.\label{eq:F2bound_part1}
  \end{align}
  Second, fix $w\in\Din$, then since $\mathcal{X}$ is compact there
  exists $z_w\in\Din$ (since $\Din$ is closed) such that
  $\sup_{z\in\Din}F_2(w,z)=F_2(w,z_w)$. Next, recall
  $\Lambda_n=\sup_{z\in\Din}\min_{k}\|X_k-z\|_2$ and use that
  $X_1,\ldots,X_n\in\Din$ to get
  \begin{align}
    \left|\sup_{z\in\Din}F_2(w, z)-\max_{k}F_2(w,
    X_k)\right|
    &=F_2(w, z_w)-\max_{k}F_2(w,
      X_k)\nonumber\\
    &\leq F_2(w,
      z_w)-\inf_{\substack{z\in\Din:\\\|z -z_w\|_2\leq\Lambda_n}}F_2(w,z)\nonumber\\
    &\leq F_2(w, z_w)-(F_2(w,z_w)-\Lambda_n L_3)\nonumber\\
    &= \Lambda_n L_3,\label{eq:dense_points_sampling}
  \end{align}
  where fore the last inequality we used that $F_2$ is Lipschitz with
  constant $L_3$ in the second argument by property~(iii). Since the
  bound does not depend on $w$ this implies
  \begin{equation}
    \label{eq:F2bound_part2}
    \sup_{w\in\Din}\left|\sup_{z\in\Din}F_2(w, z)-\max_{k}F_2(w,
      X_k)\right|\leq \Lambda_n L_3.
  \end{equation}
  Using \eqref{eq:F2bound_part1} and \eqref{eq:F2bound_part2} to further bound \eqref{eq:F2bound} we get
  \begin{align*}
    &\sup_{w\in\Din}\left|\sup_{z\in\Din}F_2(w, z)-\max_{k}\widehat{F}_2(w, X_k)\right|\\
    &\quad\leq
      \sum_{|\balpha|=q}\sup_{z\in\Din}\left|\partial^{\balpha}\widehat{\Phi}_n(z)-\partial^{\balpha}\Phi_{0}(z)\right|\frac{K^q}{\balpha!}
      + \Lambda_n L_3.
  \end{align*}
  Since each summand converges in probability to zero by assumption, this implies
  \eqref{eq:convergence_of_F2}.

  \noindent\textit{Step 3:}
  Since by the triangle inequality it holds that
  \begin{equation*}
    \sup_{w\in\Din}\left|G(w)-\widehat{G}(w)\right|
    \leq \sup_{w\in\Din}\left|F_1(w)-\widehat{F}_1(w)\right| + \sup_{w\in\Din}\left|\sup_{z\in\Din}F_2(w, z)-\max_{k}\widehat{F}_2(w, X_k)\right|,
  \end{equation*}
  we get by \eqref{eq:convergence_of_F1} and
  \eqref{eq:convergence_of_F2} that
  \begin{equation}
    \label{eq:convergence_of_G}
    \sup_{w\in\Din}\left|G(w)-\widehat{G}(w)\right|\overset{P_0}{\longrightarrow}0
    \quad\text{as }n\rightarrow\infty.
  \end{equation}
  Similar to the argument in \eqref{eq:F2bound}, we get
  \begin{align}
    \left|\inf_{w\in\Din}G(w)-\min_k\widehat{G}(X_k)\right|
    &\leq
      \left|\min_{k}\widehat{G}(X_k)-\min_{k}G(X_k)\right| +
      \left|\inf_{w\in\Din}G(w)-\min_{k}G(X_k)\right|\nonumber\\
    &\leq \max_{k}
      \left|\widehat{G}(X_k)-G(X_k)\right| +
      \left|\inf_{w\in\Din}G(w)-\min_{k}G(X_k)\right|\nonumber\\
    &\leq
      \sup_{w\in\Din}\left|\widehat{G}(w)-G(w)\right| +
      \left|\inf_{w\in\Din}G(w)-\min_{k}G(X_k)\right|,\label{eq:Gbound_almost}
  \end{align}
  where for the second inequality we used property~(iv).  Moreover,
  for all $x,y\in\Din$ we can use properties~(i),~(ii) and~(iv) to get
  \begin{align*}
    \left|G(x)-G(y)\right|
    &\leq \left|F_1(x)-F_1(y)\right| +
      \left|\sup_{z\in\Din}F_2(x,z)-\sup_{z\in\Din}F_2(y,z)\right|\\
    &\leq L_1\left\|x-y\right\|_2 +
      \sup_{z\in\Din}\left|F_2(x,z)-F_2(y,z)\right|\\
    &\leq L_1\left\|x-y\right\|_2 + L_2\left\|x-y\right\|_2.
  \end{align*}
  Hence, letting $w^*\in\Din$ be such that
  $\inf_{w\in\Din}G(w)=G(w^*)$, we get by a similar argument as in
  \eqref{eq:dense_points_sampling} that
  \begin{align*}
    \left|\inf_{w\in\Din}G(w)-\min_{k}G(X_k)\right|
    &=\min_{k}G(X_k)-G(w^*)\nonumber\\
    &\leq \sup_{\substack{w\in\Din:\\\|w-w^*\|_2\leq\Lambda_n}}G(w)-G(w^*)\nonumber\\
    &\leq G(w^*)+\Lambda_n (L_1+L_2)-G(w^*)\nonumber\\
    &= \Lambda_n (L_1+L_2).
  \end{align*}
  Using this in \eqref{eq:Gbound_almost}, leads to
  \begin{equation*}
    \left|\inf_{w\in\Din}G(w)-\min_k\widehat{G}(X_k)\right|
    \leq \sup_{w\in\Din}\left|\widehat{G}(w)-G(w)\right| +
    \Lambda_n (L_1+L_2).
  \end{equation*}
  Since both summands converge in probability to zero by \eqref{eq:convergence_of_G} and
  assumption, this implies
  \begin{equation*}
    \left|B^{\operatorname{up}}_{\Phi_{0},\Din}(x)-\widehat{B}^{\operatorname{up}}_n(x)\right|
    =\left|\inf_{w\in\Din}G(w)-\min_k\widehat{G}(X_k)\right|\overset{P_0}{\longrightarrow}0
    \quad\text{as }n\rightarrow\infty,
  \end{equation*}
  which completes the proof of Theorem~\ref{thm:consistency}.
\end{proof}

\section{Auxiliary results}\label{sec:auxiliary_results}

\begin{lemma}
  \label{thm:uniform_continuity_of_bounds}
  Assume $\mathcal{X}$ is compact, let $f\in C^q(\mathcal{X})$ and
  $\mathcal{D}\subseteq\mathcal{X}$ non-empty closed. Then, there
  exists a constant $C>0$ such that for all $\varepsilon>0$ it holds
  that
  \begin{equation*}
    \sup_{\substack{x,y\in\mathcal{X}:\\ \|x-y\|_2<\varepsilon}}\big| B_{q, f,
      \mathcal{D}}^{\operatorname{up}}(x)- B_{q, f,
      \mathcal{D}}^{\operatorname{up}}(y)\big|<C\,\varepsilon.
  \end{equation*}
\end{lemma}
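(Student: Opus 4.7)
The plan is to show Lipschitz continuity of $B_{q,f,\mathcal{D}}^{\operatorname{up}}$ by rewriting the objective of the infimum in multi-index form (so that the apparent singularity of $\overline{v}(x_0,x)$ at $x=x_0$ disappears), obtaining a uniform Lipschitz bound on the inner function, and then transferring the bound through the infimum.

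First I would use the identity $D_v^\ell f(x_0)=\sum_{|\balpha|=\ell}\frac{\ell!}{\balpha!}\partial^{\balpha}f(x_0)v^{\balpha}$ with $v=\overline{v}(x_0,x)$ to rewrite, for all $x_0\in\mathcal{D}$ and $x\in\mathcal{X}$,
\begin{equation*}
    h_x(x_0)\coloneqq\sum_{\ell=0}^{q-1}\sum_{|\balpha|=\ell}\partial^{\balpha}f(x_0)\frac{(x-x_0)^{\balpha}}{\balpha!}+\sup_{z\in\mathcal{D}}\sum_{|\balpha|=q}\partial^{\balpha}f(z)\frac{(x-x_0)^{\balpha}}{\balpha!},
\end{equation*}
so that $B_{q,f,\mathcal{D}}^{\operatorname{up}}(x)=\inf_{x_0\in\mathcal{D}}h_x(x_0)$. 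This removes the division by $\|x-x_0\|_2$ hidden in $\overline{v}$, and makes $x\mapsto h_x(x_0)$ a polynomial in $x-x_0$.

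Second I would exhibit a uniform Lipschitz constant for $x\mapsto h_x(x_0)$ on $\mathcal{X}$, independent of $x_0\in\mathcal{D}$. Since $\mathcal{X}$ is compact and $f\in C^q(\mathcal{X})$, the quantity $M\coloneqq\max_{|\balpha|\leq q}\sup_{z\in\mathcal{X}}|\partial^{\balpha}f(z)|$ is finite, and the diameter $K\coloneqq\sup_{x,y\in\mathcal{X}}\|x-y\|_2$ is finite. Each monomial $u\mapsto u^{\balpha}$ is Lipschitz on the bounded set $\mathcal{X}-\mathcal{X}\subseteq \overline{B}(0,K)$ with a constant depending only on $|\balpha|$ and $K$. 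The low-order terms $\sum_{|\balpha|=\ell}\partial^{\balpha}f(x_0)(x-x_0)^{\balpha}/\balpha!$ are therefore Lipschitz in $x$ with a constant bounded by a quantity depending only on $M$, $K$, $q$, $d$; crucially this bound does not involve $x_0$. For the order-$q$ term, the inequality $|\sup_z F(z,x)-\sup_z F(z,y)|\leq\sup_z|F(z,x)-F(z,y)|$ (property~(iv) used in the proof of Theorem~\ref{thm:taylor_extrapolation_bounds}) reduces the Lipschitz control to that of $z$-wise polynomials $P_z(u)=\sum_{|\balpha|=q}\partial^{\balpha}f(z)u^{\balpha}/\balpha!$, which are again Lipschitz on bounded $u$ with a constant independent of $z$. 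Combining these yields a constant $L>0$ with $|h_x(x_0)-h_y(x_0)|\leq L\|x-y\|_2$ for every $x_0\in\mathcal{D}$ and every $x,y\in\mathcal{X}$.

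Third I would apply once more the elementary inequality $|\inf_{x_0}\varphi(x_0)-\inf_{x_0}\psi(x_0)|\leq\sup_{x_0}|\varphi(x_0)-\psi(x_0)|$ to obtain
\begin{equation*}
    \bigl|B_{q,f,\mathcal{D}}^{\operatorname{up}}(x)-B_{q,f,\mathcal{D}}^{\operatorname{up}}(y)\bigr|\leq\sup_{x_0\in\mathcal{D}}|h_x(x_0)-h_y(x_0)|\leq L\|x-y\|_2,
\end{equation*}
so that any $C>L$ (e.g.\ $C=L+1$) yields the claimed strict inequality whenever $\|x-y\|_2<\varepsilon$. The main obstacle is nothing deep but rather the bookkeeping in the second step: one must verify that, after the multi-index rewrite, the Lipschitz estimates on the monomials and on the $\sup_{z\in\mathcal{D}}$ term admit a joint bound that is genuinely uniform in $x_0\in\mathcal{D}$, and this relies on the compactness of $\mathcal{X}$ through the finiteness of both $M$ and $K$.
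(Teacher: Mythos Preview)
Your argument is correct and in fact simpler than the paper's own proof. The key difference is that you first rewrite the inner objective in multi-index form as $h_x(x_0)=\sum_{\ell=0}^{q-1}\sum_{|\balpha|=\ell}\partial^{\balpha}f(x_0)\tfrac{(x-x_0)^{\balpha}}{\balpha!}+\sup_{z\in\mathcal{D}}\sum_{|\balpha|=q}\partial^{\balpha}f(z)\tfrac{(x-x_0)^{\balpha}}{\balpha!}$, which is a polynomial in $x-x_0$ and hence globally Lipschitz on the compact $\mathcal{X}$ with a constant depending only on $M$, $K$, $q$, $d$; you then pass the Lipschitz bound through the $\sup_z$ and the $\inf_{x_0}$ via the elementary inequality $|\inf\varphi-\inf\psi|\leq\sup|\varphi-\psi|$. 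The paper instead keeps the directional-derivative representation $B^{x_0}(x)$, which carries the apparent singularity of $\overline{v}(x_0,x)$ at $x=x_0$; to cope with this it first fixes a minimizer $\optx(x^*)$ and reduces to bounding $|B^{\optx(x^*)}(x^*)-B^{\optx(x^*)}(y^*)|$, and then splits into the two cases $\|x^*-\optx(x^*)\|_2<2\varepsilon$ (where everything is small) and $\|x^*-\optx(x^*)\|_2\geq 2\varepsilon$ (where the unit-vector map is Lipschitz away from the singularity). Your multi-index rewrite eliminates the need for this case distinction entirely, at the cost of nothing---the rewrite is exact and is in fact used elsewhere in the paper (e.g.\ in the definition of $B_z$ in the proof of Theorem~\ref{thm:taylor_extrapolation_bounds} and in the plug-in estimates \eqref{eq:vanilla_lower_bound_estimator}--\eqref{eq:vanilla_upper_bound_estimator}). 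One minor remark: the inequality you cite as ``property~(iv) used in the proof of Theorem~\ref{thm:taylor_extrapolation_bounds}'' is actually stated as property~(iv) in the proof of Theorem~\ref{thm:consistency}, but it is of course elementary and your use of it is correct.
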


\begin{proof}
  First, introduce for all $x_0\in\mathcal{D}$ the functions
  $B^{x_0}:\mathcal{X}\rightarrow\mathbb{R}$ defined for all
  $x\in\mathcal{X}$ by
  \begin{equation*}
    B^{x_0}(x)\coloneqq \sum_{\ell=0}^{q-1}D_{\overline{v}(x_0, x)}^{\ell}f(x_0)\cdot\frac{\|x-x_0\|_2^{\ell}}{\ell!} + \sup_{z\in\mathcal{D}}D^q_{\overline{v}(x_0,x)}f(z)\cdot \frac{\|x-x_0\|_2^{q}}{q!}.
  \end{equation*}
  Fix $\varepsilon>0$ and $x^*,y^*\in\mathcal{X}$ with
  $\|x^*-y^*\|<\varepsilon$. Since $\mathcal{D}$ is compact, we can
  define a function $\optx:\mathcal{X}\rightarrow\mathcal{D}$ such
  that for all $x\in\mathcal{X}$ it holds that
  $\optx(x)\in\mathcal{D}$ satisfies
  \begin{equation*}
    B_{q,f,\mathcal{D}}^{\operatorname{up}}(x)=B^{\optx(x)}(x).
  \end{equation*}
  By construction this implies that
  \begin{equation*}
    \left| B_{q, f,\mathcal{D}}^{\operatorname{up}}(x^*) - B_{q,
        f,\mathcal{D}}^{\operatorname{up}}(y^*)\right|
    =\left| B^{\optx(x^*)}(x^*) -
      B^{\optx(y^*)}(y^*)\right|.
  \end{equation*}
  Furthermore, since the points $\optx(x^*)$ and
  $\optx(y^*)$ are chosen as minimizers, it holds that
  \begin{equation*}
    \left| B^{\optx(x^*)}(x^*) -
      B^{\optx(y^*)}(y^*)\right|\leq
    \begin{cases}
      \big| B^{\optx(x^*)}(x^*) -
      B^{\optx(x^*)}(y^*)\big|\quad
      &\text{if } B^{\optx(x^*)}(x^*)\leq B^{\optx(y^*)}(y^*)\\
      \big| B^{\optx(y^*)}(x^*) -
      B^{\optx(y^*)}(y^*)\big|\quad &\text{otherwise}.
    \end{cases}
  \end{equation*}
  Without loss of generality we now assume
  $B^{\optx(x^*)}(x^*)\leq B^{\optx(y^*)}(y^*)$ and
  bound the expression in this case. The second case follows
  analogous with the role of $x^*$ and $y^*$ switched. First, by the
  triangle inequality we get
  \begin{align}
    &\left| B^{\optx(x^*)}(x^*) -
      B^{\optx(x^*)}(y^*)\right|\nonumber\\
    &\quad=\bigg| \sum_{\ell=0}^{q-1}\frac{1}{\ell!}\left(D_{\overline{v}(\optx(x^*),
      x^*)}^{\ell}f(\optx(x^*))\|x^*-\optx(x^*)\|_2^{\ell}-D_{\overline{v}(\optx(x^*),
      y^*)}^{\ell}f(\optx(x^*))\|y^*-\optx(x^*)\|_2^{\ell}\right)\nonumber \\
    &\qquad\qquad + \frac{1}{q!}\left(\sup_{z\in\mathcal{D}}D^q_{\overline{v}(\optx(x^*),
      x^*)}f(z)
      \|x^*-\optx(x^*)\|_2^{q}-\sup_{z\in\mathcal{D}}D^q_{\overline{v}(\optx(x^*),
      y^*)}f(z)
      \|y^*-\optx(x^*)\|_2^{q}\right)\bigg|\nonumber\\
    &\quad\leq\sum_{\ell=0}^{q-1}\frac{1}{\ell!}\bigg|D_{\overline{v}(\optx(x^*),
      x^*)}^{\ell}f(\optx(x^*))\|x^*-\optx(x^*)\|_2^{\ell}-D_{\overline{v}(\optx(x^*),
      y^*)}^{\ell}f(\optx(x^*))\|y^*-\optx(x^*)\|_2^{\ell}\bigg| \nonumber \\
    &\qquad\qquad + \frac{1}{q!}\bigg| \sup_{z\in\mathcal{D}}D^q_{\overline{v}(\optx(x^*),
      x^*)}f(z)
      \|x^*-\optx(x^*)\|_2^{q}-\sup_{z\in\mathcal{D}}D^q_{\overline{v}(\optx(x^*),
      y^*)}f(z)
      \|y^*-\optx(x^*)\|_2^{q}\bigg|.\label{eq:main_bound_individual_summands}
  \end{align}
  We now consider the summands individually. Firstly, for all
  $\ell\in\{1,\ldots, q-1\}$, it holds that
  \begin{align}
    &\bigg|D_{\overline{v}(\optx(x^*),
      x^*)}^{\ell}f(\optx(x^*))\|x^*-\optx(x^*)\|_2^{\ell}-D_{\overline{v}(\optx(x^*),
      y^*)}^{\ell}f(\optx(x^*))\|y^*-\optx(x^*)\|_2^{\ell}\bigg|\nonumber\\
    &\quad\leq \bigg|D_{\overline{v}(\optx(x^*),
      x^*)}^{\ell}f(\optx(x^*))\bigg| \,
      \bigg|\|x^*-\optx(x^*)\|_2^{\ell} -
      \|y^*-\optx(x^*)\|_2^{\ell}\bigg|\nonumber \\
    &\qquad\qquad + \Big|D_{\overline{v}(\optx(x^*),
      x^*)}^{\ell}f(\optx(x^*)) - D_{\overline{v}(\optx(x^*),
      y^*)}^{\ell}f(\optx(x^*))\Big| \|y^*-\optx(x^*)\|_2^{\ell}.\label{eq:lterm_bound}
  \end{align}
  We separate two cases: (i) $\|x^*-\optx(x^*)\|_2< 2\varepsilon$ and
  (ii) $\|x^*-\optx(x^*)\|_2\geq 2\varepsilon$. For case (i), it holds
  by the triangle inequality that
  $\|y^*-\optx(x^*)\|_2\leq \|y^*-x^*\|_2 + \|x^*-\optx(x^*)\|_2\leq 4
  \varepsilon$. Combined with \eqref{eq:lterm_bound} and using that
  all directional derivatives of $f$ up to order $q$ are bounded by a
  constant $K>0$ this implies in case (i) that
  \begin{equation}
    \label{eq:lterm_bound_casei}
    \bigg|D_{\overline{v}(\optx(x^*),
      x^*)}^{\ell}f(\optx(x^*))\cdot\|x^*-\optx(x^*)\|_2^{\ell}-D_{\overline{v}(\optx(x^*),
      y^*)}^{\ell}f(\optx(x^*))\cdot\|y^*-\optx(x^*)\|_2^{\ell}\bigg|\leq
    14 K \varepsilon.
  \end{equation}
  For case (ii), consider
  $\mathcal{X}_{\varepsilon}\coloneqq\{x\in\mathcal{X}\mid
  \inf_{x_0\in\mathcal{D}}\|x-x_0\|_2\geq \varepsilon\}$ which is compact
  since $\mathcal{X}$ is compact. Next, observe that for all
  $x_0\in\mathcal{D}$ the function $x\mapsto \|x-x_0\|_2^{\ell}$ is
  Lipschitz continuous on $\mathcal{X}_{\varepsilon}$ with a constant
  $0<L_{x_0}^{\ell}<\infty$, which implies that
  \begin{equation}
    \label{eq:lterm_bound_caseii_1}
    \bigg|\|x^*-\optx(x^*)\|_2^{\ell} -
    \|y^*-\optx(x^*)\|_2^{\ell}\bigg|\leq \sup_{x_0\in\mathcal{D}}L_{x_0}^{\ell}\|x^*-y^*\|_2.
  \end{equation}
  Since $\mathcal{D}$ is compact the supremum is attained and
  $\bar{L}^{\ell}\coloneqq\sup_{x_0\in\mathcal{D}}L_{x_0}^{\ell}<\infty$.
  Furthermore, expressing the directional derivative in terms of
  partial derivatives (using multi-index notation) and letting $M>0$
  be an upper bound on all partial derivatives of $f$ up to order $q$,
  it holds that
  \begin{align}
    &\big| D^{\ell}_{\overline{v}(\optx(x^*),
      x^*)}f(\optx(x^*)) -
      D^{\ell}_{\overline{v}(\optx(x^*), y^*)}f(\optx(x^*))\big|\nonumber\\
    &\quad=\bigg|\sum_{\balpha\in\mathbb{N}^d:
      |\balpha|=\ell}\partial^{\balpha}f(\optx(x^*)) \tfrac{\ell!}{\balpha!}\left(\frac{x^*-\optx(x^*)}{\|x^*-\optx(x^*)\|_2}\right)^{\balpha}-\sum_{\balpha\in\mathbb{N}^d:
      |\balpha|=\ell}\partial^{\balpha}f(\optx(x^*))
      \tfrac{\ell!}{\balpha!}\left(\frac{y^*-\optx(x^*)}{\|y^*-\optx(x^*)\|_2}\right)^{\balpha}\bigg|\nonumber\\
    &\quad\leq \sum_{\balpha\in\mathbb{N}^d:
      |\balpha|=\ell}M \tfrac{\ell!}{\balpha!}\left|\left(\frac{x^*-\optx(x^*)}{\|x^*-\optx(x^*)\|_2}\right)^{\balpha}-\left(\frac{y^*-\optx(x^*)}{\|y^*-\optx(x^*)\|_2}\right)^{\balpha}\right|.\label{eq:term_without_sup}
  \end{align}
  Next, observe that for all $\balpha\in\mathbb{N}^d$ with
  $|\balpha|=\ell$ and all $x_0\in\mathcal{D}$ it holds that the
  function $x\mapsto \left(\frac{x-x_0}{\|x-x_0\|}\right)^{\balpha}$
  is Lipschitz continuous on $\mathcal{X}_{\varepsilon}$ since it is
  differentiable everywhere on $\mathcal{X}_{\varepsilon}$ and has bounded
  derivatives. Denote by $N_{x_0}^{\ell}$ the corresponding Lipschitz
  constant, then
  \begin{equation}
    \label{eq:lterm_bound_caseii_2}
    \big| D^{\ell}_{\overline{v}(\optx(x^*),
      x^*)}f(\optx(x^*)) -
    D^{\ell}_{\overline{v}(\optx(x^*), y^*)}f(\optx(x^*))\big|\leq \Big(\textstyle\sum_{\balpha\in\mathbb{N}^d:
      |\balpha|=\ell}M \tfrac{\ell!}{\balpha!}\Big)\sup_{x_0\in\mathcal{D}}N_{x_0}^{\ell}\|x^*-y^*\|_2.
  \end{equation}
  Again, since $\mathcal{D}$ is compact the supremum is attained and
  $\bar{N}^{\ell}\coloneqq\left(\sum_{\balpha\in\mathbb{N}^d:
      |\balpha|=\ell}M
    \tfrac{\ell!}{\balpha!}\right)\sup_{x_0\in\mathcal{D}}N_{x_0}^{\ell}<\infty$. Hence,
  combining \eqref{eq:lterm_bound}, \eqref{eq:lterm_bound_caseii_1}
  and \eqref{eq:lterm_bound_caseii_2} we get in case (ii) that
  \begin{equation}
    \label{eq:lterm_bound_caseii_final}
    \bigg|D_{\overline{v}(\optx(x^*),
      x^*)}^{\ell}f(\optx(x^*))\cdot\|x^*-\optx(x^*)\|_2^{\ell}-D_{\overline{v}(\optx(x^*),
      y^*)}^{\ell}f(\optx(x^*))\cdot\|y^*-\optx(x^*)\|_2^{\ell}\bigg|\leq
    K \bar{L}^{\ell} \varepsilon + \bar{N}^{\ell} C \varepsilon,
  \end{equation}
  where $C\coloneqq\sup_{x,y\in\mathcal{X}}\|x-y\|<\infty$ (since $\mathcal{X}$
  is compact). Combining the bound \eqref{eq:lterm_bound_casei} from case (i) and the
  bound \eqref{eq:lterm_bound_caseii_final} from case (ii) implies
  \begin{align}
    &\bigg|D_{\overline{v}(\optx(x^*),
      x^*)}^{\ell}f(\optx(x^*))\cdot\|x^*-\optx(x^*)\|_2^{\ell}-D_{\overline{v}(\optx(x^*),
      y^*)}^{\ell}f(\optx(x^*))\cdot\|y^*-\optx(x^*)\|_2^{\ell}\bigg|\nonumber\\
    &\quad\leq
    \max\left(14 K, K \bar{L}^{\ell} + \bar{N}^{\ell} C\right)\varepsilon.
          \label{eq:lterm_bound_final}
  \end{align}
  The only summand in \eqref{eq:main_bound_individual_summands} that
  remains to be bounded is the one involving supremum terms. For this
  term the exact same bounds as above apply except for the bound in
  \eqref{eq:term_without_sup}. To bound this term, use compactness of
  $\mathcal{D}$ to define the function
  $\optz:\mathcal{X}\rightarrow\mathcal{D}$ which satisfies for all
  $x\in\mathcal{X}$ that
  \begin{equation*}
    \sup_{z\in\mathcal{D}}D^{q}_{\bar{v}(\optx(x^*), x)}f(z)=D^{q}_{\bar{v}(\optx(x^*), x)}f(\optz(x)).
  \end{equation*}
  Then, using the triangle inequality we get
  \begin{align*}
    &\big|\sup_{z\in\mathcal{D}} D^{q}_{\overline{v}(\optx(x^*),
      x^*)}f(z) -
      \sup_{z\in\mathcal{D}} D^{q}_{\overline{v}(\optx(x^*),
      y^*)}f(z)\big|\\
    &\quad=\big|D^{q}_{\overline{v}(\optx(x^*),
      x^*)}f(\optz(x^*)) -
      D^{q}_{\overline{v}(\optx(x^*),
      y^*)}f(\optz(y^*))\big|\\
    &\quad\leq\big|D^{q}_{\overline{v}(\optx(x^*),
      x^*)}f(\optz(x^*)) -
      D^{q}_{\overline{v}(\optx(x^*),
      y^*)}f(\optz(x^*))\big| + \big|D^{q}_{\overline{v}(\optx(x^*),
      y^*)}f(\optz(y^*)) -
      D^{q}_{\overline{v}(\optx(x^*),
      x^*)}f(\optz(y^*))\big|\\
    &\quad\qquad\qquad + \big|D^{q}_{\overline{v}(\optx(x^*),
      y^*)}f(\optz(x^*)) -
      D^{q}_{\overline{v}(\optx(x^*),
      x^*)}f(\optz(y^*))\big|.
  \end{align*}
  Now for the last term we can observe that if
  $D^{q}_{\overline{v}(\optx(x^*), y^*)}f(\optz(x^*))\leq
  D^{q}_{\overline{v}(\optx(x^*), x^*)}f(\optz(y^*))$, we can use that
  by definition of $\optz$ it holds that
  $D^{q}_{\overline{v}(\optx(x^*), x^*)}f(\optz(y^*))\leq
  D^{q}_{\overline{v}(\optx(x^*), x^*)}f(\optz(x^*))$ and hence
  \begin{equation*}
    \big|D^{q}_{\overline{v}(\optx(x^*),
      y^*)}f(\optz(x^*)) -
    D^{q}_{\overline{v}(\optx(x^*),
      x^*)}f(\optz(x^*))\big|.
  \end{equation*}
  Similarly if $D^{q}_{\overline{v}(\optx(x^*), y^*)}f(\optz(x^*))\geq
  D^{q}_{\overline{v}(\optx(x^*), x^*)}f(\optz(y^*))$ we get
  \begin{equation*}
    \big|D^{q}_{\overline{v}(\optx(x^*),
      y^*)}f(\optz(y^*)) -
    D^{q}_{\overline{v}(\optx(x^*),
      x^*)}f(\optz(y^*))\big|.
  \end{equation*}
  Hence, combined we get that
  \begin{align*}
    &\big|\sup_{z\in\mathcal{D}} D^{q}_{\overline{v}(\optx(x^*),
      x^*)}f(z) -
      \sup_{z\in\mathcal{D}} D^{q}_{\overline{v}(\optx(x^*),
      y^*)}f(z)\big|\\
    &\quad\leq 2\big|D^{q}_{\overline{v}(\optx(x^*),
      x^*)}f(\optz(x^*)) -
      D^{q}_{\overline{v}(\optx(x^*),
      y^*)}f(\optz(x^*))\big| + 2\big|D^{q}_{\overline{v}(\optx(x^*),
      y^*)}f(\optz(y^*)) -
      D^{q}_{\overline{v}(\optx(x^*),
      x^*)}f(\optz(y^*))\big|.
  \end{align*}
  Using the same argument as for \eqref{eq:lterm_bound_caseii_2} this
  results in
  \begin{equation*}
    \big|\sup_{z\in\mathcal{D}} D^{q}_{\overline{v}(\optx(x^*),
      x^*)}f(z) -
      \sup_{z\in\mathcal{D}} D^{q}_{\overline{v}(\optx(x^*),
      y^*)}f(z)\big|\leq 4\bar{N}^{q}\|x^*-y^*\|.
  \end{equation*}
  In total, we get the following bound for the supremum term
  \begin{align}
    &\bigg|\sup_{z\in\mathcal{D}}D^q_{\overline{v}(\optx(x^*),
      x^*)}f(z)\|x^*-\optx(x^*)\|_2^{q}-\sup_{z\in\mathcal{D}}D^q_{\overline{v}(\optx(x^*),
      y^*)}f(z)\|y^*-\optx(x^*)\|_2^{q}\bigg|\nonumber\\
    &\quad\leq
    \max\left(14 K, K \bar{L}^{q}  + 4\bar{N}^{q} C\right)\varepsilon.
      \label{eq:supterm_bound_final}
  \end{align}
  Finally, using the bounds \eqref{eq:lterm_bound_final} and
  \eqref{eq:supterm_bound_final} in
  \eqref{eq:main_bound_individual_summands} leads to
  \begin{align}
    &\left| B^{\optx(x^*)}(x^*) -
      B^{\optx(x^*)}(y^*)\right|\nonumber\\
    &\quad\leq \varepsilon\underbrace{\left(\sum_{\ell=0}^{q-1}\frac{1}{\ell!} \max\left(14 K, K \bar{L}^{\ell}+ \bar{N}^{\ell} C\right) + \max\left(14 K , K \bar{L}^{q} + 4\bar{N}^{q} C
      \right)\right)}_{\eqqcolon c}. \label{eq:final_bound}
  \end{align}
  Since, $c>0$ only depends on $K$, $\bar{L}^{\ell}$,
  $\bar{N}^{\ell}$, $C$ and $q$, we have shown for all $\varepsilon>0$ and
  all $x^*,y^*\in\mathcal{X}$ with $\|x^*-y^*\|_2<\varepsilon$ that
  \begin{equation*}
    \left| B_{q, f,\mathcal{D}}^{\operatorname{up}}(x^*) - B_{q,
        f,\mathcal{D}}^{\operatorname{up}}(y^*)\right|\leq c\, \varepsilon,
  \end{equation*}
  which completes the proof of
  Lemma~\ref{thm:uniform_continuity_of_bounds}.
\end{proof}

\begin{lemma}
  \label{thm:properties_first_approximation}
  Assume $\mathcal{X}$ is compact, let $f\in C^q(\mathcal{X})$ and
  $\mathcal{D}\subseteq\mathcal{X}$ non-empty closed.  For all
  $z\in\mathcal{X}$ let $B_{z}:\mathbb{R}^d\rightarrow\mathbb{R}$ be
  the functions defined in \eqref{eq:B_z_delta} in the proof of
  Theorem~\ref{thm:taylor_extrapolation_bounds}.  Then, the following
  statements hold:
  \begin{itemize}
  \item[(i)] For all $z\in\mathcal{X}$ the function $B_z$ is
    infinitely often continuously differentiable and for all
    $x\in\mathbb{R}$ and all $\bbeta\in\mathbb{N}^d$ with
    $|\bbeta|\leq q$ it holds that
    \begin{equation*}
      \partial^{\bbeta}B_z(x)=\sum_{|\balpha|<
        q}\partial^{\balpha}f(\optx(z))\tfrac{(x-\optx(z))^{\balpha-\bbeta}}{(\balpha-\bbeta)!}\mathds{1}(\bbeta\leq\balpha)
        +\sum_{|\balpha|=q}\partial^{\balpha}f(\optz(z))\tfrac{(x-\optx(z))^{\balpha-\bbeta}}{(\balpha-\bbeta)!}\mathds{1}(\bbeta\leq\balpha)
    \end{equation*}
    In particular, for $|\bbeta|=q$ this implies
    \begin{equation*}
      \partial^{\bbeta}B_z(x)=\partial^{\bbeta}f(\optz(z)).
    \end{equation*}
  \item[(ii)] There exists a constant $C>0$ such that for all $\varepsilon>0$
    and all $z\in\mathcal{X}$ it holds that
    \begin{equation*}
      \sup_{\substack{x,y\in\mathcal{X}:\\ \|x-y\|_2<\varepsilon}}\big|B_z(x)-B_z(y)\big|<C\,\varepsilon.
    \end{equation*}
  \end{itemize}
\end{lemma}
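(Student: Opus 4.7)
For part~(i), the key observation is that, for fixed $z\in\mathcal{X}$, the points $\optx(z)$ and $\optz(z)$ are fixed elements of $\mathcal{D}$, and the values $\partial^{\balpha}f(\optx(z))$ and $\partial^{\balpha}f(\optz(z))$ are therefore fixed real numbers. Hence the definition of $B_z$ in \eqref{eq:B_z_delta} exhibits $B_z$ as a multivariate polynomial in $x$ of degree at most $q$, which is $C^{\infty}$. To obtain the stated formula, I would compute $\partial^{\bbeta}$ term by term, using the elementary identity that for any multi-index $\balpha\in\mathbb{N}^d$ and any $x_0\in\mathbb{R}^d$,
\begin{equation*}
  \partial^{\bbeta}\!\left(\tfrac{(x-x_0)^{\balpha}}{\balpha!}\right)=\tfrac{(x-x_0)^{\balpha-\bbeta}}{(\balpha-\bbeta)!}\,\mathds{1}(\bbeta\leq\balpha),
\end{equation*}
which follows from the single-variable rule $\tfrac{d^k}{dt^k}\tfrac{t^m}{m!}=\tfrac{t^{m-k}}{(m-k)!}\mathds{1}(k\leq m)$ applied coordinate-wise. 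Summing the resulting expressions yields the claimed formula. For the ``in particular'' statement, when $|\bbeta|=q$, the first sum vanishes (since $|\balpha|<q$ precludes $\bbeta\leq\balpha$) and in the second sum the condition $\bbeta\leq\balpha$ combined with $|\balpha|=q=|\bbeta|$ forces $\balpha=\bbeta$, so only the constant term $\partial^{\bbeta}f(\optz(z))$ survives.

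For part~(ii), my plan is to use the mean value inequality along the segment from $x$ to $y$, so the task reduces to bounding $\|\nabla B_z(\xi)\|_2$ uniformly in $z\in\mathcal{X}$ and $\xi\in\mathcal{X}$. Using the formula from part~(i) with $|\bbeta|=1$, each partial derivative $\partial_j B_z(\xi)$ is a polynomial of degree $\leq q-1$ in $(\xi-\optx(z))$ whose coefficients are products of $\partial^{\balpha}f$ evaluated at either $\optx(z)$ or $\optz(z)$, both of which lie in the compact set $\mathcal{D}$. Since $f\in C^q(\mathcal{X})$ and $\mathcal{D}\subseteq\mathcal{X}$ is compact, there is a constant $M$ with $\sup_{w\in\mathcal{D},|\balpha|\leq q}|\partial^{\balpha}f(w)|\leq M$, and since $\mathcal{X}$ is compact, there is a constant $K$ with $\sup_{x,w\in\mathcal{X}}\|x-w\|_2\leq K$. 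Combining these two uniform bounds with the explicit formula gives a constant $C>0$, depending only on $f$, $q$, and $\mathcal{X}$, such that $\|\nabla B_z(\xi)\|_2\leq C$ for all $z\in\mathcal{X}$ and $\xi\in\mathcal{X}$. The mean value inequality then yields $|B_z(x)-B_z(y)|\leq C\|x-y\|_2<C\varepsilon$ whenever $\|x-y\|_2<\varepsilon$, and taking the supremum over such pairs completes the proof.

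The main obstacle is purely bookkeeping: verifying that the constant $C$ can indeed be taken independent of $z$. This is ensured by the two ingredients noted above (continuous derivatives of $f$ on the compact set $\mathcal{D}$ and bounded diameter of $\mathcal{X}$), so no subtlety arises beyond careful tracking of the multi-index sums. No auxiliary results beyond elementary calculus and the definition of $B_z$ are needed.
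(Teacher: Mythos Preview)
Your proposal is correct and matches the paper's approach: part~(i) is handled identically (recognize $B_z$ as a degree-$q$ polynomial and differentiate term by term), and for part~(ii) the paper bounds the Lipschitz constant of each monomial $x\mapsto(x-\optx(z))^{\balpha}$ separately whereas you equivalently bound the whole gradient via the formula from~(i) and invoke the mean value inequality. One cosmetic point to tighten: the intermediate point $\xi$ from the mean value inequality lies on the segment $[x,y]\subseteq\operatorname{conv}(\mathcal{X})$, not necessarily in $\mathcal{X}$ itself, but since $B_z$ is a polynomial on all of $\mathbb{R}^d$ and $\operatorname{conv}(\mathcal{X})$ has the same diameter $K$, your uniform gradient bound extends there verbatim.
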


\begin{proof}
  Since $B_z$ is a polynomial of order $q$ it is infinitely often
  continuously differentiable. Moreover, a direct computation shows
  for all $x\in\mathbb{R}^d$ and all $\bbeta\in\mathbb{N}^d$ with
  $|\bbeta|\leq q$ that
  \begin{equation*}
    \label{eq:Bz_derivative_formula}
    \partial^{\bbeta}B_z(x)=\sum_{|\balpha|<
      q}\partial^{\balpha}f(\optx(z))\tfrac{(x-\optx(z))^{\balpha-\bbeta}}{(\balpha-\bbeta)!}\mathds{1}(\bbeta\leq\balpha)
    +\sum_{|\balpha|=q}\partial^{\balpha}f(\optz(z))\tfrac{(x-\optx(z))^{\balpha-\bbeta}}{(\balpha-\bbeta)!}\mathds{1}(\bbeta\leq\balpha).
  \end{equation*}
  For $|\bbeta|=q$ all summands are zero except the one with
  $\balpha=\bbeta$, hence
  \begin{equation*}
    \partial^{\bbeta}B_z(x)=\partial^{\bbeta}f(\optz(z)).
  \end{equation*}
  Next, we prove (ii). Fix $x,y,z\in\mathcal{X}$. Then, it holds that
  \begin{align}
    \big|B_z(x)-B_z(y)\big|
    &\leq
    \sum_{|\balpha|<q}\tfrac{1}{\balpha!}\big|\partial^{\balpha}f(\optx(z))\big|\cdot
      \big|(x-\optx(z))^{\balpha} - (y-\optx(z))^{\balpha}\big|\nonumber\\
    &\qquad\qquad +
    \sum_{|\balpha|=q}\tfrac{1}{\balpha!}\big|\partial^{\balpha}f(\optz(z))\big|\cdot
    \big|(x-\optx(z))^{\balpha} - (y-\optx(z))^{\balpha}\big|.\label{eq:Bx_bound1}
  \end{align}
  Next, observe that for all $x_0\in\mathcal{X}$ the function
  $x\mapsto (x-x_0)^{\balpha}$ is Lipschitz continuous with constant
  $L_{x_0}^{\balpha}$ (using that $\mathcal{X}$ is compact). Hence,
  \begin{equation*}
    \big|(x-\optx(z))^{\balpha} - (y-\optx(z))^{\balpha}\big|\leq
    L_{\optx(z)}^{\balpha}\|x - y\|_2\leq \bar{L}^{\balpha}\|x - y\|_2,
  \end{equation*}
  where $\bar{L}^{\balpha}\coloneqq \sup_{x\in\mathcal{X}}L_{x}^{\balpha}<\infty$ since
  $\mathcal{X}$ is compact. Together with \eqref{eq:Bx_bound1} and
  since all partial derivatives of $f$ up to order $q$ are bounded by
  a constant $K>0$ this yields
  \begin{align*}
    \big|B_z(x)-B_z(y)\big|
    \leq
    \underbrace{\left(\sum_{|\balpha|<q}\tfrac{1}{\balpha!}K\cdot
      \bar{L}^{\balpha} +
    \sum_{|\balpha|=q}\tfrac{1}{\balpha!}K\cdot
    \bar{L}^{\balpha}\right)}_{\eqqcolon C<\infty} \|x-y\|_2.
  \end{align*}
  As $x,y,z\in\mathcal{X}$ were arbitrary and $C$ does not depend on
  them, this implies for all $z\in\mathcal{X}$ and all $\varepsilon>0$
  that
  \begin{equation}
    \label{eq:Bx_bound2}
    \sup_{\substack{x,y\in\mathcal{X}:\\ \|x-y\|_2<\varepsilon}}\big|B_z(x)-B_z(y)\big|<C\,\varepsilon.
  \end{equation}
  This completes the proof of
  Lemma~\ref{thm:properties_first_approximation}.
\end{proof}

\end{document}